\newif\ifarxive
\newtheorem{theorem}{Theorem}
\newtheorem{lemma}[theorem]{Lemma}
\newtheorem{corollary}[theorem]{Corollary}
\Crefname{observation}{Observation}{Observations}
\Crefname{algorithm}{Algorithm}{Algorithms}
\Crefname{section}{Sect.}{Sects.}
\Crefname{observation}{Observation}{Observations}
\Crefname{lemma}{Lemma}{Lemmas}
\Crefname{figure}{Fig.}{Figs.}
\Crefname{corollary}{Corollary}{Corollaries}
\definecolor{realblue}{rgb}{0,0,1}
\definecolor{darkerblue}{rgb}{0.094,0.455,0.804}
\definecolor{darkblue}{rgb}{0.063,0.306,0.545}
\definecolor{red}{rgb}{0.627,0.117,0.156}
\definecolor{green}{rgb}{0,0.588,0.509}
\definecolor{orange}{rgb}{0.903,0.739,0.382}
\definecolor{realred}{rgb}{1,0,0}
\newcommand{\blue}[1]{{{\textcolor{blue}{#1}\xspace}}}
\renewcommand{\emph}[1]{\blue{\em {#1}}}
\newcommand{\st}{st}
\renewcommand{\subparagraph}[1]{\smallskip\noindent{\bf #1}\xspace}
\title{On Upward-Planar L-Drawings of Graphs\thanks{An extended abstract of this work appeared at MFCS 2022. Cornelsen was supported by the DFG – Project-ID 50974019 – TRR 161 (B06). Da Lozzo was supported in part by MIUR grant 20174LF3T8 {\em ``AHeAD''}.}}
\author[1]{Patrizio~Angelini}
\author[2]{Steven Chaplick}
\author[3]{Sabine Cornelsen}
\author[4]{Giordano {Da Lozzo}}
\affil[1]{John Cabot University, Rome, Italy. \texttt{pangelini@johncabot.edu}}
\affil[2]{Maastricht University, The Netherlands. \texttt{s.chaplick@maastrichtuniversity.nl}}
\affil[3]{University of Konstanz, Germany. \texttt{sabine.cornelsen@uni-konstanz.de}}
\affil[4]{Roma Tre University, Rome, Italy. \texttt{giordano.dalozzo@uniroma3.it}}
\begin{document}

	\maketitle

\begin{abstract}
	In an {\em upward-planar L-drawing} of a directed acyclic graph (DAG) each~edge~$e$ is represented as a polyline composed of a vertical segment with its lowest endpoint at the tail of $e$ and of a horizontal segment ending at the head of $e$. Distinct edges may overlap,
	but not cross. 
	Recently, upward-planar L-drawings have been studied for $st$-graphs, i.e., planar DAGs with a single source $s$ and a single sink $t$ containing an edge directed from $s$ to $t$.
	It is known that a {\em plane \st-graph}, i.e., an embedded  $st$-graph in which the edge $(s,t)$ is incident to the outer face,  admits an upward-planar L-drawing if and only if it admits a bitonic \st-ordering, which can be tested in linear time. 
	
	We study upward-planar L-drawings of DAGs that are not necessarily \st-graphs.
	On the combinatorial side, we
	show that a plane DAG admits an upward-planar L-drawing if and only if it is a subgraph of a plane \st-graph admitting a bitonic \st-ordering. 
	This allows us to show that not every tree with a fixed bimodal embedding admits an upward-planar L-drawing.  Moreover, we prove that any acyclic cactus with a single source (or a single sink) admits an upward-planar L-drawing, which respects a given outerplanar~embedding if there are no transitive edges.
	On the algorithmic side,  we consider DAGs with a single source (or a single sink).
	We give linear-time testing algorithms for these DAGs in two cases: (i) when the drawing must respect a prescribed embedding and (ii) when no restriction is given on the embedding, but it is biconnected and series-parallel.
\end{abstract}

\section{Introduction}

In order to visualize hierarchies, directed acyclic graphs (DAGs) are often drawn in such such a way that the geometric representation of the edges reflects their direction. To this aim \emph{upward drawings} have been introduced, i.e., drawings in which edges are monotonically increasing curves in the $y$-direction. Sugiyama et al. \cite{sugiyama_framework} provided a general framework for drawing DAGs upward. To support readability, it is desirable to avoid edge crossings~\cite{DBLP:journals/vlc/Purchase02,DBLP:journals/ivs/WarePCM02}.  However, not every planar DAG admits an \emph{upward-planar drawing}, i.e., an upward drawing in which no two edges intersect except in common endpoints. A necessary condition is that the corresponding embedding is \emph{bimodal}, i.e., all incoming edges are consecutive in the cyclic sequence of edges around any vertex.
Di Battista and Tamassia~\cite{DT-aprad-88} showed that a DAG is upward-planar if and only if it is a subgraph of a \emph{planar \st-graph}, i.e., a planar DAG with a single source and a single sink that are connected by an edge.  Based on this characterization, it can be decided in near-linear 
time whether a DAG admits an upward-planar drawing respecting a given planar embedding~\cite{bertolazzi_flow,borradaile_planarFlow}. However, it is \NP-complete to decide whether a DAG admits an upward-planar drawing when no fixed embedding is given~\cite{GargTamassia01}. For special cases, upward-planarity testing in the variable embedding setting can be performed in polynomial time: e.g. if the DAG has only one source \cite{bertolazzi_singleSource,up-trees,hutton/lubiw:96},  or if the underlying undirected graph is series-parallel~\cite{Didimo_SPupward}. Furthermore, parameterized algorithms for upward-planarity testing exist with respect to the number of sources or the treewidth of the input DAG~\cite{DBLP:conf/compgeom/ChaplickGFGRS22}.

\begin{figure}[t]
	\centering
	\subcaptionbox{DAG $G$\label{SUBFIG:singleSourceExample}}{
		\begin{tikzpicture}[shorten <= -1pt, shorten >= -1pt,node distance=0.9cm]
			\node  (S) {$S$};
			\node (br) [above right of=S] {$v$}
			edge [<-] (S);
			\node (bl) [above left of=S] {$x$}
			edge [<-] (S);
			\node (N) [above left of=br] {$N$}
			edge [<-] (S)
			edge [<-] (br)
			edge [<-] (bl);
			\node (ur) [above right of=N] {$w$}
			edge [<-] (N)
			edge [<-] (br);
			\node (ul) [above left of=N] {$y$}
			edge [<-] (N)
			edge [<-] (bl);
			\node (t) [above of=ur] {$t$}
			edge [<-] (N)
			edge [<-,out=-20,in=-10] (S);
	\end{tikzpicture}}
	\hfil
	\subcaptionbox{Planar L-drawing\label{SUBFIG:LDrawing}}{
		\begin{tikzpicture}[scale=0.5]		
			\node (S) at (3,1) {$S$};
			\node (x) at (1,2) {$x$};
			\node (y) at (2,4) {$y$};
			\node (v) at (4,-1) {$v$};
			\node (w) at (5,0) {$w$};
			\node (N) at (6,3) {$N$};
			\node (t) at (7,-2) {$t$};
			\node (left) at (0,1) {};
			\node (right) at (8,7) {};
			\draw[rounded corners] (S) |- (x);
			\draw[rounded corners] (S) |- (N);
			\draw[rounded corners] (S) |- (v);
			\draw[rounded corners] (S) |- (x);
			\draw[rounded corners] (N) |- (y);
			\draw[rounded corners] (N) |- (t);
			\draw[rounded corners] (N) |- (w);
			\draw[rounded corners] (x) |- (y);
			\draw[rounded corners] (v) |- (w);
			\draw[rounded corners] (v) |- (N);
			\draw[rounded corners] (x) |- (N);
			\draw[rounded corners] (S) |- (t);
	\end{tikzpicture}}
	\hfil
	\subcaptionbox{Upward-planar L-drawing of \centerline{$G-\{S,t\}$}\label{SUBFIG:ULDrawing}}{
		\begin{tikzpicture}[scale=0.5]
			\node (S) at (3,1) {$S$};
			\node (x) at (1,2) {$x$};
			\node (y) at (2,5) {$y$};
			\node (v) at (6,3) {$v$};
			\node (w) at (5,6) {$w$};
			\node (N) at (4,4) {$N$};
			\node (t) at (7,7) {$t$};
			\node (left) at (0,1) {};
			\node (right) at (8,7) {};
			\draw[rounded corners] (S) |- (x);
			\draw[rounded corners] (S) |- (N);
			\draw[rounded corners] (S) |- (v);
			\draw[rounded corners] (S) |- (x);
			\draw[rounded corners] (N) |- (y);
			\draw[rounded corners] (N) |- (t);
			\draw[rounded corners] (N) |- (w);
			\draw[rounded corners] (x) |- (y);
			\draw[rounded corners] (v) |- (w);
			\draw[rounded corners] (v) |- (N);
			\draw[rounded corners] (x) |- (N);
	\end{tikzpicture}}
	\caption{(a) A single-source series-parallel DAG $G$. (b) A planar L-drawing of $G$. (c) An upward-planar L-drawing of $G$ without the edge $\{S,t\}$.}
\end{figure}
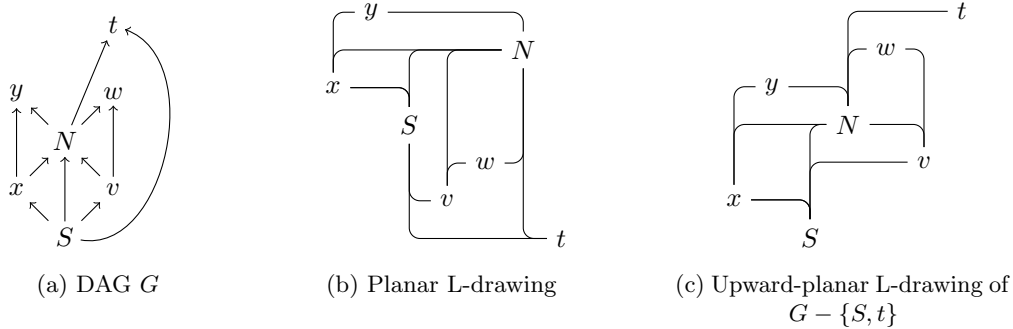

Every upward-planar DAG admits a straight-line upward-planar drawing
~\cite{DT-aprad-88}, however such a drawing may require exponential area~\cite{DBLP:journals/dcg/BattistaTT92}. Gronemann introduced bitonic \st-orderings for DAGs~\cite{gronemann:gd16}.
A plane \st-graph that admits a bitonic \st-ordering also admits an upward-planar drawing in quadratic area. It can be tested in linear time whether a plane \st-graph admits a bitonic \st-ordering~\cite{gronemann:gd16}, and whether a planar \st-graph admits a planar embedding that allows for a bitonic \st-ordering~\cite{angelini_etal:wg20,chaplick_etal:gd17}. By subdividing some transitive edges once, every plane \st-graph can be extended such that it admits a bitonic \st-ordering. 
Moreover, the minimum number of edges that have to be subdivided can be determined in linear time, both, 

In an \emph{L-drawing} of a directed graph~\cite{angeliniLdrawings} each~edge~$e$ is represented as a polyline composed of a vertical segment incident to the tail of $e$ and a horizontal segment incident to the head of $e$. In a \emph{planar L-drawing}, distinct edges may overlap, but not cross. 
See \cref{SUBFIG:LDrawing} for an example. 
The problem of testing for the existence of a planar L-drawing of a directed graph is \NP-complete~\cite{chaplick_etal:gd17}.
On the other hand, every upward-planar DAG admits a planar L-drawing \cite{angelini_etal:gd20}. A planar L-drawing is \emph{upward} if the lowest end vertex of the vertical segment of an edge $e$ is  the tail of $e$. See \cref{SUBFIG:ULDrawing} for an example. A planar \st-graph admits an upward-planar L-drawing if and only it admits a bitonic \st-ordering~\cite{chaplick_etal:gd17}.

\subparagraph{Our Contribution.} 
%
We characterize in \cref{THEO:augmentation} the plane DAGs admitting an upward-planar L-drawing as the subgraphs of plane \st-graphs admitting a bitonic \st-ordering. We first apply this characterization to prove that there are  trees with a fixed bimodal embedding that do not admit an upward-planar L-drawing (\cref{THEO:trees}). Moreover, the characterization allows to test in linear time whether any DAG with a single source or a single sink admits an upward-planar L-drawing preserving a given embedding (\cref{THEO:source}). 

We further show that every single-source acyclic cactus admits an upward-planar L-drawing by directly computing the x- and y-coordinates as post- and pre-order numbers, respectively, in a DFS-traversal (\cref{THEO:cacti}).  The respective result holds for single-sink acyclic cacti. Finally,
we use a dynamic-programming approach combined with a matching algorithm for regular expressions to decide in linear time whether a biconnected series-parallel DAG with a single source or a single sink has an embedding admitting an upward-planar L-drawing (\cref{THEO:variable}). Observe that a plane \st-graph does not necessarily admit an upward-planar L-drawing if the respective graph with reversed edges does. This justifies studying single-source and -sink graphs independently. 
Full details for proofs of statements marked with ($\star$) can be found in the Appendix. 

\section{Preliminaries}\label{SEC:preliminaries}

For standard graph theoretic notations and definitions we refer the reader to \cite{handbookGD}.

\subparagraph{Digraphs.} A \emph{directed graph (digraph)} $G=(V,E)$ is a pair consisting of a finite set $V$ of \emph{vertices} and a set $E$ of edges containing ordered pairs of distinct vertices. 
A vertex of a digraph is a \emph{source} if it is only incident to outgoing edges and a \emph{sink} if it is only incident to incoming edges. 
A \emph{walk} is a sequence of vertices  such that any two consecutive vertices in the sequence are adjacent.  A \emph{path} is a walk with distinct vertices. In this work we assume that all graphs are \emph{connected}, i.e., that there is always a path between any two vertices. A \emph{cycle} is a walk with distinct vertices  except for the first and the last vertex which must be equal. A \emph{directed path} (\emph{directed cycle}) is a path (cycle) where for any vertex $v$ and its successor $u$ in the path (cycle) there is an edge directed from $u$ to $v$. In the following, we only consider \emph{acyclic digraphs (DAGs)}, i.e., digraphs that do not contain directed cycles. 
A DAG is a \emph{tree} if it is connected and contains no cycles. It is a \emph{cactus} if it is connected and each edge is contained in at most one cycle. 

\subparagraph{Drawings.} In a \emph{drawing (node-link diagram)} of a digraph vertices are drawn as points in the plane and edges are drawn as simple curves between their end vertices. A drawing of a DAG is \emph{planar} if no two edges intersect except in common endpoints. A planar drawing splits the plane into connected regions~--~called \emph{faces}. A \emph{planar embedding} of a DAG is the counter-clockwise cyclic order of the edges around each vertex according to a planar drawing. A \emph{plane} DAG is a DAG with a fixed planar embedding and a fixed unbounded face. 

%

The \emph{rotation} of an orthogonal polygonal chain, possibly with overlapping edges, is defined as follows: We start with rotation zero. If the curve bends to the left, i.e., if there is a convex angle to the left of the curve, then we add $\pi/2$ to the rotation. If the curve bends to the right, i.e., if there is a concave angle to the left of the curve, then we subtract $\pi/2$ from the rotation.
Moreover, if the curve has a $2 \pi$ angle to the left, we handle this as two concave angles and if there is a 0 angle to the left, we handle this as two convex angles. 
The rotation of a simple polygon~--~with possible overlaps of consecutive edges~--~traversed in counterclockwise direction is $2 \pi$.

\subparagraph{Single-source series-parallel DAGs.}
\begin{figure}[t!]
	\centering
	\subcaptionbox{\label{SUBFIG:sssp-not-up}DAG not upward-planar}{
		\begin{tikzpicture}[shorten <= -2pt, shorten >= -2pt,node distance=0.8cm]
			\node (s) {$s$};
			\node (S) [above of=s] {$S$}
			edge [<-] (s);
			\node (u1)  [above left of=S] {$u_1$}
			edge [<-] (S);		
			\node (t1) [above of=u1] {$t_1$}
			edge [<-] (u1);
			\node (w1) [left of=t1] {$w_1$}
			edge [<-] (u1);
			\node (u2)  [above right of=S] {$u_2$}
			edge [<-] (S);		
			\node (t2) [above of=u2] {$t_2$}
			edge [<-] (u2);
			\node (w2) [right of=t2] {$w_2$}
			edge [<-] (u2);
			\node (N) [above right of=t1] {$N$}
			edge [<-] (w2)
			edge [->,shorten >=-4] (t2)
			edge [<-] (w1)
			edge [->,shorten >=-4] (t1);
			\node (dummy) [left of=w1] {};
			\node (v) [above of=N] {$v$}
			edge [->] (N);
			\draw [->]
			(s) .. controls (dummy) .. (v);
	\end{tikzpicture}}
	\hfil
	\subcaptionbox{\label{SUBFIG:decompositionTree}Decomposition-tree}{
		\begin{tikzpicture}[
			level 1/.style = {sibling distance = 2cm},
			level 3/.style = {sibling distance = 4cm},
			level 4/.style = {sibling distance = 2cm},
			level 6/.style = {sibling distance = 1cm},
			shorten <= -2pt, shorten >= -2pt,scale=0.4]
			\node {P}
			child {node {$Q$}} 
		child {node {S}
			child {node {$Q$}} 
		child {node {P}	
			child {node{S}
				child {node {$Q$}} 
			child {node {P}
				child {node {S}
					child {node {$Q$}} 
				child {node {$Q$}} 
		}
		child {node {S}
			child {node {$Q$}} 
		child {node {$Q$}} 
}
}
}
child {node{S}
child {node {$Q$}} 
child {node {P}
child {node {S}
child {node {$Q$}} 
child {node {$Q$}} 
}
child {node {S}
child {node {$Q$}} 
child {node {$Q$}} 
}
}
}
}
child {node {$Q$}} 
} ;
\end{tikzpicture}}
\hfil
\subcaptionbox{\label{SUBFIG:expanded-not-up}Embedding not upward-planar}{
\begin{tikzpicture}[shorten <= -2pt, shorten >= -2pt,node distance=0.8cm]
\node (s) {$s$};
\node (S) [above of=s] {$S$}
edge [<-] (s);
\node (t)  [above of=S] {$t$}
edge [<-] (S);		
\node (u) [left of=t] {$u$}
edge [<-] (S);
\node (w)  [right of=t] {$w$}
edge [<-] (S);
\node (N) [above of=t] {$N$}
edge [<-] (u)
edge [<-] (w)
edge [->] (t);
\node (v) [above of=N] {$v$}
edge [->] (N)
edge [<-, bend right=90] (s);
\end{tikzpicture}}
\caption{\label{FIG:sssp-not-up} a) A bimodal single-source series-parallel DAG that is not upward-planar b) with its decomposition-tree. c)  A single-source series-parallel DAG with an embedding that is not upward-planar. However, the DAG with a different embedding is upward-planar. }
\end{figure}
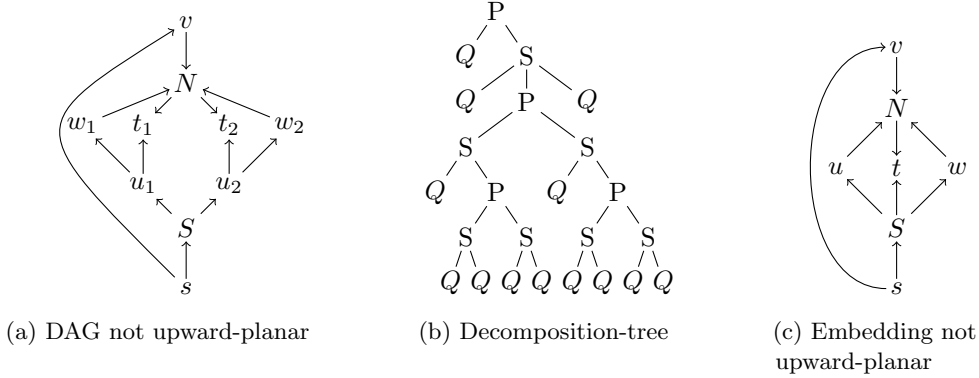
Series-parallel digraphs are digraphs with two distinguished vertices, called \emph{poles}, and can be defined recursively as follows: A single edge is a series-parallel digraph. Given $k$ series-parallel digraphs $G_1,\dots,G_k$ (\emph{components}), with poles $v_i,u_i$, $i=1,\dots,k$, a series-parallel digraph~$G$ with poles $v$ and $u$ can be obtained in two ways: by merging $v_1,\dots,v_k$ and $u_1,\dots,u_k$, respectively, into the new poles $v$ and $u$ (\emph{parallel composition}), or by merging the vertices $u_i$ and $v_{i+1}$, $i=1,\dots,k-1$, and setting $u=u_1$ and $v=v_{k}$ (\emph{series composition}).  The recursive construction of a series-parallel digraph is represented in a decomposition-tree $T$. We refer to the vertices of $T$ as \emph{nodes}. The leaves (vertices of degree one) of the decomposition-tree are labeled Q and represent the edges. The other nodes (\emph{inner nodes}) are labeled $P$ for parallel composition or $S$ for series composition. No two adjacent nodes of $T$ have the same label. \cref{SUBFIG:decompositionTree} shows the decomposition-tree of the graph in \cref{SUBFIG:sssp-not-up}. Let $\mu$ be a node of $T$. We denote by $T(\mu)$ the subtree rooted at $\mu$ and by $G(\mu)$ the subgraph of $G$ corresponding to $T(\mu)$, i.e., the subgraph of $G$ formed by the edges corresponding to the leaves of $T(\mu)$. The vertices of $G(\mu)$ that are different from its poles are called \emph{internal}. Given an arbitrary biconnected digraph $G$, it can be determined in linear time whether it is series-parallel, and a decomposition-tree of $G$ can be computed also in linear time \cite{gutwenger/mutzel:gd00}. 
Moreover, rooting a decomposition-tree of a biconnected series-parallel digraph $G$ at an arbitrary inner node yields again a decomposition-tree of $G$. 

In the following, we assume that $G$ is a series-parallel DAG with a single source (sink) $s$. If $G$ has more than one edge, we root the decomposition-tree $T$ at the inner node incident to the Q-node corresponding to an edge incident to $s$. This implies that for any node $\mu$ of $T$ no internal vertex of $G(\mu)$ can be a source (sink) of $G(\mu)$ and at least one of the poles of $G(\mu)$ is a source (sink) of $G(\mu)$. 

It follows from \cite{bertolazzi_singleSource} that every single-source series-parallel DAG is upward-planar if each vertex is incident to at most one incoming or at most one outgoing edge.  However, even in that case, not every bimodal embedding is already upward-planar, see \cref{SUBFIG:expanded-not-up}. 
Moreover, not every
single-source series-parallel DAG is upward-planar, even if it admits a bimodal embedding, see \cref{SUBFIG:sssp-not-up}. The reason for that is a P-node $\mu$ with two children $\mu_1$ and $\mu_2$ such that a pole $N$ of $G(\mu)$  is incident to an incoming edge in $G-G(\mu)$, and to both incoming and outgoing edges in both, $G(\mu_1)$ and $G(\mu_2)$. Bimodal single-source series-parallel DAGs without this property are always upward-planar~\cite{DBLP:journals/corr/abs-2205-05627}.

Given an upward-planar drawing of $G$ with distinct y-coordinates for the vertices, we call the pole of $G(\mu)$ with lower y-coordinate the \emph{South pole} of $G(\mu)$ and the other pole the \emph{North pole} of $G(\mu)$. Observe that the South (North) pole of $G$ is the unique source (sink)~$s$.  If $\mu$ is a $P$-node with children  $\mu_1,\dots,\mu_\ell$, then the South pole of $G(\mu_i)$, $i=1,\dots,\ell$ is the South pole of $G(\mu)$. Finally, if $\mu$ is an $S$-node with children  $\mu_1,\dots,\mu_\ell$, then observe that at most one among the components $G(\mu_i)$, $i=1,\dots,\ell$ can have more than one source (sink)~--~otherwise~$G$ would have more than one source (sink).  The South (North) pole of all other components is their unique source (sink). 

\subparagraph{Bitonic \st-ordering.}
A \emph{planar \st-graph} is a planar DAG with a single
source $s$, a single sink $t$, and an edge $(s,t)$. An \emph{\st-ordering} of a planar \st-graph is an enumeration~$\pi$
of the vertices with distinct integers, such that $\pi(u) < \pi(v)$
for every edge $(u,v)$. A \emph{plane \st-graph} is a planar \st-graph with a planar embedding in which the edge $(s,t)$ is incident to the outer face. Every plane \st-graph admits an upward-planar drawing~\cite{DT-aprad-88}. 

For each vertex $v$ of a plane \st-graph, we consider the ordered list $S(v)=\left< v_1,v_2,\dots,v_k \right>$ of
the successors of $v$ as they appear from left to right in an upward-planar drawing. 
An \st-ordering of a plane \st-graph is \emph{bitonic},  if there is a vertex $v_h$ in $S(v)=\left< v_1,v_2,\dots,v_k \right>$ such that
$\pi(v_i)<\pi(v_{i+1})$, $i=1,\dots,h-1$, and $\pi(v_i)>\pi(v_{i+1})$,
$i=h,\dots,k-1$. 
%
We say that the successor list $S(v)=\left< v_1,v_2,\dots,v_k \right>$ of a vertex $v$ contains a \emph{valley} if there are $1 < i \leq j < k$ such that there are both, a directed $v_i$-$v_{i-1}$-path and a directed $v_j$-$v_{j+1}$-path in $G$.
See \cref{FIG:char}. Gronemann~\cite{gronemann:gd16} characterized the plane \st-graphs that admit a bitonic \st-ordering as follows.

\begin{theorem}[\cite{gronemann:gd16}]\label{THEO:char}
A plane \st-graph admits a bitonic \st-ordering if and only if
the successor list of no vertex contains a valley. 
\end{theorem}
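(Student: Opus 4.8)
The statement is an equivalence, so the plan splits into a short necessity direction (a valley rules out any bitonic \st-ordering) and a constructive sufficiency direction (no valley lets us build one).

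\textbf{Necessity.} Suppose some vertex $v$ has a valley, witnessed by indices $1 < i \le j < k$ with a directed $v_i$-$v_{i-1}$-path and a directed $v_j$-$v_{j+1}$-path in $G$. For any \st-ordering $\pi$, a directed path from $a$ to $b$ forces $\pi(a) < \pi(b)$, so $\pi(v_{i-1}) > \pi(v_i)$ and $\pi(v_{j+1}) > \pi(v_j)$. In the sequence $\pi(v_1),\dots,\pi(v_k)$ the first inequality is a strict descent at the transition $(i-1,i)$, which in a bitonic sequence forces the peak index $h$ to satisfy $h \le i-1$; the second is a strict ascent at $(j,j+1)$, forcing $h \ge j+1$. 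Together these give $j+1 \le i-1$, i.e. $i \ge j+2$, contradicting $i \le j$. Hence no \st-ordering of the (fixed) plane \st-graph can be bitonic, so the valley-free condition is necessary.

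\textbf{Sufficiency.} Assume no vertex contains a valley. I would construct a bitonic \st-ordering by a numbering sweep: maintain a prefix $A \subseteq V$ that is closed under predecessors, start from $A = \{s\}$ with $\pi(s)=1$, and repeatedly assign the next integer to a vertex outside $A$ all of whose predecessors lie in $A$. Any such sweep produces a valid \st-ordering; the whole difficulty is to choose the order of selections so that every successor list becomes a peak simultaneously. The role of the hypothesis is first local: for a single vertex $v$, the directed paths among the members of $S(v)=\langle v_1,\dots,v_k\rangle$ impose forced precedences $\pi(v_a)<\pi(v_b)$, and the absence of a valley is exactly the condition that these forced precedences are compatible with some increasing-then-decreasing arrangement, namely one may place the peak between the last position carrying a forced left-descent and the first position carrying a forced right-ascent, and then number the still-unconstrained successors monotonically along each side. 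I would make this local claim precise first, and then lift it to a global schedule, using that the planar embedding fixes the left-to-right order of $S(v)$ so that the peaks chosen at different vertices can be read off consistently.

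\textbf{Main obstacle.} The hard part is not the local claim but its global simultaneous realization: I must show that the peak choices forced (or left free) at every vertex can all be met inside one linear extension of the DAG, i.e. that local valley-freeness everywhere is enough for a single \st-ordering to be bitonic at every vertex at once. The plan is to prove this inductively, peeling $s$ and recursing on the single-sink plane DAG $G-s$, whose sources are among the successors $v_1,\dots,v_k$ of $s$; here I expect two points to need care, namely that the no-valley property is inherited by the relevant subgraphs (deleting a vertex and the paths through it cannot create a new valley in a surviving successor list) and that the subproblems can be set up as smaller plane \st-graphs so the induction strictly decreases in size. Granting the induction, every vertex other than $s$ has its entire successor list inside one recursively numbered piece and is therefore already bitonic, and the final and decisive step is the bitonic merge at $s$: the numbers received by $v_1,\dots,v_k$ must increase and then decrease left to right, which is precisely what the valley-free condition at $s$ guarantees by supplying an admissible peak position.
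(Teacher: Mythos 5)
Your necessity argument is correct and complete: a directed $v_i$-$v_{i-1}$ path forces $\pi(v_{i-1})>\pi(v_i)$, hence a descent at the transition $(i-1,i)$ and so $h\le i-1$; the $v_j$-$v_{j+1}$ path forces an ascent at $(j,j+1)$ and so $h\ge j+1$; together $i\ge j+2$ contradicts $i\le j$. (Note that the paper does not prove this theorem at all --- it is quoted from Gronemann --- so the comparison here is against the known proof of that result.)

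The sufficiency direction, however, is a genuine gap: what you give is a plan whose decisive step fails as stated. First, a technical point: after peeling $s$, the graph $G-s$ is in general not an st-graph (it may have many sources), so the induction hypothesis does not apply to it, and you do not say how to repair this. Second, and more seriously, the ``bitonic merge at $s$'' is circular. The valley-free condition at $s$ only guarantees that the precedences among $v_1,\dots,v_k$ forced by directed paths are \emph{compatible with some} increasing-then-decreasing arrangement; but the recursion hands you one \emph{specific} numbering of $G-s$, chosen to satisfy all the other vertices, and nothing in your argument ensures that this particular numbering restricts to a bitonic sequence on $S(s)$ in the left-to-right order. Making all vertices' peak choices consistent inside a single linear extension is exactly the ``global simultaneous realization'' you flag as the main obstacle, and asserting that valley-freeness at $s$ ``supplies an admissible peak position'' does not discharge it. The standard proof (Gronemann) resolves this globally rather than by peeling a vertex: for every vertex $v$ choose a peak index compatible with the forced paths, add constraint edges between consecutive successors ($v_i\to v_{i+1}$ left of the peak, $v_{i+1}\to v_i$ right of it), route them inside the faces incident to $v$ so planarity is preserved, and prove the key lemma that the augmented digraph is still acyclic --- this is where valley-freeness at \emph{all} vertices is used at once; any st-ordering of the augmented plane st-graph is then bitonic for $G$. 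Without this lemma, or an equivalent global argument, your sufficiency proof does not go through.
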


\begin{figure}
\centering
\subcaptionbox{Valley\label{SUBFIG:valley}}[0.45\textwidth]{
\begin{tikzpicture}
\node  (1) {$v_{i-1}$};
\node (a) [below right of=1] {};
\node (2) [right of=a] {$v_i$};
\draw [->,decorate,decoration={snake}] (2) -- (1);
\node (dots) [right of=2] {$\dots$};
\node (3) [right of=dots] {$v_j$};
\node (b) [above right of=3] {};
\node (4) [right of=b] {$v_{j+1}$};
\draw [->,decorate,decoration={snake}] (3) -- (4);
\node (x) [below of=dots] {};
\node (u) [below of=x] {$v$}
edge [->] (1)
edge [->] (2)
edge [->] (3)
edge [->] (4);
\end{tikzpicture}}%
\hfil
\subcaptionbox{Fixed\label{SUBFIG:fixed}}{
\begin{tikzpicture}
\node  (1) {$\bullet$};
\node (2) [above right of=1] {};
\node (3) [above left of=1] {};
\node (4) [above left of=2] {$\bullet$}
edge [<-] (1);
\node (5) [above right of=4] {$\circ$}
edge [<-] (4)
edge [<-] (1);
\node (6) [above left of=4] {$\circ$}
edge [<-] (4)
edge [<-] (1);
\end{tikzpicture}}%
\hfil
\subcaptionbox{Variable\label{SUBFIG:variable}}{
\begin{tikzpicture}[shorten <= -3pt, shorten >= -3pt,scale=0.4]
\node  (S) at (0,0) {$\circ$};
\node (br) at (-1,2) {$\circ$}
edge [<-] (S);
\node (bl) at (-2,1) {$\circ$}
edge [<-] (S);
\node (N) at (-3,3) {$\circ$}
edge [<-] (S)
edge [<-] (br)
edge [<-] (bl);
\node (ur) at (-4,5) {$\circ$}
edge [<-] (N)
edge [<-] (br);
\node (ul) at (-5,4) {$\circ$}
edge [<-] (N)
edge [<-] (bl);
\node (t) at (0,6) {$\circ$};
\draw [<-,shorten <= 3 pt, shorten >= 3pt] (0,6) .. controls (-1,6) and (-7,6) .. (-3,3); 
\node (br2) at (1,2) {$\circ$}
edge [<-] (S);
\node (bl2) at (2,1) {$\circ$}
edge [<-] (S);
\node (N2) at (3,3) {$\circ$}
edge [<-] (S)
edge [<-] (br2)
edge [<-] (bl2);
\node (ur2) at (4,5) {$\circ$}
edge [<-] (N2)
edge [<-] (br2);
\node (ul2) at (5,4) {$\circ$}
edge [<-] (N2)
edge [<-] (bl2);
\draw [<-,shorten <= 3pt, shorten >= 3pt] (0,6) .. controls (1,6) and (7,6) .. (3,3); 
\end{tikzpicture}}

\caption{\label{FIG:char}(a) Forbidden configuration for bitonic $st$-orderings. (b+c) Single-source series-parallel plane DAG that does not admit an upward-planar L-drawing since it contains a valley, in  case (c) in any upward-planar embedding. }
\end{figure}
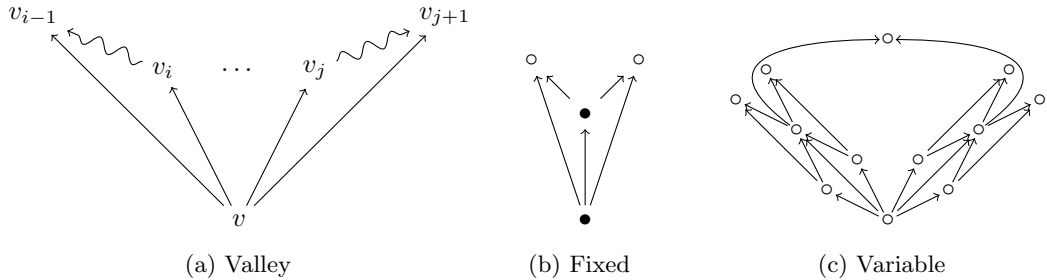

\section{Upward-Planar L-Drawings~--~A Characterization}\label{SEC:L-drawings}

A plane \st-graph admits an upward-planar L-drawing if and only it admits a bitonic
\st-ordering~\cite{chaplick_etal:gd17}. We extend this result to general plane DAGs and discuss some consequences. 

\begin{theorem}\label{THEO:augmentation}
A plane
DAG admits an upward-planar L-drawing if and only if it can be
augmented to a plane 
\st-graph that admits an
upward-planar L-drawing, i.e., a plane \st-graph that admits a bitonic
\st-ordering.
\end{theorem}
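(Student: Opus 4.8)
The plan is to use the equivalence for plane \st-graphs as a black box and reduce the statement to a \emph{drawing-augmentation} claim. The clause after ``i.e.'' is precisely the characterization of \cite{chaplick_etal:gd17}, namely that a plane \st-graph admits an upward-planar L-drawing if and only if it admits a bitonic \st-ordering; hence it suffices to prove that a plane DAG $G$ admits an upward-planar L-drawing if and only if $G$ is, with its induced embedding, a subgraph of a plane \st-graph that admits an upward-planar L-drawing. For the sufficiency direction I would take an upward-planar L-drawing $\Gamma^+$ of the augmenting plane \st-graph $G^+$ (which exists by \cite{chaplick_etal:gd17}) and simply delete the curves of the edges in $G^+\setminus G$. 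Deleting curves creates no crossings and leaves every surviving edge an upward L, so the restriction is an upward-planar L-drawing of $G$; since the augmentation only inserts edges into faces, it realizes the prescribed embedding and outer face of $G$. This direction needs only that restricting an upward-planar L-drawing to a subgraph yields an upward-planar L-drawing.

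The substance is the converse. Starting from an upward-planar L-drawing $\Gamma$ of $G$, I would insert edges into $\Gamma$, keeping it an upward-planar L-drawing at every step, until the underlying digraph $G^+$ is a plane \st-graph; then $G^+$ has an upward-planar L-drawing and therefore a bitonic \st-ordering by \cite{chaplick_etal:gd17}. First I would perturb $\Gamma$ so that all vertices receive pairwise distinct $y$-coordinates. This preserves the combinatorial L-structure and, crucially, turns the $y$-coordinate into a linear extension of the partial order, so that any inserted edge whose tail lies strictly below its head keeps $G^+$ acyclic. Because a plane \st-graph is, by the definition in \cref{SEC:preliminaries}, just a planar DAG with a single source, a single sink, and the edge $(s,t)$ on the outer face, it suffices to reduce the number of sources and of sinks to one and then to add $(s,t)$.

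To remove a source $u$ different from the unique lowest vertex, I would consider the face $F$ of $\Gamma$ immediately below $u$: since all edges leave $u$ upward, $F$ is incident to $u$ and contains points just left and right of $u$ at its height, so a horizontal segment can enter $u$ from inside $F$; moreover $F$ extends below $u$, hence has a vertex $z$ with $y(z)<y(u)$. I would insert the edge $(z,u)$ drawn as an L that rises from $z$ and then turns horizontally into $u$ inside $F$. This gives $u$ an incoming edge while creating no new source or sink, so the number of sources drops by one; excess sinks are handled symmetrically, via the face immediately above the sink and an outgoing L-edge that rises from the sink and then turns horizontally to a vertex above it. Iterating leaves the lowest vertex $s$ as the only source and the highest vertex $t$ as the only sink, both on the outer face, and I would finish by routing $(s,t)$ as an L along the outer boundary, producing the plane \st-graph $G^+$.

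The step I expect to be the main obstacle is proving that each inserted edge can indeed be realized as an axis-parallel L lying inside the chosen face without crossings, so that the invariant ``$\Gamma$ is an upward-planar L-drawing'' is maintained. This is exactly where the special geometry of L-drawings enters: outgoing edges emanate vertically and incoming edges arrive horizontally, so a face opens \emph{above} each of its source-switches and \emph{below} each of its sink-switches, and I must argue that these openings furnish a monotone vertical-then-horizontal channel inside $F$ joining the two chosen endpoints -- which in turn guides the choice of the endpoint $z$ (for instance an extreme vertex of $F$) so that the channel does not leave $F$. One must also check that inserting one edge does not spoil the L-shape of the edges already incident to its endpoints, that multi-edges are avoided, and that the process terminates; acyclicity, by contrast, is free from the $y$-monotonicity of all edges. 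Once $G^+$ is a plane \st-graph carrying an upward-planar L-drawing, \cite{chaplick_etal:gd17} (equivalently, the valley-freeness of \cref{THEO:char}) yields the bitonic \st-ordering and closes the equivalence.
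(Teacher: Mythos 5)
Your overall strategy (augment the given L-drawing itself, then invoke \cite{chaplick_etal:gd17}) and your easy direction match the paper, but the step you flag as ``the main obstacle'' is a genuine gap, and in the form you propose it cannot be closed. In an L-drawing the curve of an edge is completely determined by the coordinates of its endpoints: the edge $(z,u)$ must be the vertical segment at abscissa $x(z)$ from $y(z)$ up to $y(u)$, followed by the horizontal segment at height $y(u)$ into $u$, with its bend forced at $(x(z),y(u))$. So there is no ``monotone channel'' to route and nothing to guide: once $z$ is chosen, the drawing of $(z,u)$ is fixed, and you must exhibit a $z$ whose forced L is unobstructed. Such a $z$ need not exist strictly inside the face $F$ below a source $u$: let $F$ contain ``fingers'', i.e.\ vertical segments of edges whose tails are themselves sources lying in $F$ below $u$. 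Any candidate $z$ lying to the left of a finger has its forced horizontal segment at height $y(u)$ crossed by that finger, while a finger tail itself has its forced vertical segment running on top of its own finger; fixing the finger sources first does not remove the blocking segments, so the deadlock survives your iteration. The idea you are missing is exactly the one the paper exploits: in a planar L-drawing distinct edges may \emph{overlap}, so the new edge does not have to lie in the interior of a face. The paper shoots an axis-parallel ray from the deficient vertex (horizontal to gain an incoming edge, vertical upward to gain an outgoing one) until it first hits a segment of an existing edge, follows that segment back to its incident vertex (one end of every segment is a vertex, the other a bend), and declares that vertex the other endpoint: the new edge overlaps the hit edge up to the hit point and then runs along the ray. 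Crossing-freeness is automatic (first hit), the endpoint choice is canonical, and upwardness and acyclicity follow as in your argument.

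A second flaw is your last step: the edge $(s,t)$ between the existing lowest and highest vertices is also geometrically forced, with bend at $(x(s),y(t))$, and this L can perfectly well cut through the drawing, so it cannot simply be ``routed along the outer boundary''. The paper avoids both problems at once by first adding a directed triangle on three \emph{new} vertices $s$, $x$, $t$ enclosing the drawing: this supplies the edge $(s,t)$ on the outer face of the augmented graph and simultaneously guarantees that every ray shot in the augmentation step hits some segment. Your remaining points (restriction to a subgraph, acyclicity from $y$-monotonicity, termination since each insertion removes a deficient vertex without creating new ones) are correct and agree with the paper.
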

\begin{proof}
Let $G$ be a plane DAG. Clearly, if an augmentation of $G$ admits an upward-planar L-drawing, then so does $G$.
Let now an upward-planar drawing of $G$ be given. Add a
directed triangle with a new source $s$, a new sink $t$, and a new vertex $x$ enclosing 
the drawing of $G$. 
As long as there is a vertex $v$ of $G$ that is not incident to an incoming or outgoing edge,
shoot a ray from 
$v$ to the top or the right, respectively,
until it hits another edge and follow the segment to the incident
vertex~--~recall that one end of any segment is a vertex and one end is a
bend. The orientation of the added edge is implied by the L-drawing.  
The result is an upward-planar L-drawing of a digraph with the single source $s$ and the single sink~$t$.
\end{proof}

Observe that every series-parallel \st-graph admits a bitonic \st-ordering \cite{angelini_etal:wg20,chaplick_etal:gd17} and, thus, an upward-planar L-drawing. This is no longer true for upward-planar series-parallel DAGs with several sources or several sinks. \cref{SUBFIG:fixed,SUBFIG:variable} show examples of two single-source upward-planar series-parallel DAGs that contain a valley. 
There are even upward-planar series-parallel DAGs with a single source or a single sink that do not admit an upward-planar L-drawing, even though the successor list of no vertex contains a valley.

\begin{figure}[t!]
\centering
\subcaptionbox{\label{SUBFIG:singleSink}Single sink}{
\begin{tikzpicture}
\node  (S) {$\circ$};
\node (w) [below right of=S] {$w$}
edge [->] (S);
\node (N) [above right of=w] {$\circ$}
edge [<-] (w);
\node  (v) [below of=w] {$v$}
edge [->,dashed] (w)
edge [->] (S)
edge [->] (N);
\node (t) [above right of=S] {$t$}
edge [<-] (S)
edge [<-] (N);
\node (left) [left of=S] {};
\node (right) [right of=N] {};
\end{tikzpicture}}
\hfil
\subcaptionbox{\label{SUBFIG:singleSource}Single source}{
\begin{tikzpicture}
\node  (S) {$S$};
\node (br) [above right of=S] {$v$}
edge [<-] (S);
\node (bl) [above left of=S] {$\circ$}
edge [<-] (S);
\node (N) [above left of=br] {$N$}
edge [<-] (S)
edge [<-] (br)
edge [<-] (bl);
\node (ur) [above right of=N] {$w$}
edge [<-] (N)
edge [<-] (br);
\node (ul) [above left of=N] {$\circ$}
edge [<-] (N)
edge [<-] (bl);
\node (t) [above of=ur] {$t$}
edge [<-, dashed] (ur)
edge [<-] (N)
edge [<-,out=-20,in=-10] (S);
\end{tikzpicture}} \hfil
\subcaptionbox{\label{SUBFIG:tree}Family of trees}{
\begin{tikzpicture}[shorten <= -1.6pt, shorten >= -2pt, node distance=0.4cm]
\node (s7) {$\circ$};
\node (7) [above right of = s7] {$\circ$}
edge [->] (s7);
\node (s5) [above right of = 7] {$\circ$};
\node (5) [above right of = s5] {$\circ$}
edge [->] (s5);
\node (s3) [above right of = 5] {$\circ$};
\node (3) [above right of = s3] {$\circ$}
edge [->] (s3);
\node (s1) [above right of = 3] {$\circ$};
\node (1) [above right of = s1] {$\circ$}
edge [->] (s1);
\node (2) [above right of = 1] {$\circ$}
edge [<-, bend right] (3)
edge [->] (1);
\node (t2) [above right of = 2] {$\circ$}
edge [->] (2);
\node (4) [above right of = t2] {$\circ$}
edge [<-, bend right] (5)
edge [<-, bend left] (3);
\node (t4) [above right of = 4] {$\circ$}
edge [->] (4);
\node (6) [above right of = t4] {$\circ$}
edge [<-, bend right] (7)
edge [<-, bend left] (5);
\node (t6) [above right of = 6] {$\circ$}
edge [->] (6);
\node (dots) [right of = 7] {$\dots$};
\end{tikzpicture}}
\caption{DAGs that do not admit an upward-planar L-drawing even though they do not contain a valley. Dashed edges indicate augmentations and are not part of the DAG.}
\end{figure}
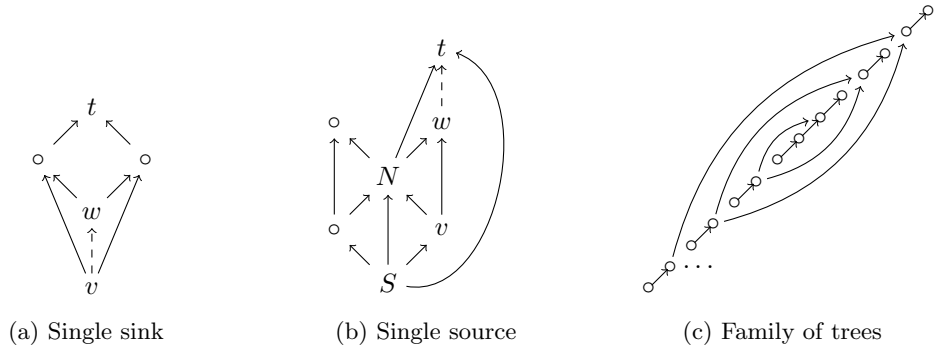

Consider the DAG $G$ in \cref{SUBFIG:singleSink} (without the dashed edge). $G$ has a unique upward-planar embedding. Since no vertex has more than two successors there cannot be a valley. Assume $G$ admits a planar L-drawing. By \cref{THEO:augmentation} there should be an extension of $G$ to a plane \st-graph $G'$ that admits a bitonic \st-ordering. But the internal source $w$ can only be eliminated by adding the edge $(v,w)$. Thus $w$ is a successor of $v$ in $G'$. Hence, the successor list of $v$ in $G'$ contains a valley. By \cref{THEO:char}, $G'$ is not bitonic, a contradiction.

Now consider the DAG $G$ in \cref{SUBFIG:singleSource} (without the dashed edge). $G$ has two symmetric upward-planar embeddings: with the curved edge to the right or the left of the remainder of the DAG. We may assume that the curved edge is to the right. But then an augmentation to a plane \st-graph $G'$ must contain the dashed edge, which completes a valley at the single source and its three rightmost outgoing edges. By \cref{THEO:char}, $G'$ is not bitonic, a contradiction.

%

A planar L-drawing is \emph{upward-leftward} \cite{chaplick_etal:gd17} if all edges are upward and point to the left. 

\begin{theorem}[Trees]\label{THEO:trees} Every directed tree admits an upward-leftward planar
L-drawing, but not every tree with a fixed bimodal embedding admits an upward-planar L-drawing.
\end{theorem}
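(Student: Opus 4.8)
The plan is to treat the two assertions separately. For the positive part—every directed tree admits an upward-leftward planar L-drawing—I will first reduce the geometric problem to a combinatorial one by placing every vertex on the anti-diagonal. If $\pi$ is a linear order of $V$ and $\pi(v)$ denotes the rank of $v$, I put $v$ at the point $(-\pi(v),\pi(v))$. For an edge $(u,v)$ with $\pi(u)<\pi(v)$ the L then consists of a vertical segment at $x=-\pi(u)$ running from height $\pi(u)$ up to $\pi(v)$, followed by a horizontal segment at $y=\pi(v)$ running left to $x=-\pi(v)$; this is automatically upward (the tail is the lower end) and leftward (the head has the smaller $x$-coordinate) as soon as $\pi$ is a topological order. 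A short case analysis shows that two such L's cross if and only if their rank-intervals $[\pi(u_1),\pi(v_1)]$ and $[\pi(u_2),\pi(v_2)]$ interleave, i.e. strictly overlap without one containing the other; edges sharing a tail or a head merely overlap along a common vertical or horizontal segment, which is permitted, and all remaining (nested or disjoint) pairs are crossing-free. Hence it suffices to find a topological order of the tree whose edge-intervals form a laminar family.

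I will then prove that such an order exists for every directed tree by induction on the number of vertices. Any tree with at least one edge has a leaf, and that leaf is either a \emph{source-leaf} (its unique edge points outward) or a \emph{sink-leaf} (its unique edge points inward). If there is a source-leaf $\ell$ with edge $\ell\to w$, I delete $\ell$, order $T-\ell$ by induction, and reinsert $\ell$ immediately to the left of $w$; symmetrically, a sink-leaf with edge $w\to\ell$ is reinserted immediately to the right of $w$. In either case the inserted vertex lies consecutively with its neighbour, so the new edge spans two adjacent positions: it respects the tail-before-head constraint, it cannot straddle any existing arc (nothing lies strictly between its two endpoints), and inserting a single point preserves the relative order—hence the non-crossing property—of all old arcs. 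This yields a laminar topological order, and the anti-diagonal placement converts it into an upward-leftward planar L-drawing.

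For the negative part I will exhibit the tree family $T_k$ of \cref{SUBFIG:tree} together with its drawn bimodal embedding and argue through \cref{THEO:augmentation}. Suppose some $T_k$ admitted an upward-planar L-drawing; then $T_k$ would be a subgraph of a plane \st-graph $G'$ whose successor lists contain no valley (\cref{THEO:char}). The tree is an alternating zigzag of internal sources of out-degree three (each adjacent to a sink-leaf and to two ``zigzag'' sinks) and internal sinks of in-degree three. Since $G'$ has a single source and a single sink, all but at most one of the many sources and sinks of $T_k$ must lose their status in $G'$: every such internal source acquires an incoming edge and every such internal sink an outgoing edge. Taking $k$ large enough guarantees a degree-three source $u$ of the zigzag both of whose flanking sinks are fully internal in $G'$, so both must receive augmenting out-edges. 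The aim is to show that, whatever route the fixed embedding allows for these edges, one flanking sink starts a directed path returning to the predecessor of $u$'s ``middle'' successor and the other to its successor, so that $S(u)$ acquires a left-descending and a right-descending path—exactly the valley forbidden by \cref{THEO:char}. This mirrors the contradiction already obtained for \cref{SUBFIG:singleSource}.

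The main obstacle is this last step: ruling out every augmentation rather than one. Eliminating a single internal source or sink in isolation is easy, but the augmenting edges interact, and I must verify that no global choice removes all internal sources and sinks while avoiding a valley everywhere. I expect to handle this by tracing the augmenting edges along the faces of the fixed embedding and showing that the two sides of the chosen degree-three source are forced into the valley pattern; the zigzag is engineered precisely so this forcing cannot be escaped, which is exactly why the statement requires a specially embedded tree and does not hold for all bimodal trees.
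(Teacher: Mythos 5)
Your positive part is correct and is essentially the paper's own argument in a different formalization: the paper likewise removes a leaf, draws the smaller tree inductively, and re-inserts the leaf immediately next to its neighbour (below and to the right for a source-leaf, above and to the left for a sink-leaf). Your anti-diagonal placement plus the interval criterion (two L's cross iff their rank intervals interleave, so laminarity suffices) is a clean way to make that insertion rigorous, and it checks out.

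The negative part, however, has a genuine gap. After invoking \cref{THEO:augmentation} and \cref{THEO:char}, what remains to be shown is that \emph{every} augmentation of the fixed-embedding tree $T_k$ to a plane \st-graph contains a valley --- and this universally quantified forcing step, which is the entire difficulty of the statement, is left as a plan (``the aim is to show\dots'', ``I expect to handle this by\dots''). Nothing in your proposal excludes that some global routing of augmenting edges eliminates all internal sources and sinks of the zigzag while keeping every successor list bitonic; the augmenting edges interact across the whole tree, which is precisely why a local analysis at one degree-three source does not obviously suffice. Moreover, the local pattern you describe is not a valley: at a vertex whose successor list is $\langle v_1,v_2,v_3\rangle$, a valley requires (taking $i=j=2$ in the definition) directed paths from the \emph{middle} successor $v_2$ to both $v_1$ and $v_3$; paths emanating from the two flanking sinks towards neighbours of the middle successor, as in your sketch, do not match the definition, so even the target of the intended forcing argument is misstated.

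Note how the paper sidesteps this analysis entirely, which is why it chose Frati's trees in the first place: by \cref{THEO:augmentation}, an upward-planar L-drawing of $T_k$ yields a plane \st-graph $G\supseteq T_k$ with $n=2k+3$ vertices admitting a bitonic \st-ordering; by Gronemann's result, $G$ then has a straight-line upward-planar drawing on a $(2n-2)\times(n-1)$ grid, which restricts to an embedding-preserving straight-line upward-planar drawing of $T_k$ in quadratic area, contradicting Frati's $\Omega(2^{k/2})$ area lower bound for this very family once $k$ is large. To salvage your direct combinatorial route you would have to actually carry out the forcing argument over all augmentations (or engineer a different example where the forcing is genuinely local); as written, the second half of the theorem is not proved.
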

\begin{proof}
If the embedding is not fixed, we can construct an upward-planar
L-drawing of the input tree by removing one leaf $v$ and its incident edge $e$, drawing the smaller
directed tree inductively, and inserting the removed leaf into this upward-leftward planar L-drawing. To this end let $u$ be the unique neighbor of $v$. We embed $e$ as the first incoming or outgoing edge of $u$, respectively, in counterclockwise direction, and draw $v$ slightly to the right and below $u$, if $e$ is an incoming edge of $u$, or slightly to the left and above $u$, if $e$ is an outgoing edge of $v$. 
This guarantees that the resulting L-drawing is upward-leftward and planar.

When the embedding is fixed, we consider a family of plane trees $T_k$, $k \geq 1$, proposed by Frati \cite[Fig.~4a]{fratiJCGA08}, that have $2k$ vertices and require an exponential area $\Omega(2^{k/2})$ in any embedding-preserving straight-line upward-planar drawing; see \cref{SUBFIG:tree}. We claim that, for sufficiently large $k$, the tree $T_k$ does not admit an upward-planar L-drawing. Suppose, for a contradiction, that it admits one. By \cref{THEO:augmentation}, we can augment this drawing to an upward-planar L-drawing of a plane \st-graph $G$ with $n=2k+3$ vertices. This implies that $G$ admits a bitonic \st-ordering \cite{chaplick_etal:gd17}. Hence, $G$ (and thus $T_k$) admits a straight-line upward-planar drawing in quadratic area  $(2n - 2) \times (n - 1)$ \cite{gronemann:gd16}, a contradiction. 
\end{proof}


\section{Single-Source or -Sink DAGs with Fixed Embedding}\label{SEC:fixed} 

In the fixed embedding scenario,
we first prove that every single-source or -sink acyclic cactus with no transitive edge admits an upward-planar L-drawing and then give a linear-time algorithm to test whether a single-source or -sink DAG admits an upward-planar L-drawing. 


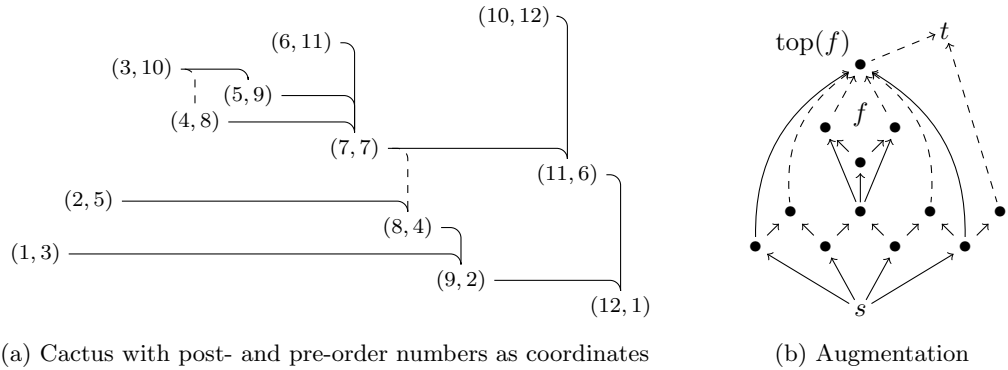
\begin{figure}[t]
\centering
\subcaptionbox{\label{FIG:cactus}Cactus with post- and pre-order numbers as coordinates}{
\footnotesize
\begin{tikzpicture}[scale=0.7]
\node (12-1) at (12,0.5) {$(12,1)$};
\node (9-2) at (9,1) {$(9,2)$};
\node (8-4) at (8,2) {$(8,4)$};
\node (1-3) at (1,1.5) {$(1,3)$};
\node (2-5) at (2,2.5) {$(2,5)$};
\node (7-7) at (7,3.5) {$(7,7)$};
\node (11-6) at (11,3) {$(11,6)$};
\node (10-12) at (10,6) {$(10,12)$}; 
\node (4-8) at (4,4) {$(4,8)$};
\node (3-10) at (3,5) {$(3,10)$};
\node (5-9) at (5,4.5) {$(5,9)$};
\node (6-11) at (6,5.5) {$(6,11)$};
\draw[rounded corners] (12-1) |- (9-2);			
\draw[rounded corners] (9-2) |- (1-3);			
\draw[rounded corners] (9-2) |- (8-4);			
\draw[rounded corners] (8-4) |- (2-5);			
\draw[rounded corners,dashed] (8-4) |- (7-7);			
\draw[rounded corners] (12-1) |- (11-6);			
\draw[rounded corners] (11-6) |- (10-12);			
\draw[rounded corners] (11-6) |- (7-7);			
\draw[rounded corners] (7-7) |- (4-8);			
\draw[rounded corners,dashed] (4-8) |- (3-10);			
\draw[rounded corners] (7-7) |- (5-9);			
\draw[rounded corners] (5-9) |- (3-10);			
\draw[rounded corners] (7-7) |- (6-11);				
\end{tikzpicture}
} \hfil
\subcaptionbox{\label{FIG:augmentation}Augmentation}{
\begin{tikzpicture}
[shorten <= -2pt, shorten >= -2pt, node distance=0.65cm]
\node (1) {$\bullet$};
\node (2) [above right of = 1] {$\bullet$}
edge [<-] (1);
\node (3) [below right of = 2] {$\bullet$}
edge [->] (2);
\node (4) [above right of = 3] {$\bullet$}
edge [<-] (3);
\node (5) [below right of = 4] {$\bullet$}
edge [->] (4);
\node (6) [above right of = 5] {$\bullet$}
edge [<-] (5);
\node (7) [below right of = 6] {$\bullet$}
edge [->] (6);
\node (b) [above of = 4] {$\bullet$}
edge [<-] (4);
\node (a) [above left of = b] {$\bullet$}
edge [<-] (4)
edge [<-] (b);
\node (c) [above right of = b] {$\bullet$}
edge [<-] (4)
edge [<-] (b);
\node (f) [above of = b] {$f$};
\node (top) [above of = f] {$\bullet$}
edge [<-,dashed] (a)
edge [<-,dashed] (c)
edge  [<-, bend right, dashed] (2)
edge  [<-, bend left, dashed] (6)
edge  [<-, bend right] (1)
edge  [<-, bend left] (7);
\node (label) [above left = -3mm of top] {top$(f)$};
\node (dummy) [below of = 4] {};
\node (s) [below of = dummy] {$s$}
edge [->] (1)
edge [->] (3)
edge [->] (5)
edge [->] (7);
\node (x) [above right of = 7] {$\bullet$}
edge [<-] (7);
\node (dummy2) [right of = top] {};
\node (t) [above right of = dummy2] {$t$}
edge [<-,dashed] (top)
edge [<-,dashed] (x);
\end{tikzpicture}
}

\caption{\label{FIG:dashed} (a) Single-source acyclic cactus. The dashed edges are the last edges on a left path cycles. (b) A new sink $t$ and the dashed edges augment a plane single-source DAG~to~a~plane~\st-graph. }
\end{figure}
\begin{theorem}[Plane Single-Source or Single-Sink Cacti]\label{THEO:cacti} Every acyclic cactus $G$ with a single source or single sink admits an upward-leftward outerplanar L-drawing. Moreover, if there are no transitive edges (e.g., if $G$ is a tree) then such a drawing can be constructed so to maintain a given outerplanar embedding.
\end{theorem}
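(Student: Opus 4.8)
The plan is to reduce the single-sink case to the single-source case and then to build the drawing explicitly from a depth-first traversal, producing exactly the kind of coordinates shown in \cref{FIG:cactus}. For a single-sink cactus I reverse all edges, obtaining a single-source cactus on the same underlying graph, construct the drawing below, and finally reflect it across the line $y=x$; this exchanges the vertical and the horizontal segment of every edge (so L-shapes remain L-shapes) and, together with the edge reversal, restores the upward-leftward property. Hence I assume from now on that $G$ has a unique source $s$. I first record the structure of the cycles: since $G$ is an acyclic cactus, every cycle $C$ is an \st-cycle, i.e., it has a unique local source $b_C$ (its \emph{bottom}) and a unique local sink $t_C$ (its \emph{top}) joined by two internally disjoint directed paths. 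Uniqueness follows from acyclicity together with the cactus property: around $C$ the numbers of local sources and local sinks are equal, and a second local source $b'\neq s$ would have both incident cycle edges outgoing, so it could be reached from $s$ only through the branch attached at the cut vertex $b'$—yet in a cactus that branch is separated from $s$ precisely by $b'$, a contradiction. The prescribed outerplanar embedding lets me call one path the \emph{left path} $L_C\colon b_C=l_0\to\dots\to l_k=t_C$ and the other the \emph{right path} $R_C$.

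The coordinates come from two spanning trees. Let $T_1$ be obtained from $G$ by deleting, from every cycle $C$, the last edge $l_{k-1}\to t_C$ of its left path, and let $T_2$ be obtained by deleting, from every cycle, the last edge of its right path. Exploring both trees from $s$ in the embedding order (left path before right path at each bottom), I set the $y$-coordinate $p(v)$ to be the preorder number of $v$ in $T_1$ and the $x$-coordinate $q(v)$ to be the postorder number of $v$ in $T_2$; for a tree ($T_1=T_2$) this is exactly the assignment used in \cref{THEO:trees}. I then verify that every edge is upward-leftward, i.e., $p(u)<p(v)$ and $q(u)>q(v)$ for each edge $(u,v)$. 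For bridges and for edges that are tree edges in both $T_1$ and $T_2$ this is immediate, as the head is discovered after the tail in $T_1$ and finishes before the tail in $T_2$. For a left-closing edge $l_{k-1}\to t_C$ (a tree edge of $T_2$ but not of $T_1$) the top $t_C$ is discovered in $T_1$ along the right path, hence after the whole left path, giving $p(l_{k-1})<p(t_C)$, while in $T_2$ it is a child of $l_{k-1}$, giving $q(t_C)<q(l_{k-1})$; the symmetric argument handles the right-closing edges. Thus all edges point up and to the left. This dual-tree device, which simultaneously turns every cycle-closing edge into a forward edge in the $y$-order and in the $x$-order, is the conceptual heart of the construction.

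It remains to prove that the resulting L-drawing is planar and outerplanar. As in the tree case, the key is that siblings occupy pairwise disjoint contiguous intervals: subtrees of $T_1$ are contiguous in the $y$-order and subtrees of $T_2$ are contiguous in the $x$-order, so the bounding boxes of the parts hanging off distinct branches are disjoint and the L-shapes nest without crossing. The only place where the two trees interact is the top $t_C$ of a cycle, shared by the left and the right chain; there the two monotone staircases bound the face corresponding to $C$, and the two closing edges meet $t_C$ only in overlapping collinear segments, which are permitted in an L-drawing and are not crossings. Tracing the outer boundary then shows that every bounded face is exactly one cycle of $G$, so all vertices lie on the outer face. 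I expect this cross-tree planarity analysis around the shared tops to be the main obstacle, since it is the one step where the $x$- and the $y$-order are governed by different trees and the interval-nesting argument must be combined across them.

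Finally, for the embedding-preserving strengthening I use that the exploration order is dictated by the prescribed outerplanar embedding, so the rotation system of the drawing agrees with the given one at every vertex, except possibly where a cycle has a side of length one. Such a side is precisely a transitive edge, and forcing it into the role of a closing edge may demand a rotation at $b_C$ or $t_C$ that conflicts with the prescription; excluding transitive edges removes this clash and makes the construction embedding-preserving, which in particular covers all trees.
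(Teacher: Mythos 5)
Your construction is essentially the paper's own, just phrased with two trees instead of one: the preorder of your $T_1$ is exactly the paper's preorder, and the standard postorder of your $T_2$ reproduces exactly the paper's ``modified backtracking'' postorder (deferring the postorder numbers of the left-path vertices until the cycle's top has been finished is the same as re-attaching each top to $l_{k-1}$, which is what $T_2$ does). Also your reflection across $y=x$ is literally the paper's ``rotate by $90$ degrees counter-clockwise and mirror horizontally''; note, though, that for the embedding-preserving claim in the single-sink case you must additionally flip the prescribed embedding before reversing the edges, a step you omit.

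The genuine gap is your treatment of transitive edges. Before building its tree, the paper flips every cycle whose left path is a single edge (this preserves outerplanarity), so that every left path has at least two edges, and it is exactly this flip --- needed only in the presence of transitive edges --- that costs embedding preservation. You instead keep the prescribed left/right labelling and treat a transitive side purely as an embedding-preservation conflict (``may demand a rotation \dots that conflicts with the prescription''). But without the flip your drawing is in general not even planar, so the existence half of the theorem fails. Concretely, take the cycle $b_C \to r_1 \to r_2 \to t_C$ (right path) with the transitive edge $b_C \to t_C$ as its left path, and attach a pendant edge $r_1 \to w$ outside the cycle, after the edge $(r_1,r_2)$ in the clockwise order at $r_1$. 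The embedding-order traversals give (postorder, preorder) coordinates $b_C=(5,1)$, $r_1=(4,2)$, $r_2=(2,3)$, $t_C=(1,4)$, $w=(3,5)$: the horizontal segment of $b_C \to t_C$ (height $4$, spanning $1 \leq x \leq 5$) crosses the vertical segment of $r_1 \to w$ (abscissa $4$, spanning $2 \leq y \leq 5$) at the point $(4,4)$, interior to both segments, and these two edges share no endpoint. The underlying reason is also the crux of the planarity argument you leave as a sketch: a left-closing edge is harmless precisely because its horizontal segment has unit length, $q(l_{k-1}) = q(t_C)+1$, and this requires $l_{k-1} \neq b_C$, i.e., a left path with at least two edges --- exactly what the flip guarantees (the overlap it creates between the two incoming horizontal segments at $t_C$ is permitted, a crossing is not). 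Adding the flip repairs the existence claim, and your observation that no such intervention is needed in the absence of transitive edges then yields the embedding-preserving refinement.
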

\begin{proof}
We first consider the case that $G$ has a single source $s$. Observe that then each biconnected component $C$ of $G$ (which is either an edge or a cycle) has a single source, namely the cut-vertex of $G$ that separates it from the part of the DAG containing $s$. This implies that $C$ also has a single sink (although $G$ may have multiple sinks, belonging to different biconnected components). In particular, if $C$ is a cycle, it consists of a \emph{left} path $P_\ell$ and a \emph{right} path $P_r$ between its single source and single sink. By flipping the cycle $C$~--~maintaining outerplanarity~--~we can ensure that $P_\ell$ contains more than one edge. Note that this flipping is only performed if there are transitive edges.
Consider the tree $T$ that results from $G$ by removing the last edge of every left path. 

We perform a depth-first search on $T$ starting from $s$ where the edges around a vertex are traversed in clockwise order. 
We enumerate each vertex twice, once when we first meet it (DFS-number or \emph{preorder} number) and once when we finally backtrack from it (\emph{postorder} number). To also obtain that each edge points to the left, backtracking has to be altered from the usual DFS: Before backtracking on a left path $P_\ell$ of a cycle $C$, we directly jump to the single source $s_C$ of $C$ and continue the DFS from there, following the right path $P_r$~of~$C$. Only once we have backtracked from the single sink $t_C$ of $C$, we give each vertex on $P_\ell$, excluding $s_C$, a postorder number and then we continue backtracking on $P_r$. See \cref{FIG:cactus}.

Let the y-coordinate of a vertex be its preorder number and let the x-coordinate be its thus constructed postorder number. Since each vertex has a larger
preorder- and a lower postorder-number than its parent,
the drawing is upward-leftward.  In~\cite{DBLP:journals/corr/abs-2205-05627} we prove that it is also planar and preserves the embedding, which was updated only in the presence of transitive edges.

Now consider the case that $G$ has a single sink. Flip the embedding, i.e., reverse the linear order of the incoming (outgoing) edges around each vertex.
Reverse the orientation of the edges, construct the drawing of the resulting single-source DAG, rotate it by 90 degrees in counter-clockwise direction, and mirror it horizontally.
This yields the desired drawing.
\end{proof}
\subparagraph{General DAGs.}
Two consecutive incident edges of a vertex form an \emph{angle}.  A \emph{large angle} in an upward-planar straight-line drawing is an angle greater than $\pi$ between two consecutive edges incident to a source or a sink, respectively. An \emph{upward-planar embedding} of an upward-planar DAG is a planar embedding with the assignment of large angles according to a straight-line upward-planar drawing. For single-source or single-sink DAGs, respectively, a planar embedding and a fixed outer face already determine an upward-planar embedding~\cite{bertolazzi_singleSource}.

An angle is a  \emph{source-switch} or a \emph{sink-switch}, respectively, if the two edges are both outgoing or both incoming edges of the common end vertex. Observe that the number~$A(f)$ of source-switches in a face~$f$ equals the number of sink-switches in~$f$. Bertolazzi et al.~\cite{bertolazzi_flow} proved that in biconnected upward-planar DAGs, the number $L(f)$ of large angles in a face~$f$ is $A(f)-1$, if $f$ is an inner face, and $A(f)+1$, otherwise, and mentioned in the conclusion that this result could be extended to simply connected graphs. An explicit proof for single-source or -sink DAGs can be found in~\cite{DBLP:journals/corr/abs-2205-05627}.

\begin{theorem}\label{THEO:source}
Given an upward-plane  single-source or single-sink DAG, it can be tested in linear time whether it admits an upward-planar L-drawing.
\end{theorem}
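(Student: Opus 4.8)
The plan is to reduce the test to the valley-free augmentation guaranteed by \cref{THEO:augmentation,THEO:char} and to show that, for a single-source (and, symmetrically, single-sink) upward-plane DAG, this augmentation is so tightly constrained by the fixed upward-planar embedding that the existence of a valley-free completion can be decided by a single pass over the faces. First I would dispose of the single-sink case by reversing all edge orientations and flipping the embedding, exactly as in the proof of \cref{THEO:cacti}, so that it suffices to treat an upward-plane DAG $G$ with a unique source $s$. By \cref{THEO:augmentation} and \cref{THEO:char}, $G$ admits an upward-planar L-drawing if and only if $G$ can be augmented, within the given embedding and upward, to a plane \st-graph $G'$ in which the successor list of no vertex contains a valley. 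Since $s$ is the only source, $G$ has no internal sources, and the only obstruction to $G'$ being an \st-graph are the sinks of $G$, i.e.\ the local maxima of the upward-planar embedding, each of which must receive a new outgoing edge routed upward.

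Second, I would make this augmentation explicit using the switch and large-angle bookkeeping recalled just before the statement. Turning $G$ into an \st-graph requires routing an outgoing edge upward out of every local maximum and finally merging everything into one new global sink $t$ placed in the outer face. The identity $L(f)=A(f)\mp 1$ controls how many of these new edges traverse a given face $f$ and hence how much freedom remains when, inside $f$, we link the sinks lying on its upper boundary toward a single designated top vertex $\mathrm{top}(f)$; see \cref{FIG:augmentation}. The only combinatorial choice left is, within each face, the routing of these merging edges among the candidate target vertices on the upper boundary of $f$.

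Third, I would translate the no-valley requirement into a condition local to each vertex and therefore checkable in linear time. An augmentation edge leaving a sink $w$ creates a directed path from $w$ to its new target, and such a path can turn two out-neighbors of a common ancestor $v$ into the endpoints of a directed path, which is precisely the ingredient of a valley in $S(v)$ (as in the forced augmentations of \cref{SUBFIG:singleSink,SUBFIG:singleSource}, where the single admissible completion already produces a valley). I would precompute, in a single traversal consistent with the \st-orientation, for each vertex and each incident face the reachability data needed to recognise the valley pattern, and then verify at every vertex $v$ whether some admissible routing of the incident augmentation edges keeps $S(v)$ valley-free. The algorithm accepts if and only if all these local conditions can be met simultaneously.

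The main obstacle will be to show that these per-vertex and per-face conditions decouple --- equivalently, that a valley-free augmentation exists if and only if no single forbidden local configuration occurs, matching the $P$-node-type obstruction described in the preliminaries --- so that a greedy left-to-right resolution inside each face is both correct and sufficient. The difficulty is that a single augmentation edge is ``seen'' as a directed path by many successor lists at once, so a naive search over routings is neither local nor linear; the crux is to prove that the reachability relations among the out-neighbours of any vertex are already fixed by the forced part of the augmentation, which reduces each valley test to a constant amount of work per incident edge and hence to linear time overall. Establishing this monotonicity, and thereby certifying non-existence by exhibiting an unavoidable forbidden configuration, is the technical heart of the argument.
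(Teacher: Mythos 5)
Your high-level strategy (augment within the fixed embedding to a plane \st-graph, invoke \cref{THEO:augmentation} and \cref{THEO:char}, and exploit the large-angle structure of the faces) matches the paper's, but the step you yourself flag as ``the technical heart'' is precisely what the paper proves and what your proposal leaves open, and without it the argument does not go through. The paper's key point is that there is \emph{no routing choice at all}: in each inner face $f$ exactly one sink-switch, top$(f)$, has no large angle, and \cref{LEMMA:path-to-top} shows that \emph{every} plane \st-graph extending $G$ contains a directed path from each large-angle sink of $f$ to top$(f)$. Hence one may form a single canonical graph $G'$ by adding the edge $(w,\mathrm{top}(f))$ for every large-angle sink $w$ of every inner face $f$, plus a new global sink $t$ fed by the sink-switches of the outer face; the forced-path lemma guarantees that $G'$ is acyclic and that any valley created by a new edge would be present (as a forced path) in every augmentation, so $G$ admits an upward-planar L-drawing if and only if $G'$ admits a bitonic \st-ordering, which is tested by Gronemann's linear-time algorithm. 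Your plan instead treats the augmentation as a constraint-satisfaction problem over ``admissible routings'' with per-vertex valley conditions and a greedy left-to-right resolution, and you explicitly defer the proof that these conditions decouple. That deferred claim is the whole content of the theorem: once the forced-path lemma is in hand, there is nothing to decouple and no search to perform, whereas without it your local checks have no correctness proof and it is not even clear that the space of routings you quantify over can be explored in linear time.

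There is a second, independent error: you cannot ``dispose of the single-sink case by reversing all edge orientations and flipping the embedding, exactly as in the proof of \cref{THEO:cacti}.'' Upward-planar L-drawability is \emph{not} invariant under edge reversal; the paper states this explicitly (``a plane \st-graph does not necessarily admit an upward-planar L-drawing if the respective graph with reversed edges does'') and uses it to justify treating the two cases separately. The reversal-plus-rotation trick in \cref{THEO:cacti} is sound only because the cactus construction produces \emph{upward-leftward} drawings, and a $90^\circ$ rotation maps upward-leftward L-drawings of $G$ to upward L-drawings of the reversed graph; a general upward-planar L-drawing need not be leftward, so the reduction fails. The paper instead handles the single-sink case by a separate symmetric construction (\cref{LEMMA:path-to-bottom}): augment each inner face $f$ with edges from bottom$(f)$ to its large-angle source-switches and add a new global source, then again run the bitonic test.
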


In the following, we prove the theorem for a DAG $G$ with a single source $s$;
the single-sink case is discussed in~\cite{DBLP:journals/corr/abs-2205-05627}.
%
%
In an upward-planar straight-line drawing of $G$, the only large angle at a source-switch is the angle at $s$ in the outer face. Thus, in the outer face all angles at sink-switches are large and in an inner face $f$ all but one angle at sink-switches are large. For an inner face $f$, let top$(f)$ be the sink-switch of $f$ without large angle. See \cref{FIG:augmentation}.

\begin{lemma}\label{LEMMA:path-to-top}
Let $G$ be a single source upward-planar DAG with a fixed upward-planar embedding, let $f$ be an inner face, and let $v$ be a sink with a large angle in $f$. Every plane \st-graph extending $G$ contains a directed $v$-top($f$)-path.
\end{lemma}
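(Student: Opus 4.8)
The claim is that in a single-source upward-planar DAG $G$ with a fixed upward-planar embedding, if $f$ is an inner face and $v$ is a sink of $G$ having a large angle inside $f$, then in any plane $st$-graph $G'$ extending $G$ there must be a directed path from $v$ to $\mathrm{top}(f)$. Let me think about what the statement is really saying and how I would prove it.

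First, let me understand the objects. We have $G$ single-source, upward-planar, embedding fixed. A face $f$ is an inner face. By the preceding discussion, in $f$ the number $A(f)$ of source-switches equals the number of sink-switches, and the number of large angles is $A(f) - 1$ for an inner face. Since $G$ is single-source, the only source is $s$, and at $s$ the only large angle is in the outer face. So all source-switches in an inner face $f$ are... wait, at a source-switch the two edges are both outgoing. The large angle at a source-switch happens only at $s$ in the outer face. So in an inner face, there are NO large angles at source-switches. That means all $A(f) - 1$ large angles in the inner face $f$ are at sink-switches. And there are $A(f)$ sink-switches in $f$. So exactly one sink-switch in $f$ has a small (non-large) angle — this is $\mathrm{top}(f)$.

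So $\mathrm{top}(f)$ is defined as the unique sink-switch of $f$ whose angle in $f$ is NOT large (i.e., is at most $\pi$, a "small" or convex angle pointing into $f$). And $v$ is a sink (sink-switch) in $f$ WITH a large angle.

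**What does "extending $G$ to a plane $st$-graph" mean, and what constraints does the augmentation impose?**

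To turn $G$ into an $st$-graph, we need a single source and single sink. Since $G$ already has a single source $s$, we need to eliminate all internal sinks and merge toward a single global sink $t$, by adding edges, while keeping the embedding planar and upward. The key constraint: when we add edges to eliminate a sink $v$ that lies in face $f$ with a large angle, the added edge must go "upward" from $v$ into the region of $f$ above $v$. Because $v$ is a sink with a large angle in $f$ (angle $> \pi$), there's room above $v$ inside $f$ to route an outgoing edge. This edge must stay inside $f$ (planarity, embedding preservation) and must be upward (the new edge is outgoing from $v$, going up).

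**The plan.**

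I would prove this by following the forced structure of the augmentation inside the face $f$. The plan is:

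1. **Reduce to what happens inside the single face $f$.** Since the embedding is fixed and the augmentation must be planar, any edge added to eliminate the sink $v$ (which sits at a large angle of $f$) must be routed inside $f$ — $v$'s large angle is the only place an outgoing edge can be attached to $v$ while remaining planar and upward. So the augmenting edge out of $v$ goes to some vertex $w$ on the boundary of $f$ that is "above" $v$.

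2. **Characterize $\mathrm{top}(f)$ as the unique sink reachable by going up.** Consider the boundary walk of $f$. Walking along the boundary, $f$ looks like a region whose lower boundary is a sequence of "valleys" (sink-switches with large angles, which are the sinks $v$ we must eliminate) and whose sides rise up to the single highest point. The vertex $\mathrm{top}(f)$, being the unique sink-switch with a small angle, is the topmost vertex of $f$ — the apex. I would make precise, using the rotation/angle bookkeeping from the Preliminaries (the rotation of the boundary of a face is $-2\pi$ for an inner face, or equivalently the $L(f) = A(f)-1$ count), that $\mathrm{top}(f)$ is the unique local maximum of $f$ in the upward direction.

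3. **Show every upward augmentation from $v$ is forced to climb toward $\mathrm{top}(f)$.** When we add an outgoing edge at $v$ inside $f$, it lands on a vertex $w$ higher than $v$ on the boundary of $f$. If $w \neq \mathrm{top}(f)$, then $w$ is either a sink-switch with a large angle (another valley, which also needs to be eliminated by an outgoing edge climbing further) or an intermediate vertex on a monotone side of $f$. In either case, following the directed edges upward inside the $st$-graph $G'$, I claim one is forced to continue upward until reaching the apex $\mathrm{top}(f)$, because in an $st$-graph every vertex other than the global sink has an outgoing edge, and inside the closed region bounded by $f$ the only vertex that can serve as a local sink of $f$'s interior is $\mathrm{top}(f)$. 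Concatenating these forced upward edges yields a directed $v$-$\mathrm{top}(f)$-path.

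**Main obstacle.**

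The hard part will be step 3: making rigorous the claim that the upward augmenting edges, added possibly at several sinks inside $f$, must compose into a single directed path terminating exactly at $\mathrm{top}(f)$, rather than, say, leaving the face $f$ or terminating at a different boundary vertex. The subtlety is that $G'$'s new edges subdivide $f$ into smaller faces, and I must argue that following outgoing edges of $G'$ starting from $v$ and staying within the closed region originally bounded by $f$ cannot get "stuck" at any vertex other than $\mathrm{top}(f)$: every boundary vertex of $f$ other than $\mathrm{top}(f)$ either already has an outgoing edge along the boundary of $f$ that stays within $f$ and goes upward, or acquires one from the augmentation. Formally I would set up an inductive/extremal argument on the vertices of $f$ ordered by $y$-coordinate (height), showing that the set of vertices of $f$ from which $\mathrm{top}(f)$ is reachable in $G'$ is closed downward along $f$ and contains the apex, hence contains $v$. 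The clean way to package this is via the rotation count: the angle bookkeeping forces $\mathrm{top}(f)$ to be the unique sink of $f$ interior in $G'$, so every directed path inside $f$ that cannot extend must end at $\mathrm{top}(f)$.
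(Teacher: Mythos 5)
Your proposal follows essentially the same route as the paper's proof: the paper likewise starts from the forced outgoing edge of the sink $v$ into its large angle in $f$, then repeatedly climbs the boundary of $f$ upward to the next sink-switch and takes that vertex's forced outgoing edge into $f$, terminating at top$(f)$ because heights strictly increase. Your ``top$(f)$ is the only vertex of the closed face region without an upward continuation'' packaging of step 3 is just a mild repackaging of this recursion, so the two arguments coincide in substance.
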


\begin{proof}
Consider a planar $st$-graph extending $G$. In this graph there must be an outgoing edge $e$ of $v$ towards the interior of $f$. 
Let $w$ be the head of~$e$. Follow a path from $w$ on the boundary of $f$ upward until a sink-switch $v'$ is met. 
If this sink-switch is top$(f)$, we are done. 
Otherwise continue recursively by considering an outgoing edge $e'$ of $v'$ toward the interior~of~$f$.  Eventually this process terminates when top$(f)$ is reached.
\end{proof}

\begin{proof}[Proof of \cref{THEO:source}, single-source case]
Let $G$ be an upward-planar single-source DAG with a fixed upward-planar embedding. Let $G'$ be the DAG that results from $G$ by adding  in each inner face $f$ edges from all sinks with a large angle in $f$ to top$(f)$ and by adding a new sink $t$ together with edges from all sink-switches on the outer face. We will show that $G$ admits an upward-planar L-drawing if and only if $G'$ does.
This implies the statement, since testing whether $G'$ admits a bitonic \st-ordering can be performed in linear time~\cite{gronemann:gd16}.

Clearly, if $G' \supseteq G$ admits an upward-planar L-drawing, then so does $G$. 
In order to prove the other direction, suppose that $G$ has an upward-planar L-drawing. In order to prove that $G'$ admits an upward-planar L-drawing, we show that it is a planar $st$-graph that admits a bitonic $st$-ordering~\cite{chaplick_etal:gd17}. To show this, we argue that 
$G'$ is acyclic, has a single source and a single sink, and the successor list of no vertex contains a valley by \cref{THEO:char}.  

Namely, the edges to the new sink $t$ cannot be contained in any directed cycle.
Furthermore, by \cref{THEO:augmentation},
there is an augmentation $G''$ of $G$ such that (a) $G''$ is a planar \st-graph and such that (b) $G''$ admits an upward-planar L-drawing.
By \cref{LEMMA:path-to-top}, the edges added into inner faces of $G$ either belong to $G''$ or are transitive edges in $G''$. Thus, $G'$ is acyclic. 

Since $G'$ does not have more sources than $G$, there is only one source in $G'$. Each sink has a large angle in some face. Thus, in $G'$ each vertex other than $t$  has at least one outgoing~edge. Therefore, $G'$ is a planar $st$-graph.

Assume now that there is a face $f$ with a sink $w$ such that the edge $(w,$top$(f))$ would be part of a valley at a vertex $v$ in $G'$, i.e., assume there are successors $v_{i-1},v_{i},v_j,v_{j+1}$ of~$v$ from left to right (with possibly $v_i= v_j$) such that there is both, a directed $v_i$-$v_{i-1}$-path and a directed $v_j$-$v_{j+1}$-path. Since the out-degree of $w$ in $G'$ is one, it follows that $w \neq v$. Thus, $(w,$top$(f))$  could only be part of the $v_i$-$v_{i-1}$-path or the $v_j$-$v_{j+1}$-path. But then, by \cref{LEMMA:path-to-top}, there would be such a path in any augmentation of $G$ to a planar \st-graph. 
Finally, the edges incident to $t$ cannot be involved in any valley, since all the tails have out-degree~1. Thus, $G'$ contains no valleys.
\end{proof}

\section{Single-Source~or~-Sink~Series-Parallel~DAGs~with~Variable~Embedding}
\label{SEC:variable}

The goal of this section is to prove the following theorem.

\begin{theorem}\label{THEO:variable}
It can be tested in linear time whether a biconnected series-parallel DAG with a single source or a single sink admits an upward-planar L-drawing.
\end{theorem}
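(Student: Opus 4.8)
The plan is to give a dynamic-programming algorithm over the SPQR-like decomposition tree $T$ of the biconnected series-parallel DAG $G$. Since $G$ is series-parallel and biconnected, $T$ is the decomposition tree described in the preliminaries, rooted (following the convention set up earlier) at the inner node incident to a $Q$-node for an edge at the single source $s$. By \cref{THEO:augmentation} and the bitonic-ordering characterization (\cref{THEO:char}), testing for an upward-planar L-drawing amounts to deciding whether some planar embedding of $G$ admits an augmentation to a plane $st$-graph with no valley in any successor list. The key observation is that valleys and the bitonicity condition are \emph{local} to the successor structure at each pole, so the feasibility of a subgraph $G(\mu)$ should be summarizable by a bounded amount of information at its two poles. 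I would first isolate, for each node $\mu$ of $T$, what boundary data of an embedding-plus-augmentation of $G(\mu)$ is relevant to how $G(\mu)$ can be combined with the rest of $G$: essentially, the behavior of the successor lists at the South and North poles (in the sense of the South/North pole terminology set up for single-source series-parallel DAGs), recording whether valleys are already forced internally and what ``shape'' (monotone-increasing, monotone-decreasing, or bitonic-with-peak) the pole's local successor sequence can take.

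\textbf{The dynamic program.} I would process $T$ bottom-up, computing for each node $\mu$ a set of feasible boundary \emph{types} $\type{}{}$ it can realize. For a $Q$-node the type is trivial. For an $S$-node, the components $G(\mu_1),\dots,G(\mu_\ell)$ are stacked between the poles, only one of them may carry extra sources/sinks (as noted in the preliminaries), and the successor lists at intermediate merged poles must remain valley-free; combining types along a series chain is a constant-size check per child, so it is linear overall. For a $P$-node, the children are placed side by side and their relative left-to-right order around the shared poles is exactly the free embedding choice we must optimize over. This is where the bitonicity constraint bites: the sequence of children, read left to right around a pole, must produce a successor list whose shape is bitonic (single peak) and that, together with the augmentation edges forced by \cref{LEMMA:path-to-top}-style reasoning, introduces no valley. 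The crucial reduction is that admissible orderings of the children around a $P$-node form a \emph{regular} language over the alphabet of child types — increasing blocks followed by decreasing blocks, with at most one ``turning'' child and restricted placement of children that themselves contribute both incoming and outgoing edges at the pole (the configuration flagged earlier as the obstruction to upward planarity).

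\textbf{The regular-expression matching step.} To decide, at each $P$-node, whether the multiset of child types can be arranged into a valid left-to-right order, I would encode the valid orderings as a fixed regular expression $R$ (independent of $G$) over the constant-size type alphabet and test whether the children admit an arrangement matching $R$. Using a linear-time matching/greedy-assignment procedure for this bounded regular expression, each $P$-node is handled in time linear in its number of children, so the whole pass is linear in $|G|$. The single source forces, as established in the preliminaries, that every $G(\mu_i)$ shares its South pole with $G(\mu)$, which sharply constrains which child can be the bitonic ``peak'' and eliminates most ordering freedom — this is what keeps the type alphabet and $R$ of constant size. The single-sink case follows by the edge-reversal/flip symmetry already invoked in the proofs of \cref{THEO:cacti,THEO:source}, noting the caveat stated after \cref{THEO:variable}'s motivation that the two directions are genuinely independent and must be run separately rather than inferred from one another.

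\textbf{The main obstacle.} I expect the hard part to be proving that the boundary type is in fact a \emph{sufficient} summary — i.e., a soundness/completeness lemma stating that $G(\mu)$ realizes a given type if and only if gluing any partial solution of the same type into the rest of $G$ preserves the global no-valley property. The danger is non-local interaction: an augmentation edge forced into a face of $G(\mu)$ (to eliminate an internal source or sink, as in \cref{THEO:augmentation}) may route to a $\mathrm{top}(f)$ vertex that is itself a pole, creating a successor of a pole that only materializes after composition and could form a valley spanning the boundary. Controlling this requires showing that all such forced augmentation edges within $G(\mu)$ either stay internal or attach at the North pole in a predictable position, so that their effect is captured by the finite type; making this rigorous, and verifying that the constant-size regular expression $R$ exactly characterizes the valley-free $P$-node orderings, is the technical core on which the linear-time claim rests.
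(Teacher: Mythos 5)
Your algorithmic skeleton coincides with the paper's: a bottom-up dynamic program over the decomposition tree, constant-size boundary types per component, constant-time series composition, and a matching step against a constant-size regular expression to order the children of each P-node (the paper's \cref{LEMMA:parallelComposition,LEMMA:parallel-regex}). But there is a genuine gap, and it is exactly the one you name as ``the main obstacle'': you never define the types, and the currency you propose for them~--~valley/bitonicity information about successor lists, routed through \cref{THEO:augmentation} and \cref{THEO:char}~--~is not a compositional invariant. A valley at a vertex $v$ is witnessed by directed $v_i$-$v_{i-1}$- and $v_j$-$v_{j+1}$-paths in the \emph{augmented} graph; these paths may traverse the entire graph, and the augmentation edges themselves (towards top$(f)$ of faces that only come into existence after composition) are not determined inside $G(\mu)$. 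So ``monotone/bitonic shape of the pole's successor list plus a forced-valley bit'' does not suffice to decide feasibility of a composition, and no sufficiency lemma is offered~--~you explicitly leave it open. The paper sidesteps this entirely by defining types \emph{geometrically}, on L-drawings rather than on orderings and augmentations: for each component, the South-type ($L$ or $R$: the bend direction at $S$) and North-type ($R^{cc},L^{cc},W,E,R^c,L^c$: how the last edge attaches at $N$ and on which side of $N$ the path passes) of the leftmost and rightmost $S$-$N$-paths, four \textsc{free}-flags recording whether the bends at the poles can be shared with neighbouring components, and a single-edge bit. The composition lemmas are then proved directly on drawings, via angle sums, face rotations, and the bend-or-end property of upward-planar L-drawings, with no reference to valleys at all; this is what makes the boundary summary provably sufficient.

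A second, independent error is your treatment of the single-sink case: it does not ``follow by edge-reversal/flip symmetry.'' The paper stresses that reversing all edges is \emph{not} a symmetry of upward-planar L-drawings (a plane \st-graph may admit one while its reversal does not), which is precisely why single-source and single-sink inputs are studied as separate problems; your own caveat that ``the two directions\dots must be run separately'' contradicts the reduction you invoke. The rotate-and-mirror trick works in \cref{THEO:cacti} only because the cactus drawings constructed there are upward-\emph{leftward}, and the single-sink part of \cref{THEO:source} is proved by a separate, analogous argument (using bottom$(f)$ and a new source), not by reversal. Accordingly, the paper handles the single-sink case of \cref{THEO:variable} by setting up a second, analogous type system for $N$-$S$-paths (North-types $W,E$; South-types $E^{c},W^{c},L,R,E^{cc},W^{cc}$) and rerunning the dynamic program; your proposal would need to do the same.
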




In the following we assume that $G$ is a biconnected series-parallel DAG. 

\subparagraph{Single Source.}
We follow a dynamic-programming approach inspired by Binucci et al.~\cite{binucci_etal:socg19} and Chaplick et al.~\cite{DBLP:conf/compgeom/ChaplickGFGRS22}.
We define feasible types that combinatorially describe the ``shapes'' attainable in an upward-planar L-drawing of each component. We show that these types are sufficient to determine the possible types of a graph obtained with a parallel or series composition, and show how to compute them efficiently. 
The types depend on the choice of the South pole as the bottommost pole (if it is not uniquely determined by the structure of the graph, e.g., if one of them is the unique source),  and on the type of the leftmost $S$-$N$-path $P_L$ and the rightmost $S$-$N$-path $P_R$ between the South-pole $S$ and the North-pole~$N$. Observe that $P_L$ and $P_R$ do not have to be directed paths.

\begin{figure}
\centering
\subcaptionbox{$L$\label{SUBFIG:southL}}{\includegraphics[page=1]{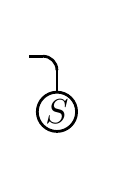}} \hfil
\subcaptionbox{$R$\label{SUBFIG:southR}}{\includegraphics[page=2]{Figures/poles}} \hfil
\subcaptionbox{$R^{cc}$\label{SUBFIG:northRcc}}{\includegraphics[page=3]{Figures/poles}} \hfil
\subcaptionbox{$L^{cc}$\label{SUBFIG:northLcc}}{\includegraphics[page=4]{Figures/poles}} \hfil
\subcaptionbox{$W$\label{SUBFIG:northW}}{\includegraphics[page=5]{Figures/poles}} \hfil
\subcaptionbox{$E$\label{SUBFIG:northE}}{\includegraphics[page=6]{Figures/poles}} \hfil
\subcaptionbox{$R^c$\label{SUBFIG:northRc}}{\includegraphics[page=7]{Figures/poles}} \hfil
\subcaptionbox{$L^c$\label{SUBFIG:northLc}}{\includegraphics[page=8]{Figures/poles}}
\caption{The different types of a path between the poles. (a,b) South types; (c-h) North types.
}
\label{fig:pole-types}
\end{figure}

More precisely, the type of an $S$-$N$-path $P$ is defined as follows: There are two \emph{South-types} depending on the edge incident to $S$: \emph{$L$} (outgoing edge bending to the \emph{left}; \cref{SUBFIG:southL}) and \emph{$R$} (outgoing edge bending to the \emph{right}; \cref{SUBFIG:southR}).
For the last edge incident to the North pole $N$ we have in addition the types for the incoming edges: \emph{$W$} (incoming edge entering from the left~--~\emph{West}; \cref{SUBFIG:northW}) and \emph{$E$} (incoming edge entering from the right~--~\emph{East}; \cref{SUBFIG:northE}).
For the types $R$ and $L$ we further distinguish whether $P$  passes to the left of $N$ (\emph{$R^{cc}$}/\emph{$L^{cc}$}; \cref{SUBFIG:northRcc,SUBFIG:northLcc}) or to the right of $N$ (\emph{$R^{c}$}/\emph{$L^{c}$};  \cref{SUBFIG:northRc,SUBFIG:northLc}): Let $h$ be the horizontal line through $N$.
We say that $P$ \emph{passes to the left (right)} of $N$ if the last edge of $P$ (from $S$ to $N$) that intersects $h$ does so to the left (right) of $N$.
Thus, 
there are six \emph{North-types} for a path between the poles: $R^{cc},L^{cc},W,E,R^c,L^c$.
%
The superscripts $c$~and~$cc$ stand for clockwise and counter-clockwise, respectively, to denote the rotation of a path that passes to the left (right) of $N$, when walking from $N$ to $S$. This is justified in the next lemma and depicted  e.g., in \cref{SUBFIG:ElLr}, where the right $S$-$N$-path has type $L^c$, since (walking from~$N$ to~$S$) it first bends to the left and then passes to the right of $N$ by rotating clockwise.


\begin{restatable}[$\star$]{lemma}{rotation}
Let $G$ be a series-parallel DAG with no internal sources. Let an upward-planar L-drawing of $G$ be given where the poles $S$ and $N$ are incident to the outer face and~$S$ is below $N$.
Let $P$ be a not necessarily directed $S$-$N$-path. 
%
Let $P'$ be the polygonal chain obtained from $P$ by adding a vertical segment pointing from $N$ downward. The rotation~of~$P'$~is 
\begin{itemize}
\item $\pi$ if the type of $P$ at $N$ is in $ \{E,L^c,R^c\}$.
\item $-\pi$ if the type of $P$ at $N$ is  in $ \{L^{cc},R^{cc},W\}$.
\end{itemize} 
\end{restatable}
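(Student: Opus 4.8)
The plan is to compute the rotation of $P'$ by accounting for the total turning along the augmented chain and then matching the endpoint behaviour against the six North-types. First I would fix orientations: traverse $P$ from $S$ to $N$, and recall that $P'$ is obtained by appending a vertical segment pointing downward from $N$. The key structural fact I want to exploit is that $G$ has no internal sources, so every internal vertex of $P$ has at least one incoming edge; combined with the upward-planar L-drawing, this constrains how the vertical and horizontal segments of each L-shaped edge along $P$ are glued together, and hence the sequence of left/right bends that $P$ makes.

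The main computation is a telescoping argument on the rotation. Using the definition of rotation given in the preliminaries (each convex angle to the left contributes $+\pi/2$, each concave angle $-\pi/2$, with the stated conventions for $0$ and $2\pi$ angles), I would argue that the rotation of the full closed-ish figure is essentially determined by the start and end tangent directions plus a fixed correction. Concretely, I would compare $P'$ to a reference configuration by closing it up: the appended downward segment at $N$ fixes the terminal direction, and the South-type ($L$ or $R$) fixes the initial direction of the first edge out of $S$. The difference between the cumulative turning and a straight traversal is exactly the net winding, which depends only on whether $P$ ends by passing to the left of $N$ or to the right of $N$, and on whether the last edge enters $N$ horizontally (types $E,W$) or the path wraps around $N$ (types $L^c,R^c,L^{cc},R^{cc}$). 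I expect the arithmetic to collapse so that all three of $\{E,L^c,R^c\}$ (the types passing to the right of $N$, together with the East-entering type) yield rotation $\pi$, while $\{L^{cc},R^{cc},W\}$ (passing to the left, together with West-entering) yield $-\pi$; this is the content to be verified case by case against \cref{fig:pole-types}.

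The natural way to organize the verification is by the terminal behaviour at $N$ rather than by the South-type. I would show that the South-type ($L$ vs.\ $R$) does not affect the final rotation value: whatever turning is accumulated by the choice of first bend at $S$ is compensated elsewhere along the path, so only the side on which $P'$ approaches $N$ matters. This is most cleanly seen by invoking the fact that a simple closed orthogonal polygon traversed counterclockwise has rotation $2\pi$: by closing $P'$ with a canonical return path from the bottom of the appended segment back to $S$ whose rotation is independent of the type, the rotation of $P'$ itself is forced to one of two values, and the sign is dictated precisely by the left/right distinction encoding clockwise versus counterclockwise wrapping around $N$ (which is exactly why the superscripts $c$ and $cc$ were chosen).

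The hard part will be making the closing-path argument airtight in the presence of overlapping edges, which the L-drawing explicitly permits: when consecutive segments overlap, one must use the stated convention that a $2\pi$ angle counts as two concave angles and a $0$ angle as two convex angles, and I must check that these degenerate angles are counted consistently in both $P'$ and the closing path so that they cancel. A secondary subtlety is ensuring that the definition of ``passes to the left/right of $N$'' via the last crossing of the horizontal line $h$ through $N$ genuinely coincides with the sign of the net rotation; establishing this equivalence — rather than the turning count per se — is where I anticipate the real work, and it is what ties the geometric side-condition to the combinatorial North-type label.
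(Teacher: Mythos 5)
Your outline reproduces the paper's mechanism in broad strokes---append the downward segment at $N$, close $P'$ up into an orthogonal polygon, and invoke the fact that a simple polygon traversed counterclockwise has rotation $2\pi$---but the step you explicitly set aside as ``where I anticipate the real work'' is precisely the content of the lemma, and your proposal never supplies it. For the closing argument to be legitimate you must exhibit a closing path $Q$ such that the concatenation of $P'$ and $Q$ is a \emph{simple} polygon; this is exactly where both hypotheses (no internal sources, poles on the outer face) enter, and neither your telescoping remark nor the vague appeal to how L-segments are ``glued together'' substitutes for it. The paper's key claim is: the drawing of $P$ contains no subcurve $\mathcal C$ lying in the open half-plane above the horizontal line $h$ through $N$ that connects the part of $h$ left of $N$ to the part right of $N$. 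Proof of the claim: the two edges of $P$ containing the endpoints of $\mathcal C$ cross $h$, so their tails lie below $N$; since $G$ has no internal source, each tail admits a descending path to a pole, which must be $S$ (it cannot ascend to $N$); the union of $P$ and these two descending paths then contains a cycle of $G$ enclosing $N$, contradicting that $N$ is incident to the outer face. This claim is what guarantees, after a rotation-preserving shortening of some segments, that for types in $\{E,L^c,R^c\}$ the chain $P'$ never meets the horizontal line through the bottom endpoint $p$ of the appended segment to the left of $p$, so that the closing path $Q\colon p,\,(x_{\min}-1,y),\,(x_{\min}-1,y_S),\,S$ produces a simple counterclockwise polygon, whence the rotation of $P'$ is $2\pi - 3\cdot\pi/2 + \pi/2 = \pi$; the other three types are handled symmetrically by reversing $P'$ and closing around the far right.

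A second, related flaw: you assert the closing path can be taken ``canonical \dots\ independent of the type.'' It cannot. The side on which one can close without crossing $P'$ is dictated by the North-type (far left for $\{E,L^c,R^c\}$, far right for $\{L^{cc},R^{cc},W\}$), and that dependence is exactly what produces the two signs $\pm\pi$; a single type-independent return path would make the closed curve non-simple for half the types, and the $2\pi$ fact would be inapplicable. By contrast, your worry about overlapping edges and degenerate $0$ or $2\pi$ angles is a non-issue in the paper's argument: the shortening step removes such coincidences along $P'$ without changing its rotation, and the closing path is placed strictly outside the bounding box of the drawing, so no degenerate angles arise where $P'$ and $Q$ meet.
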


\medskip

We say that the \emph{type of a path between the poles} is $(X,x)$, if $X$ is the North-type and $x$ is the South-type of the path, e.g., the type of a path that bends right at the South-pole, passes to the right of the North-pole and ends in an edge that leaves the North-pole bending to the left is $(L^c,R)$, see $P_R$ in \cref{SUBFIG:ElLr}.
For two North-types $X$ and $Y$, we say $X < Y$ if $X$ is before $Y$ in the ordering  $[R^{cc}L^{cc}WER^cL^c]$. The South-types are ordered $L < R$.
For two types $(X,x)$ and $(Y,y)$ we say that $(X,x) \leq (Y,y)$ if $X \leq Y$ and $x \leq y$, and $(X,x)  < (Y,y)$ if  $(X,x) \leq (Y,y)$ and $X < Y$ or $x < y$.

The \emph{type of a component} is determined by eight entries, whether the component is a single edge or not, the choice of the bottommost pole (South pole), the type of $P_L$, the type of $P_R$, and additionally four \emph{\textsc{free}-flags}: For each pole, two flags \emph{left-free} and \emph{right-free} indicating whether the bend on $P_L$ and $P_R$, respectively, on the edge incident to the pole  is \emph{free} on the left or the right, respectively: More precisely, let $P$ be an $S$-$N$-path and let $e$ be an edge of~$P$ incident to a pole~$X$. We say that $e$ is \emph{free} on the right (left) at $X$ if $e$ bends to the right (left)~--~walking from $S$ to $N$~--~or if the bend on $e$ is not contained in an edge not incident to~$X$. See \cref{SUBFIG:series-counterclockwise,SUBFIG:series-clockwise}.
We denote a type by $\left<(X,x),(Y,y)\right>$ where $(X,x)$ is the type of $P_L$ and $(Y,y)$ is the type of $P_R$ without explicitly mentioning the flags or the choice of the South pole.  Observe that $Y < L^c$ if $X = R^{cc}$ and $\left<(X,x),(Y,y)\right>$ is the type of a component.
\cref{FIG:parallel} illustrates how components of different types can be composed in parallel.

%

%
%

\begin{figure}
\centering
\subcaptionbox{$\left<(E,L),(L^c,R)\right>$\label{SUBFIG:ElLr}}[.2\textwidth]{\includegraphics[page=13]{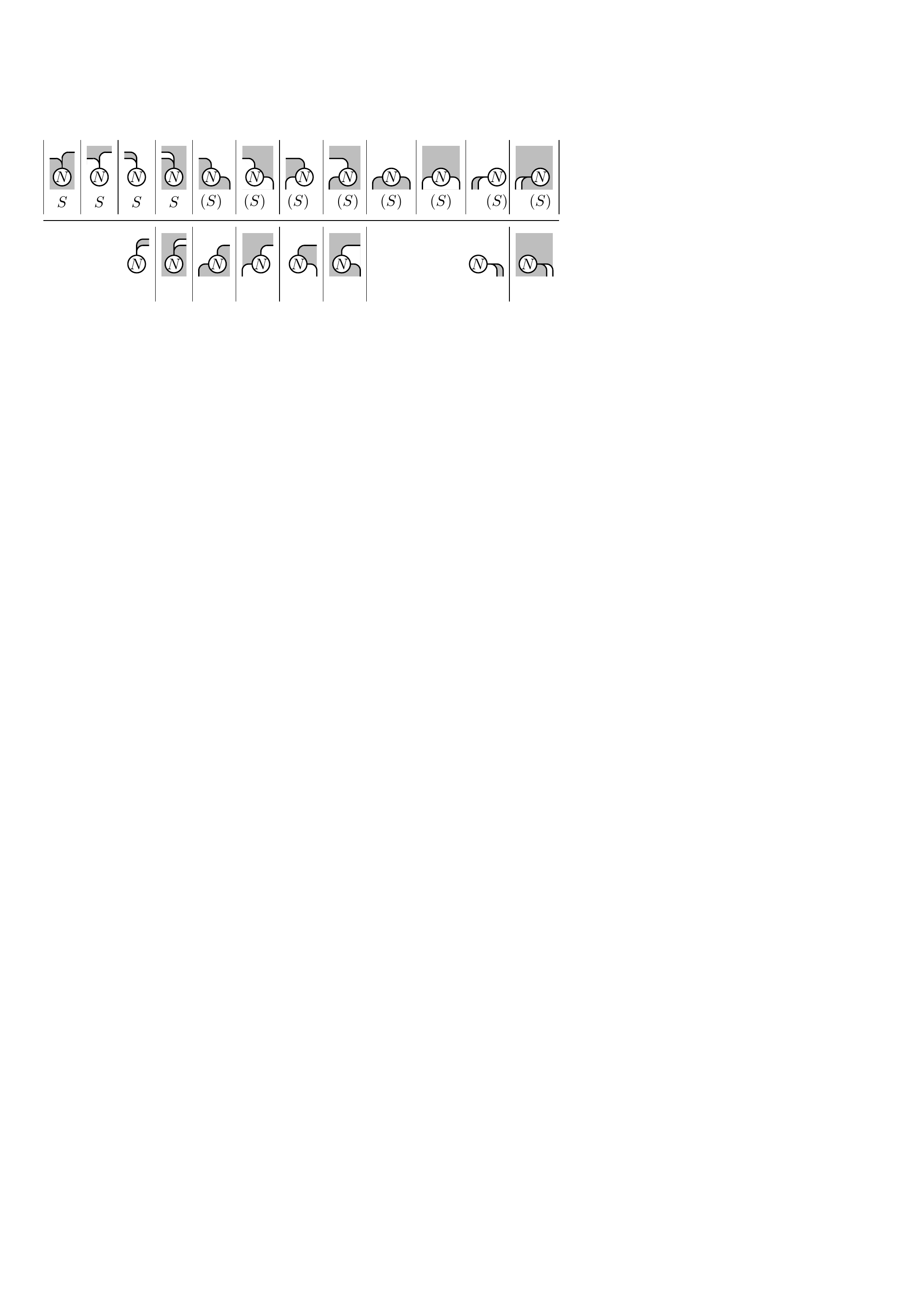}} \hfil
\subcaptionbox{$\left<(R^{cc},R),(R^{cc},R)\right>$\label{SUBFIG:series-counterclockwise}}{\includegraphics[page=3,scale=0.85]{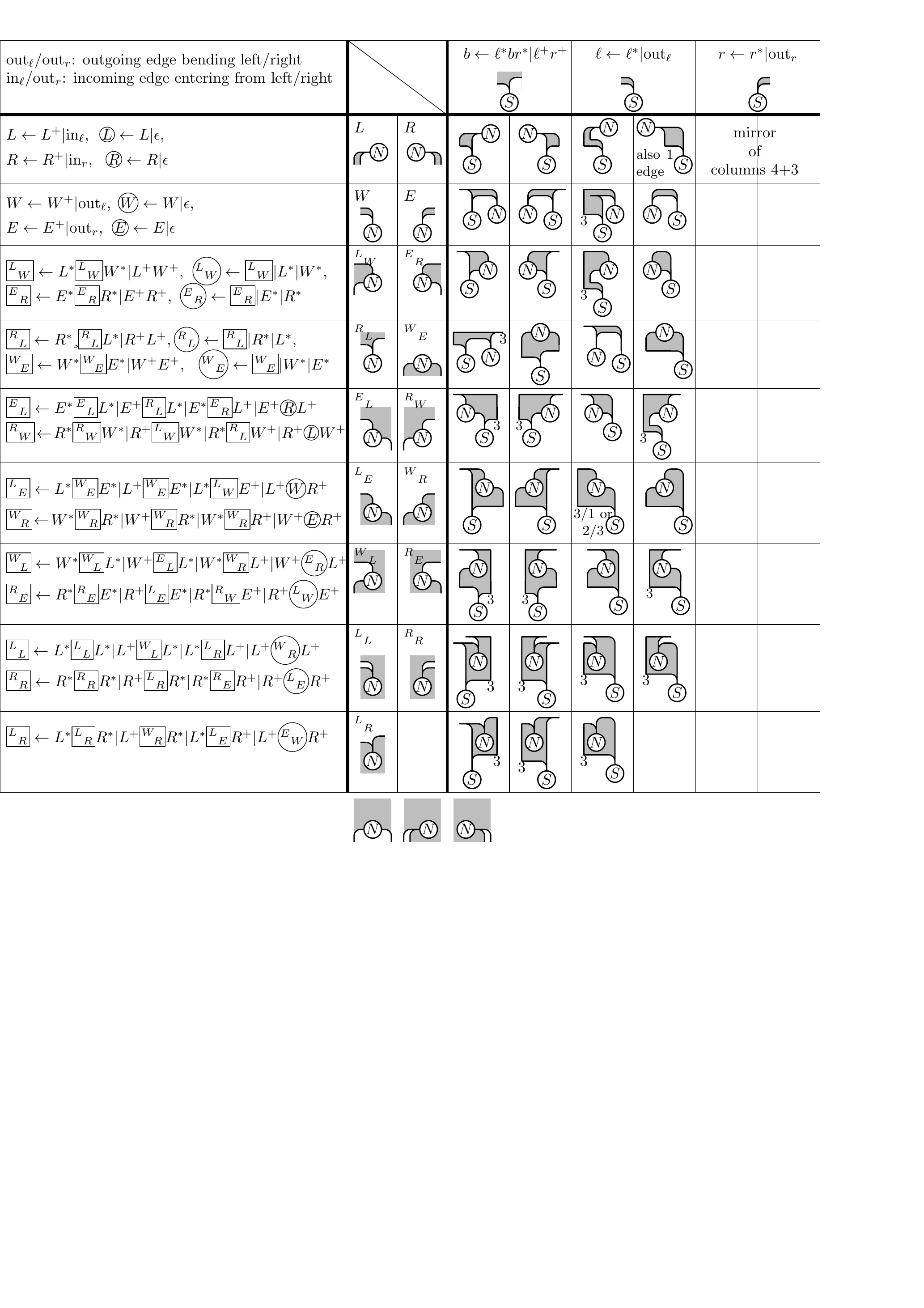}} \hfil
\subcaptionbox{$\left<(R^{c},R),(R^{c},R)\right>$\label{SUBFIG:series-clockwise}}{\includegraphics[page=2,scale=0.85]{Figures/cases_old}} \hfil
\subcaptionbox{Parallel composition\label{FIG:parallel}}{\includegraphics[page=3]{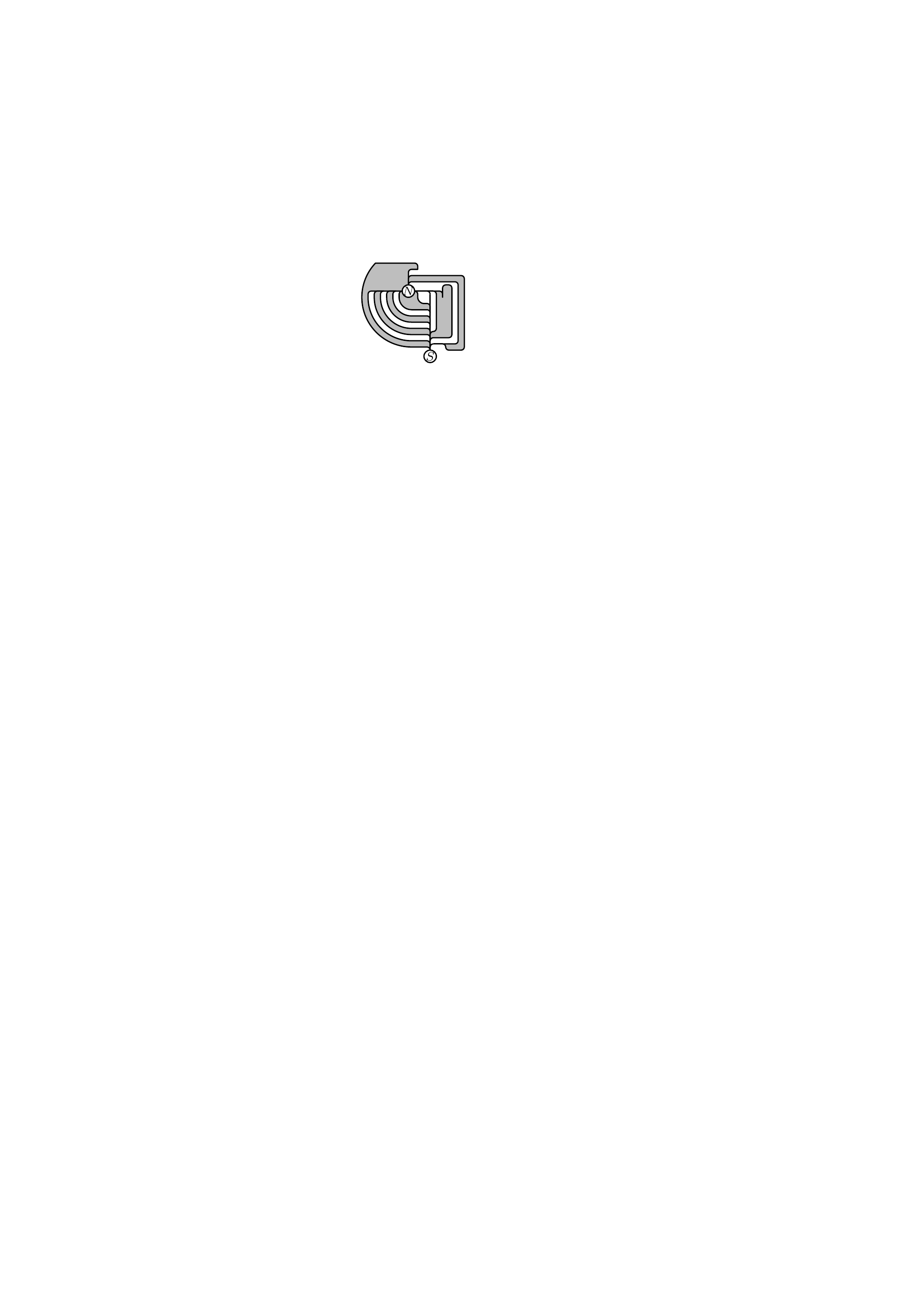}}

\caption{ (a--c) Illustrations for types of a component. (d) A parallel composition of eight components of the following types: 
$\left<(R^{cc},L),(W,L)\right>$,
$\left<(W,L),(W,L)\right>$,
$\left<(W,L),(W,L)\right>$,
$\left<(W,L),(E,L)\right>$,
$\left<(E,L),(E,L)\right>$ single edge,
$\left<(E,R),(E,R)\right>$ not left-free at $N$,
$\left<(R^c,R),(R^c,R)\right>$. The result is of type $\left<(R^{cc},L),(R^c,R)\right>$.
}
\end{figure}

\begin{restatable}[Parallel Composition ($\star$)]{lemma}{parallelComposition}\label{LEMMA:parallelComposition}

A component $C$ of type $\left<(X,x),(Y,y)\right>$ with the given four \textsc{free}-flags can be obtained as a parallel composition of components $C_1,\dots,C_\ell$ of type  \linebreak
$\left<(X_1,x_1),(Y_1,y_1)\right>,\dots,\left<(X_\ell,x_\ell),(Y_\ell,y_\ell)\right>$ from left to right at the South pole 
if and only~if
\begin{itemize}
\item 
$X_1 = X$, $Y_\ell = Y$, $x_1 = x$, $y_\ell = y$, 
\item $C$ is left(right)-free at the North- and South-pole, respectively, if and only if $C_1$ ($C_\ell$) is, 
\item $Y_i \leq X_i$ and
\begin{itemize}
\item $C_i$ is right-free if $Y_i = X_{i+1} \in \{R^{cc},E,R^c\}$
\item $C_{i+1}$ is left-free if $Y_i = X_{i+1} \in \{L^{cc},W,L^c\}$
\item $C_i$ is right-free or $C_{i+1}$ is left-free if $Y_i \in \{L^{cc},L^c\}$ and $X_{i+1} \in \{R^{cc},R^c\}$ or vice versa.
\end{itemize}
\item 
\begin{itemize}
\item $y_i = x_{i+1} = L$ and $C_i$ is right-free, or
\item $y_i = x_{i+1} = R$ and $C_{i+1}$ is left-free, or
\item $y_i = L$ and  $x_{i+1} = R$ and $C_i$ is right-free or $C_{i+1}$ is left-free.
\end{itemize} 
\item
and single edges are the first among the components with a boundary path of type $(W,R)$ and the last among the components with a boundary path of type $(E,L)$.
\end{itemize} 
\end{restatable}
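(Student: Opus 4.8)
The plan is to reduce the global feasibility of the parallel composition to a purely local, pairwise compatibility condition between consecutive boundary paths, and then to read off each listed constraint from the geometry near the two poles. First I would record the structural facts that hold for any parallel composition $C = C_1 \parallel \dots \parallel C_\ell$ formed left to right at the South pole $S$: the leftmost $S$-$N$-path of $C$ is exactly $P_L(C_1)$ and its rightmost is $P_R(C_\ell)$, so the outer type of $C$ satisfies $X_1 = X$, $x_1 = x$, $Y_\ell = Y$, $y_\ell = y$, and $C$ is left-free (right-free) at a pole precisely when $C_1$ ($C_\ell$) is. This already yields the first two items of the lemma and shows that everything else must be a condition on each adjacent pair $(C_i, C_{i+1})$: the composition admits an upward-planar L-drawing realizing this arrangement if and only if, for every $i$, the path $P_R(C_i)$ can be placed weakly to the left of $P_L(C_{i+1})$, that is, sharing both poles and allowed to overlap but not to cross.

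Next I would analyze this pairwise condition separately at $N$ and at $S$. Near the North pole the rotation lemma is the main tool: it certifies that every $S$-$N$-path approaches $N$ either from the left group $\{R^{cc}, L^{cc}, W\}$ (rotation $-\pi$ of the path augmented by a downward stub) or from the right group $\{E, R^c, L^c\}$ (rotation $+\pi$), and refining by the bend direction of the last edge pins down the left-to-right order of path-ends around $N$ to be exactly $[R^{cc} L^{cc} W E R^c L^c]$. Hence ``$P_R(C_i)$ left of $P_L(C_{i+1})$ at $N$'' is equivalent to $Y_i \le X_{i+1}$ in this order, with the understanding that when $Y_i = X_{i+1}$, or when the two paths pass on the same side of $N$, their last edges overlap; non-crossing of overlapping edges then forces the indicated \textsc{free}-flag (right-free of $C_i$ when the overlap must open to the right, left-free of $C_{i+1}$ when it must open to the left, and either one in the mixed $L/R$ case). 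The South pole is the simpler mirror: the first edges leave $S$ in the order $L$ before $R$, so $y_i \le x_{i+1}$ is needed, and equality forces the corresponding South \textsc{free}-flag exactly as in the third block of conditions. I would prove each direction of these equivalences by exhibiting the local redrawing in one direction and by a rotation/overlap argument in the other.

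I would then dispatch the single-edge degeneracy, which accounts for the last item. A single edge has $P_L = P_R$ equal to the one edge and carries only a single bend that cannot be duplicated, so although it is free on both sides it behaves specially when its boundary type is $(W,R)$ or $(E,L)$: to keep its unique bend on the outer side of every overlap it shares, such an edge of type $(W,R)$ must precede, and one of type $(E,L)$ must follow, all non-edge components with which it is adjacent. With the pairwise characterization and this degenerate case in hand, sufficiency follows by laying the $C_i$ out left to right and realizing each adjacency; the key point is that consecutive compatibility propagates to a globally planar arrangement because the North- and South-type sequences are monotone in the respective total orders, so an overlap created between $C_i$ and $C_{i+1}$ stays confined to their shared boundary and cannot reach $C_{i+2}$ or beyond.

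The hard part, I expect, will be two intertwined issues: the \textsc{free}-flag bookkeeping and the passage from local to global planarity. Verifying that pairwise non-crossing genuinely implies that the whole family $C_1, \dots, C_\ell$ can be drawn simultaneously requires showing that the overlaps permitted at each shared boundary do not accumulate into a crossing with a farther component, and this is where the monotonicity of the type orders and the rotation values must be combined into a single clean invariant. Getting every branch of the three-way \textsc{free}-flag split correct at both poles, including the subtle exclusion that $Y < L^c$ when $X = R^{cc}$ within a single component and the special role of single edges, is the fiddly combinatorial core of the argument, whereas the existence of each individual local redrawing is comparatively routine once the rotation lemma fixes the shapes.
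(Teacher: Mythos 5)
Your necessity direction is essentially the paper's: the left-to-right order of bend directions at $S$ gives $y_i \le x_{i+1}$, the ordering $[R^{cc}L^{cc}WER^cL^c]$ of path-ends around $N$ gives $Y_i \le X_{i+1}$, overlapping last/first edges force the \textsc{free}-flags, and the $(W,R)$/$(E,L)$ single-edge cases are handled separately. The gap is in sufficiency. You reduce the lemma to pairwise compatibility of consecutive components and then assert that ``consecutive compatibility propagates to a globally planar arrangement because the North- and South-type sequences are monotone,'' explicitly deferring the local-to-global step as ``the hard part.'' But that step \emph{is} the content of sufficiency, and the geometric gluing you envision does not obviously work: in a parallel composition all components share \emph{both} poles, so their drawings interleave around $S$ and $N$ rather than sitting side by side; realizing the adjacency between $C_i$ and $C_{i+1}$ may require stretching or reshaping the drawing of $C_i$, which can destroy the adjacency already realized with $C_{i-1}$. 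Pairwise realizability of overlapping L-drawings does not compose coordinate-wise, and no invariant is supplied that would make it compose.

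The paper avoids this entirely by never gluing drawings. It invokes the combinatorial characterization of upward-planar L-drawings from \cite{chaplick_etal:gd17}: a (combinatorial) assignment of angles realizes an upward-planar L-drawing if and only if (i) the angles at each vertex sum to $2\pi$, (ii) every inner face has rotation $2\pi$, and (iii) the \emph{bend-or-end} property holds. Sufficiency then consists of (a) constructing an explicit edge-to-endpoint assignment witnessing bend-or-end, where the \textsc{free}-flags are exactly what makes the assignment consistent at the shared poles (this is where the three-way case splits you listed are discharged), and (b) computing the rotation of each face $f_i$ bounded by $P_R$ of $C_i$ and $P_L$ of $C_{i+1}$ via the rotation lemma: appending a vertical stub under $N$ to both paths, one gets $\mathrm{rot}(f_i) = \mathrm{rot}(P_R') \pm \pi - \mathrm{rot}(P_L') + \pi$, and since $\mathrm{rot}(P_R') = -\mathrm{rot}(P_L')$ exactly when the two paths pass on different sides of $N$, the rotation is always $2\pi$. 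Without this characterization (or some equivalent combinatorial certificate that replaces explicit coordinates), your proof cannot be completed as sketched; with it, your pairwise reduction becomes unnecessary, since conditions (i)--(iii) are checked face by face and vertex by vertex directly.
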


\begin{proof}[Sketch of Proof]
Since the necessity of the conditions is evident, we shortly sketch how to prove sufficiency.
By construction, we ensure that the angle between two incoming edges is 0 or $\pi$ and the angle between an incoming and an outgoing edge is $\pi/2$ or $3\pi/2$. It remains to show the following three conditions \cite{chaplick_etal:gd17}:  (i) The sum of the angles at a vertex is $2\pi$, (ii) the rotation at any inner face is $2 \pi$, (iii) and the \emph{bend-or-end property} is fulfilled, i.e., there is an assignment that assigns each edge to one of its end vertices with the following property. Let~$e_1$ and $e_2$ be two incident edges that are consecutive in the cyclic order and attached to the same side of the common end vertex $v$. Let $f$ be the face/angle between $e_1$ and $e_2$. Then at least one among $e_1$ and $e_2$ is assigned to $v$ and its bend leaves a concave angle in $f$.
\end{proof}

\cref{LEMMA:parallelComposition} yields a strict order of the possible types from left to right that can be composed in parallel.
Moreover, let $\sigma$ be a sequence of types of components from left to right that can be composed in parallel and let $\tau$ be a type in $\sigma$. 
Then \cref{LEMMA:parallelComposition} implies that $\tau$ appears exactly once in $\sigma$ or the leftmost path and the rightmost path have both the same type in $\tau$ and all four free-flags are positive. In that case the type $\tau$ might occur arbitrarily many times and all appearances are consecutive.
Thus, $\sigma$ can be expressed as a \emph{simple regular expression} on an alphabet $\mathcal T$, i.e., a sequence $\rho$ of elements in $\mathcal T \cup \{^\star\}$ such that $^\star$ does not occur as the first symbol of $\rho$  and there are no two consecutive $^\star$ in $\rho$. A sequence~$s$ of elements in $\mathcal T$ is \emph{represented} by a simple regular expression $\rho$ if it can be obtained from $\rho$ by either removing the symbol preceding a $^\star$ or by repeating it arbitrarily many times. 

Observe that the elements in the simple regular expression $\rho$ representing $\sigma$ are distinct, thus, the length of $\rho$ is linear in the number of types, i.e., constant. 
In particular, to obtain a linear-time algorithm to enumerate the attainable types of a series-parallel DAG obtained via a parallel composition, it suffices to establish the following algorithmic lemma. 

\begin{restatable}[Simple Regular Expression Matching ($\star$)]{lemma}{parallelregex}\label{LEMMA:parallel-regex}
Let $\mathcal T$ be a constant-size alphabet (set of types), and $\rho$ be a constant-length simple regular expression over $\mathcal T$.
For a collection $\mathcal{C}$ of items where each $C \in \mathcal{C}$ has a set $\mathcal T(C) \subseteq \mathcal{T}$, 
one can test in $\mathcal O(|\mathcal{C}|)$ time, if there is a selection of a type from each $\mathcal T(C)$, $C \in \mathcal{C}$ that can be ordered to obtain a sequence~represented~by~$\rho$.
\end{restatable}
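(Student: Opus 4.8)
The plan is to reduce the statement to a feasibility question for a bipartite assignment (equivalently, a flow in a network of constant size) and to show that, after one $O(|\mathcal{C}|)$ preprocessing pass, this feasibility can be decided in constant time. First I would parse $\rho$ into a sequence of \emph{blocks} $B_1,\dots,B_k$ with $k=O(1)$: scanning $\rho$ left to right, each type symbol opens a new block with symbol $s_j\in\mathcal{T}$, which is a \emph{star block} (to be matched by $0$ or more copies of $s_j$) if immediately followed by $^\star$ and a \emph{mandatory block} (matched by exactly one copy of $s_j$) otherwise. This parsing is well defined precisely because $^\star$ is never the first symbol and no two $^\star$ are consecutive. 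The key observation is that, since the items may be reordered arbitrarily, a valid selection-and-ordering exists if and only if one can \emph{assign} every item $C$ to some block $B_j$ with $s_j\in\mathcal{T}(C)$ (so that $C$ is given type $s_j$) in such a way that each mandatory block receives exactly one item and each star block receives any number of items: listing the items block by block, and within a star block in any order, yields a sequence represented by $\rho$, and conversely any valid sequence induces such an assignment by mapping each position to the block it falls into.

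Next I would group the items by \emph{profile}. The profile of $C$ is the set $\{\,j : s_j\in\mathcal{T}(C)\,\}\subseteq\{1,\dots,k\}$ of blocks to which it may be assigned; since $\mathcal T$ is of constant size, this set is computed in $O(1)$ time per item, and there are at most $2^k=O(1)$ distinct profiles. A single pass over $\mathcal{C}$ thus computes, in $O(|\mathcal{C}|)$ time, the number $n_p$ of items of each profile $p$ (an item with empty profile is an immediate witness of infeasibility). The assignment now depends only on the constant-size vector $(n_p)_p$ together with the constant block data, so I would phrase it as a feasible-flow instance on a constant-size network: a source sending $n_p$ units into each profile node $p$, edges of unbounded capacity from each profile $p$ to every block contained in $p$, block-to-sink edges of capacity $[1,1]$ for mandatory blocks and $[0,\infty)$ for star blocks, and the demand that all $|\mathcal{C}|$ units reach the sink. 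As this network has $O(1)$ nodes and edges, its feasibility is decided by checking the constantly many cut (Hall-type) inequalities, each a sum of $O(1)$ terms among the $n_p$ and the small block capacities; integrality is automatic by total unimodularity of the transportation constraints. Hence the test runs in $O(1)$ time once the $n_p$ are known, giving $O(|\mathcal{C}|)$ overall.

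The main obstacle is the correct handling of the \emph{exact} demand at mandatory blocks together with the requirement that \emph{all} items be placed. A naive greedy that first saturates the mandatory blocks and then dumps the remaining items into compatible star blocks can fail, because committing an item to a mandatory block may deprive another mandatory block of its only compatible item, or may leave some item with no compatible star block to absorb it. Casting all constraints simultaneously as a single feasible flow (equivalently, a constant-size transportation LP with lower bounds) resolves this cleanly and makes the $O(1)$ post-grouping test rigorous; the only routine care needed is the standard lower-bound-to-max-flow reduction for the $[1,1]$ mandatory edges, which is harmless on a network of constant size.

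Finally, I would verify the reduction in both directions explicitly and note that the $O(|\mathcal{C}|)$ bound splits into $O(1)$ for parsing $\rho$ and evaluating the cut conditions, plus $O(|\mathcal{C}|)$ for the grouping pass, so no super-linear step is hidden. This matches exactly the use in the parallel-composition step, where the alphabet of types and the expression $\rho$ are of constant size.
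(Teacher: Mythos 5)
Your proof is correct, but it runs on a different engine than the paper's. Both proofs start from the same reformulation (your block-assignment view is equivalent to the paper's counting view, where each type $\tau$ must be selected exactly $\aleph(\tau)$ times if $\tau^\star$ does not occur in $\rho$, and at least $\aleph(\tau)$ times otherwise), but they diverge afterwards. The paper argues combinatorially: it splits the types into those compatible with at most $2|\rho|$ items ($\mathcal T_{few}$) and the rest ($\mathcal T_{many}$), observes that only constantly many items can attain a type in $\mathcal T_{few}$, brute-force enumerates all ways to satisfy the exact demands of $\mathcal T_{few}$ on those items, and for each such partial selection assigns the remaining items greedily to starred types, handles the $\leq|\rho|$ leftover items whose admissible types are all exact by a constant-size check, and finally ``greedily reallocates'' items to meet the demands of $\mathcal T_{many}$, using that each such type has more than $2|\rho|$ compatible items. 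You instead aggregate items into at most $2^{|\rho|}$ profiles, count multiplicities $n_p$ in one linear pass, and decide feasibility of the resulting constant-size transportation problem with lower bounds via Gale--Hoffman/Hall cut conditions plus flow integrality. The trade-off: the paper's argument is elementary and needs no flow or LP machinery, but its final reallocation step is stated quite informally (``we can now greedily reallocate (as needed)''), and its threshold-plus-enumeration structure is exactly the kind of case analysis your single feasibility test replaces; your formulation makes the simultaneous satisfaction of the exact demands and the ``every item must be placed'' constraint rigorous in one shot, and it generalizes for free to blocks with arbitrary constant lower and upper bounds. Both achieve $\mathcal O(|\mathcal C|)$ overall, with all post-grouping work constant-time (assuming, as usual, unit-cost arithmetic on the counts $n_p \leq |\mathcal C|$).
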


\begin{corollary}The types of a parallel composition
can be computed in time linear in the number of its children.
\end{corollary}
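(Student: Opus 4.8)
The plan is to reduce the computation of the feasible types of a parallel composition to a constant number of calls of the matching procedure from \cref{LEMMA:parallel-regex}. First I would observe that the number of possible types of a single component is a constant: a type is fixed by the single-edge flag, the choice of the South pole (which, for a child of a $P$-node, is forced to be the South pole of the composition, as noted in \cref{SEC:preliminaries}), the North- and South-types of $P_L$ and $P_R$ (taken from the six North-types and two South-types), and the four \textsc{free}-flags. Hence the alphabet $\mathcal T$ of component types has constant size, every child $C_i$ comes with a feasible-type set $\mathcal T(C_i)\subseteq\mathcal T$ of constant size, and there are only $O(1)$ candidate output types $\tau=\left<(X,x),(Y,y)\right>$ for the composition $C$.

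For each fixed candidate output type $\tau$, I would characterize the left-to-right sequences of child types that realize $\tau$. By \cref{LEMMA:parallelComposition}, such a sequence $C_1,\dots,C_\ell$ must begin with a child whose left path has type $(X,x)$ with the prescribed left \textsc{free}-flags, end with a child whose right path has type $(Y,y)$ with the prescribed right \textsc{free}-flags, and satisfy between consecutive children the stated compatibility conditions ($Y_i\le X_i$, the free-flag transition rules at $N$, the South-type transition rules, and the placement of single edges among the $(W,R)$- and $(E,L)$-components). As discussed before \cref{LEMMA:parallel-regex}, these conditions impose a strict order on $\mathcal T$ and force each type to occur exactly once, the only exception being the degenerate case in which $P_L$ and $P_R$ of $C$ have the same type and all four flags are positive, where a single type may repeat consecutively. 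Consequently the realizing sequences are exactly those represented by a fixed, constant-length simple regular expression $\rho_\tau$ over $\mathcal T$: its symbols list the types in this strict order, the boundary symbols are pinned by $(X,x)$ and $(Y,y)$, omissible and repeatable types are marked with ${}^\star$, and the flag- and single-edge constraints are absorbed into the alphabet (each symbol carries its flags) and into the order.

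With $\rho_\tau$ in hand, applying \cref{LEMMA:parallel-regex} to the collection $\mathcal C=\{C_1,\dots,C_\ell\}$ decides in $O(\ell)$ time whether some choice of one feasible type per child can be ordered into a sequence represented by $\rho_\tau$, that is, whether $\tau$ is attainable for $C$. Running this test for each of the $O(1)$ candidate types $\tau$ and collecting the successful ones yields the full set of types of $C$ in total time $O(\ell)$, as claimed. I expect the main work to lie not in the complexity bookkeeping but in justifying the equivalence between ``realizes $\tau$'' and ``is represented by $\rho_\tau$'': one has to verify, for each of the constantly many output types, that all of the free-flag, South-type, and single-edge placement conditions of \cref{LEMMA:parallelComposition} are faithfully captured by the order of $\rho_\tau$ and by the encoding of the flags into the symbols of $\mathcal T$, so that no spurious sequence is admitted and no valid one is excluded. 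Since $\mathcal T$ and each $\rho_\tau$ are of constant size, this reduces to a finite case analysis.
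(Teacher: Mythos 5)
Your proposal is correct and follows essentially the same route as the paper: both use \cref{LEMMA:parallelComposition} to argue that the admissible left-to-right sequences of child types are exactly those represented by constant-length simple regular expressions over the constant-size type alphabet (with repetition allowed only in the all-flags-positive, equal-boundary-path case), and then invoke \cref{LEMMA:parallel-regex} a constant number of times to obtain the $O(\ell)$ bound. The only difference is presentational: you make explicit the iteration over the $O(1)$ candidate output types $\tau$ and the construction of one expression $\rho_\tau$ per candidate, which the paper leaves implicit in the discussion preceding \cref{LEMMA:parallel-regex}.
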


In order to understand how the type of a series composition is determined from the types of the children, let us first have a look at an easy example: Assume that $G_1$ and $G_2$ consist both of a single edge $e_1$ and $e_2$, respectively, and that the type of both is $(W,R)$. Assume further that $G$ is obtained by merging the North poles $N_1$ and $N_2$ of $G_1$ and $G_2$, respectively. There are two ways how this can be done, namely $e_1$ can be attached to $N_1 = N_2$ before~$e_2$ or after it in the counterclockwise order starting from $R^{cc}$ and ending at $L^c$. In the first case the North type of $G$ is $R^{cc}$, in the second case it is $R^{c}$. Moreover, in the first case $G$ is left-free but not right-free at the North-pole, while in the second case it is right-free but not left-free at the South-pole. See \cref{SUBFIG:series-counterclockwise,SUBFIG:series-clockwise}.

\begin{figure}
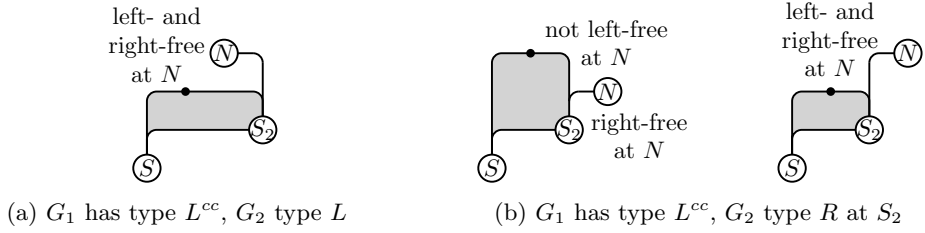

\centering
\subcaptionbox{\label{SUBFIG:series-L}$G_1$ has type $L^{cc}$, $G_2$ type $L$}[0.45\linewidth]{
\includegraphics[page=4,scale=0.9]{Figures/cases_old}}
\subcaptionbox{\label{SUBFIG:series-LR}$G_1$ has type $L^{cc}$, $G_2$ type $R$ at $S_2$}[0.45\linewidth]{
\includegraphics[page=6,scale=0.9]{Figures/cases_old} \hfil
\includegraphics[page=7,scale=0.9]{Figures/cases_old}}
\caption{\label{FIG:free} Different free-flags in the case that the North pole is merged with the South pole }
\end{figure}

\begin{lemma}[Series Composition]\label{LEMMA:seriesComposition}
Let $G_1$ and $G_2$ be two series-parallel DAGs with no internal source that admit an upward-planar L-drawing of a certain type $\left<(X_1,x_1),(Y_1,y_1)\right>$ and $\left<(X_2,x_2),(Y_2,y_2)\right>$, respectively, with the poles on the outer face. 
Let $G$ be the DAG obtained by a series combination of  $G_1$ and $G_2$ such that the common pole of $G_1$ and $G_2$ is not a source in both, $G_1$ and $G_2$. Then the possible types of $G$ in an upward-planar L-drawing maintaining the types of $G_1$ and $G_2$ can be determined in constant time.
\end{lemma}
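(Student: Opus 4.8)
The plan is to reduce the statement to a finite case analysis driven by the vertical position of the common pole $c$ of $G_1$ and $G_2$ relative to the two poles of the composed graph $G$. Recall first that the lowest pole of any component is necessarily a source (it has no vertex below it, so all its edges point upward), while the highest pole is either a sink or carries incoming edges. Since $c$ is internal in $G$ and hence not a source of $G$, and since by hypothesis $c$ is a source of at most one of $G_1,G_2$, the possibility that $c$ is the common South (lowest) pole of both components is exactly the excluded one, as it would make $c$ a source of both and thus an internal source of $G$. Therefore only two situations remain: (a) $c$ lies strictly between the poles of $G$, in which case, up to swapping $G_1$ and $G_2$, the vertex $c$ is the (sink) North pole of $G_1$ and the (source) South pole of $G_2$, with $G_1$ drawn below $G_2$; and (b) $c$ lies above both poles of $G$, i.e., $c$ is the common North pole of $G_1$ and $G_2$ and becomes an internal sink, so that the two South poles $S_1,S_2$ of $G_1,G_2$ become the poles of $G$.

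In case (a) the first step is to note that the two boundary paths of $G$ are the concatenations $P_L(G)=P_L(G_1)\cdot P_L(G_2)$ and $P_R(G)=P_R(G_1)\cdot P_R(G_2)$ through $c$. Consequently each boundary path of $G$ inherits its South-type from $G_1$ and its North-type from $G_2$, giving the candidate type $\langle(X_2,x_1),(Y_2,y_1)\rangle$. The real work is to decide which candidate is actually realizable and with which four \textsc{free}-flags: this amounts to checking that the way $G_1$'s boundary paths arrive at $c$ (captured by $X_1,Y_1$ and $G_1$'s North flags) is compatible with the way $G_2$'s boundary paths leave $c$ (captured by $x_2,y_2$ and $G_2$'s South flags), so that $c$ carries a planar, bimodal L-configuration in which overlapping bends never become crossings. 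I would verify this by re-using the three local conditions already isolated in the proof of \cref{LEMMA:parallelComposition} — angle sum $2\pi$ at $c$, rotation $2\pi$ of every newly created inner face, and the bend-or-end property — and the resulting flag assignments are exactly those illustrated in \cref{FIG:free}.

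In case (b) the two components fold around the common top vertex $c$: one becomes the left and the other the right part of $G$, and the lower of $S_1,S_2$ is the South pole while the higher is the North pole. Since every $S$-$N$-path of $G$ must pass through $c$, which lies above the North pole of $G$, each boundary path first climbs one component up to $c$ and then descends the other to the North pole, so it passes either to the left or to the right of that pole. Deciding which, and hence fixing the superscripts $c$ versus $cc$ of the North-types of $G$, is precisely what the rotation lemma delivers: the rotation of the augmented chain determines whether the path wraps clockwise or counter-clockwise around $N$. The two admissible cyclic orders at $c$ (attaching $G_1$ before or after $G_2$) produce the two complementary outcomes depicted in \cref{SUBFIG:series-counterclockwise,SUBFIG:series-clockwise}, each with its own pair of North/South \textsc{free}-flags.

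Because in both cases the input types, the output type, and all flags range over constant-size sets, the finitely many admissible combinations can be tabulated once and looked up, which yields the claimed constant-time bound (and, summed over an $S$-node processed pairwise, linear total time). I expect the main obstacle to be case (b): correctly translating the rotation value from the rotation lemma into the right $c/cc$ superscript for each boundary path while simultaneously certifying that the folded drawing remains planar and that the \textsc{free}-flags on the two sides of $c$ are set consistently. Case (a) should be comparatively routine once its local compatibility at $c$ is phrased in the \textsc{free}-flag language developed for \cref{LEMMA:parallelComposition}.
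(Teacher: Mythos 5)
Your proposal is correct and takes essentially the same route as the paper's proof: the same case split on the common pole (North pole of $G_1$ identified with the South pole of $G_2$, versus the two North poles identified), the same inheritance of the South-type from the lower component and the North-type from the upper one, the same resolution of the $c$/$cc$ superscripts via the choice of which component is attached first in the cyclic order at the shared pole, and the same constant-time lookup conclusion, with the paper merely spelling out the explicit compatibility conditions on the types and \textsc{free}-flags that you defer to a finite tabulation. Two harmless slips: the common pole in your case (a) need not be a \emph{sink} of $G_1$ (it only must not be a source, so it may also have outgoing edges in $G_1$), and your background claim that the highest pole is a sink or carries incoming edges is false in general (in the shared-North-pole case the North pole of the composition is a source), but your case analysis never relies on either.
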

\begin{proof}
Let $S_i$ and $N_i$, respectively,  be the South and North pole of $G_i$, $i=1,2$. We may assume that $S_1$ is the South pole of $G$ and, thus, $N_1$ is the common pole of $G_1$ and $G_2$. 

First suppose that $N_1=S_2$, i.e., that $N_2$ is the North pole of $G$. It follows that $N_1$ cannot be  a source of $G_1$.
Then $G$ admits an upward-planar L-drawing if and only if $x_2 = L$ and $X_1 \neq R^{cc}$ or $y_2 = R$ and $Y_1 \neq L^{c}$, and the respective \textsc{free}-conditions are fulfilled at $N_1=S_2$.
The South-type of $G$ is the South type of $G_1$. The North-type of $G$ is the North-type of $G_2$ except for the \textsc{free}-flags, which might have to be updated if the next-to-last edge on the leftmost or rightmost path, respectively, is already contained in $G_1$ and is an outgoing edge of $N_1$. This might yield different North-types concerning the flags. See \cref{FIG:free}.

Now suppose that $N_1=N_2$. Then $G$ admits an upward-planar L-drawing if and only if $G_1$ can be embedded before $G_2$, i.e., $Y_1 \leq X_2$ and $X_1 < Y_2$, and not ($X_1 = R^{cc}$ and  $Y_2 = L^c$)  or $G_2$ can be embedded before~$G_1$, i.e., $Y_2 \leq X_1$ and $X_2 < Y_1$, and not ($X_2 = R^{cc}$ and  $Y_1 = L^c$) and the respective \textsc{free}-conditions are fulfilled at $N_1=N_2$. If $X_1 = Y_1 = X_2 = Y_2\in\{E,W\}$ then both conditions are fulfilled which might give rise to two upward-planar L-drawings with distinct labels by adding $G_2$ before or after $G_1$ at the common pole.
The \textsc{free}-flags might have to be updated if the second edge on the leftmost or rightmost path, respectively, is already contained in $G_2$ or if the next-to-last edge on the leftmost or rightmost path, respectively, is already contained in $G_1$ and the type of $G_1$ at $N_1$ equals the respective type of $G_2$ at $N_2$. Except for the flags, the South-type of $G$ is the South type of $G_1$ and the North type of $G_2$ yields the North type of $G$ except for the specifications $c$ or $cc$: First observe that both, the leftmost path and the rightmost path, either have both type $c$ or both type $cc$. Otherwise, $G_1$ would be contained in an inner face of $G_2$. The North type of both paths is indexed $c$ 
if $G_2$ is embedded before $G_1$. Otherwise, the North type is indexed $cc$. Regarding the time complexity, observe that our computation of the set of possible types of $G$ does not depend on the size of $G_1$ and $G_2$, but only on the number of types in their admissible sets. Since these sets have constant size and the above conditions on the types of $G_1$ and $G_2$ can be tested in constant time, we thus output the desired set in constant time.
\end{proof}

\begin{lemma}
The types of a series composition
can be computed in time linear in the number of its children.
\end{lemma}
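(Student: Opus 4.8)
The plan is to reduce the cost of a series composition to the cost already incurred by its children, mirroring the structure of the preceding series-composition lemma. First I would recall that the decomposition-tree $T$ has $\mathcal O(|E|)$ nodes in total, so to obtain an overall linear-time algorithm it suffices to bound the work at each $S$-node by the number of its children, and then sum over all nodes. The key observation is that \cref{LEMMA:seriesComposition} shows the types resulting from combining \emph{two} adjacent components can be computed in constant time, independently of the sizes of the two subgraphs. A general $S$-node with children $\mu_1,\dots,\mu_\ell$ corresponds to a left-to-right sequential merge, so the natural approach is to process the children one at a time, maintaining a running set of feasible types for the prefix $G(\mu_1)\circ\cdots\circ G(\mu_i)$ and extending it by $G(\mu_{i+1})$.

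The central step I would carry out is the following incremental computation. I maintain a set $\mathcal T_i$ of feasible types for the partially composed graph after merging the first $i$ children. Starting from $\mathcal T_1$, which is just the (constant-size) type set of the first child, I compute $\mathcal T_{i+1}$ from $\mathcal T_i$ and the type set of $\mu_{i+1}$ by applying \cref{LEMMA:seriesComposition} to every pair consisting of one type in $\mathcal T_i$ and one type in the admissible set of $\mu_{i+1}$. Since both sets have constant size and each pairwise combination is evaluated in constant time, each such extension step costs $\mathcal O(1)$ time, and the resulting $\mathcal T_{i+1}$ again has constant size. Performing $\ell-1$ such steps yields the final type set $\mathcal T_\ell$ of $G$ in $\mathcal O(\ell)$ time, i.e.\ linear in the number of children, as claimed. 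I would note explicitly that because the number of distinct types is a fixed constant, each $\mathcal T_i$ is bounded by that constant and the running sets never blow up.

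I would then argue correctness of this left-to-right accumulation. Since a series composition of $\ell$ components is obtained by successively identifying consecutive poles, any upward-planar L-drawing of $G(\mu)$ restricts to upward-planar L-drawings of every prefix $G(\mu_1)\circ\cdots\circ G(\mu_i)$; conversely, by repeatedly invoking \cref{LEMMA:seriesComposition}, any feasible type of a prefix that can be extended by the next child gives a feasible type of the longer prefix. The hypothesis that the common pole is not a source in both incident components is inherited at each merge from the rooting convention on $T$ (no internal vertex of $G(\mu)$ is a source, and the shared pole of two consecutive components is internal to the composition), so \cref{LEMMA:seriesComposition} applies at every step. Hence $\mathcal T_\ell$ is exactly the set of feasible types attainable by $G(\mu)$, and the incremental scheme computes it correctly.

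The main obstacle I anticipate is bookkeeping the \textsc{free}-flags and the $c/cc$ specifications across successive merges rather than the asymptotic bound itself. \cref{LEMMA:seriesComposition} already specifies how the flags of the two children determine those of their combination, but when accumulating over many children one must ensure that the flag updates composed left to right agree with the flags one would obtain by considering the whole composition at once; I would handle this by verifying that the flag-update rules in \cref{LEMMA:seriesComposition} are associative in the relevant sense, so that the order of pairwise merges does not affect the final set of feasible types. Once associativity is in place, the linear-time bound follows immediately by summing $\mathcal O(\ell)$ over all $S$-nodes of $T$.
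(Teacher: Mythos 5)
Your proposal is correct and takes essentially the same approach as the paper: both compute, for $k=1,\dots,\ell$, the constant-size set $\mathcal T_k$ of feasible types of the series combination of the first $k$ children by pairing each type in $\mathcal T_{k-1}$ with each type in $\mathcal T(C_k)$ and applying \cref{LEMMA:seriesComposition}, at constant cost per step. Your additional remarks on prefix correctness and on the associativity of the \textsc{free}-flag updates go beyond what the paper spells out, but the core dynamic-programming argument is identical.
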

\begin{proof}
Let $C_1,\dots,C_\ell$ be the components of a series component $C$ and let $\mathcal T(C_i)$, $i=1,\dots,\ell$ be the set of possible types of $C_i$. For $k=1,\dots,\ell$, we inductively compute the set $\mathcal T_i$ of possible types of the series combination $C^k$ of $C_1,\dots,C_k$, where $\mathcal T_1 = \mathcal T(C_1)$. To compute $\mathcal T_{k}$ for some $k=2,\dots,\ell$, we combine all possible combinations of a type in $\mathcal T_{k-1}$ and a type in $\mathcal T(C_k)$ and, applying \cref{LEMMA:seriesComposition}, we check in constant time which types (if any) they would yield for $C^k$. Since the  number types is constant each step can be done in constant time.
\end{proof}

\subparagraph{Single Sink.}
For the case that $G$ has a single sink, the algorithmic principles are the same as in the single-source case. The main difference is the type of an $N$-$S$-path $P$ in a component~$C$, where $S$ and $N$ are the South- and North-pole of $C$. The North pole of a component is always a sink and the North-type of $P$ is $W$ or $E$ in this order from left to right. The South-type is one among $E^{c},W^{c},L,R,E^{cc},W^{cc}$ in this order from left to right (according to the outgoing edges at $N$), depending on whether the last edge of $P$ (traversed from $N$ to $S$) is an incoming edge entering from the left (W) or the right (E), or an outgoing edge bending to the left (L) or the right (R), and whether the last edge of $P$ leaving the half-space above the horizontal line through $S$ does so to the right of $S$ (cc) or the~left~of~$S$~(c). 

The type consists again of the choice of the topmost pole (North pole), the type of the leftmost $N$-$S$-path, the type of the rightmost $N$-$S$-path, the four \textsc{free}-flags~--~which are defined the same way as in the single source case~--~and whether the component is a single edge or~not.


\section{Conclusion and Future  Work}

We have shown how to decide in linear time whether a plane single-source or -sink DAG admits an upward-planar L-drawing. A natural extension of this work would be to consider plane DAGs with multiple sinks and sources, the complexity of which is open. In the variable embedding setting, we have presented a linear-time testing algorithm for single-source or -sink series-parallel DAGs. Some next directions are to consider general single-source or -sink DAGs or general series-parallel DAGs. We remark that the complexity of testing for the existence of  upward-planar L-drawings in general also remains open.

\bibliographystyle{plainurl}
\bibliography{upl}

\begin{thebibliography}{10}

\bibitem{angelini_etal:wg20}
Patrizio Angelini, Michael~A. Bekos, Henry F{\"{o}}rster, and Martin Gronemann.
\newblock Bitonic st-orderings for upward planar graphs: The variable embedding
  setting.
\newblock In Isolde Adler and Haiko M{\"{u}}ller, editors, {\em Graph-Theoretic
  Concepts in Computer Science - 46th International Workshop, {WG} 2020},
  volume 12301 of {\em LNCS}, pages 339--351. Springer, 2020.
\newblock \href {https://doi.org/10.1007/978-3-030-60440-0\_27}
  {\path{doi:10.1007/978-3-030-60440-0\_27}}.

\bibitem{DBLP:journals/corr/abs-2205-05627}
Patrizio Angelini, Steven Chaplick, Sabine Cornelsen, and Giordano {Da Lozzo}.
\newblock On upward-planar {L}-drawings of graphs.
\newblock {\em CoRR}, abs/2205.05627, 2022.
\newblock \href {http://arxiv.org/abs/2205.05627} {\path{arXiv:2205.05627}},
  \href {https://doi.org/10.48550/arXiv.2205.05627}
  {\path{doi:10.48550/arXiv.2205.05627}}.

\bibitem{angelini_etal:gd20}
{Patrizio} {Angelini}, {Steven} {Chaplick}, {Sabine} {Cornelsen}, and
  {Giordano} {Da Lozzo}.
\newblock Planar {L}-drawings of bimodal graphs.
\newblock {\em Journal of Graph Algorithms and Applications}, 26(3):307--334,
  2022.
\newblock \href {https://doi.org/10.7155/jgaa.00596}
  {\path{doi:10.7155/jgaa.00596}}.

\bibitem{angeliniLdrawings}
Patrizio Angelini, Giordano {Da Lozzo}, Marco {Di Bartolomeo}, Valentino {Di
  Donato}, Maurizio Patrignani, Vincenzo Roselli, and Ioannis~G. Tollis.
\newblock Algorithms and bounds for {L}-drawings of directed graphs.
\newblock {\em Int. J. Found. Comput. Sci.}, 29(4):461--480, 2018.
\newblock \href {https://doi.org/10.1142/S0129054118410010}
  {\path{doi:10.1142/S0129054118410010}}.

\bibitem{bertolazzi_flow}
Paola Bertolazzi, Giuseppe {Di Battista}, Giuseppe Liotta, and Carlo Mannino.
\newblock Upward drawings of triconnected digraphs.
\newblock {\em Algorithmica}, 12(4):476--497, 1994.
\newblock \href {https://doi.org/10.1007/BF01188716}
  {\path{doi:10.1007/BF01188716}}.

\bibitem{bertolazzi_singleSource}
Paola Bertolazzi, Giuseppe {Di Battista}, Carlo Mannino, and Roberto Tamassia.
\newblock Optimal upward planarity testing of single-source digraphs.
\newblock {\em SIAM Journal on Computing}, 27(1):132--169, 1998.
\newblock \href {https://doi.org/10.1137/S0097539794279626}
  {\path{doi:10.1137/S0097539794279626}}.

\bibitem{binucci_etal:socg19}
Carla Binucci, Giordano {Da Lozzo}, Emilio {Di Giacomo}, Walter Didimo, Tamara
  Mchedlidze, and Maurizio Patrignani.
\newblock Upward book embeddings of st-graphs.
\newblock In Gill Barequet and Yusu Wang, editors, {\em 35th International
  Symposium on Computational Geometry, SoCG 2019}, volume 129 of {\em LIPIcs},
  pages 13:1--13:22. Schloss Dagstuhl - Leibniz-Zentrum f{\"{u}}r Informatik,
  2019.
\newblock \href {https://doi.org/10.4230/LIPIcs.SoCG.2019.13}
  {\path{doi:10.4230/LIPIcs.SoCG.2019.13}}.

\bibitem{borradaile_planarFlow}
Glencora Borradaile, Philip~N. Klein, Shay Mozes, Yahav Nussbaum, and Christian
  Wulff-Nilsen.
\newblock Multiple-source multiple-sink maximum flow in directed planar graphs
  in near-linear time.
\newblock {\em SIAM Journal on Computing}, 46(4):1280–1303, 2017.
\newblock \href {https://doi.org/10.1137/15M1042929}
  {\path{doi:10.1137/15M1042929}}.

\bibitem{up-trees}
Guido Br\"uckner, Markus Himmel, and Ignaz Rutter.
\newblock An {SPQR}-tree-like embedding representation for upward planarity.
\newblock In D.~Archambault and C.~D. Toth, editors, {\em Graph Drawing and
  Network Visualization (GD'19)}, volume 11904 of {\em LNCS}, pages 517--531.
  Springer, 2016.
\newblock \href {https://doi.org/10.1007/978-3-030-35802-0\_39}
  {\path{doi:10.1007/978-3-030-35802-0\_39}}.

\bibitem{chaplick_etal:gd17}
Steven Chaplick, Markus Chimani, Sabine Cornelsen, Giordano {Da Lozzo}, Martin
  N{\"o}llenburg, Maurizio Patrignani, Ioannis~G.\ Tollis, and Alexander Wolff.
\newblock Planar {L}-drawings of directed graphs.
\newblock In Fabrizio Frati and Kwan-Liu Ma, editors, {\em GD~2017}, volume
  10692 of {\em LNCS}, pages 465--478. Springer, 2018.
\newblock \href {https://doi.org/10.1007/978-3-319-73915-1\_36}
  {\path{doi:10.1007/978-3-319-73915-1\_36}}.

\bibitem{DBLP:conf/compgeom/ChaplickGFGRS22}
Steven Chaplick, Emilio {Di Giacomo}, Fabrizio Frati, Robert Ganian,
  Chrysanthi~N. Raftopoulou, and Kirill Simonov.
\newblock Parameterized algorithms for upward planarity.
\newblock In Xavier Goaoc and Michael Kerber, editors, {\em 38th International
  Symposium on Computational Geometry, SoCG 2022, June 7-10, 2022, Berlin,
  Germany}, volume 224 of {\em LIPIcs}, pages 26:1--26:16. Schloss Dagstuhl -
  Leibniz-Zentrum f{\"{u}}r Informatik, 2022.
\newblock \href {https://doi.org/10.4230/LIPIcs.SoCG.2022.26}
  {\path{doi:10.4230/LIPIcs.SoCG.2022.26}}.

\bibitem{DT-aprad-88}
Giuseppe {Di Battista} and Roberto Tamassia.
\newblock Algorithms for plane representations of acyclic digraphs.
\newblock {\em Theor. Comput. Sci.}, 61:175--198, 1988.
\newblock \href {https://doi.org/10.1016/0304-3975(88)90123-5}
  {\path{doi:10.1016/0304-3975(88)90123-5}}.

\bibitem{DBLP:journals/dcg/BattistaTT92}
Giuseppe {Di Battista}, Roberto Tamassia, and Ioannis~G. Tollis.
\newblock Area requirement and symmetry display of planar upward drawings.
\newblock {\em Discret. Comput. Geom.}, 7:381--401, 1992.
\newblock \href {https://doi.org/10.1007/BF02187850}
  {\path{doi:10.1007/BF02187850}}.

\bibitem{Didimo_SPupward}
Walter Didimo, Francesco Giordano, and Giuseppe Liotta.
\newblock Upward spirality and upward planarity testing.
\newblock {\em SIAM Journal on Discrete Mathematics}, 23(4):1842--1899, 2010.
\newblock \href {https://doi.org/10.1137/070696854}
  {\path{doi:10.1137/070696854}}.

\bibitem{fratiJCGA08}
Fabrizio Frati.
\newblock On minimum area planar upward drawings of directed trees and other
  families of directed acyclic graphs.
\newblock {\em International Journal of Computational Geometry \&
  Applications}, 18(3):251--271, 2008.
\newblock \href {https://doi.org/10.1007/978-3-540-74839-7\_13}
  {\path{doi:10.1007/978-3-540-74839-7\_13}}.

\bibitem{GargTamassia01}
Ashim Garg and Roberto Tamassia.
\newblock On the computational complexity of upward and rectilinear planarity
  testing.
\newblock {\em {SIAM} J. Comput.}, 31(2):601--625, 2001.
\newblock \href {https://doi.org/10.1137/S0097539794277123}
  {\path{doi:10.1137/S0097539794277123}}.

\bibitem{gronemann:gd16}
Martin Gronemann.
\newblock Bitonic $st$-orderings for upward planar graphs.
\newblock In Yifan Hu and Martin N{\"o}llenburg, editors, {\em Graph Drawing
  and Network Visualization (GD'16)}, volume 9801 of {\em LNCS}, pages
  222--235. Springer, 2016.
\newblock Available at \url{https://arxiv.org/abs/1608.08578}.

\bibitem{gutwenger/mutzel:gd00}
Carsten Gutwenger and Petra Mutzel.
\newblock A linear time implementation of {SPQR}-trees.
\newblock In Joe Marks, editor, {\em Graph Drawing, 8th International
  Symposium, {GD} 2000, Colonial Williamsburg, VA, USA, September 20-23, 2000,
  Proceedings}, volume 1984 of {\em Lecture Notes in Computer Science}, pages
  77--90. Springer, 2000.
\newblock \href {https://doi.org/10.1007/3-540-44541-2\_8}
  {\path{doi:10.1007/3-540-44541-2\_8}}.

\bibitem{hutton/lubiw:96}
Michael~D. Hutton and Anna Lubiw.
\newblock Upward planar drawing of single-source acyclic digraphs.
\newblock {\em SIAM Journal on Computing}, 25(2):291--311, 1996.
\newblock \href {https://doi.org/10.1137/S0097539792235906}
  {\path{doi:10.1137/S0097539792235906}}.

\bibitem{DBLP:journals/vlc/Purchase02}
Helen~C. Purchase.
\newblock Metrics for graph drawing aesthetics.
\newblock {\em J. Vis. Lang. Comput.}, 13(5):501--516, 2002.
\newblock \href {https://doi.org/10.1006/jvlc.2002.0232}
  {\path{doi:10.1006/jvlc.2002.0232}}.

\bibitem{sugiyama_framework}
Kozo Sugiyama, Shôjirô Tagawa, and Mitsuhiko Toda.
\newblock Methods for visual understanding of hierarchical system structures.
\newblock {\em IEEE Transactions on Systems, Man, and Cybernetics},
  SMC-11(2):109--125, 1981.
\newblock \href {https://doi.org/10.1109/TSMC.1981.4308636}
  {\path{doi:10.1109/TSMC.1981.4308636}}.

\bibitem{handbookGD}
Roberto Tamassia.
\newblock {\em Handbook of Graph Drawing and Visualization}.
\newblock Chapman \& Hall/CRC, 1st edition, 2016.

\bibitem{DBLP:journals/ivs/WarePCM02}
Colin Ware, Helen~C. Purchase, Linda Colpoys, and Matthew McGill.
\newblock Cognitive measurements of graph aesthetics.
\newblock {\em Inf. Vis.}, 1(2):103--110, 2002.
\newblock \href {https://doi.org/10.1057/palgrave.ivs.9500013}
  {\path{doi:10.1057/palgrave.ivs.9500013}}.

\end{thebibliography}

\clearpage 

\appendix

\section{Appendix}

\subsection{Missing proofs from \cref{SEC:preliminaries} }

\begin{figure}
\centering

\vspace{-5.5cm}

\begin{subfigure}[b]{.45\textwidth}
\begin{tikzpicture}
\node (S) {$S_\nu$};
\node (PS) [above left of=S] {};
\node (u) [left of=PS] {$u$}
edge [<-,decorate,decoration={snake}] (S);
\node (w) [above of=u] {$w$}
edge [<-] (u);
\node (PN) [below left of=w] {};
\node (N) [below left of=PN] {$N_\nu$}
edge [->,decorate,decoration={snake}] (w);
\node (outer) [left of=N] {}; 
\node (PN2) [above right of=w] {};
\node (v) [above right of=PN2] {$v$}
edge [<-,decorate,decoration={snake}] (w);
\node (dummyB) [right of=S] {};
\node (dummyBB) [below of=dummyB] {$C \in G(\mu_j)$};
\node (bottom) [above right of=dummyB] {};
\node (P) [below right of=v] {$P$};
\node (right) [right of=v] {};
\node (rright) [right of=right] {};
\node (rrright) [below right of=rright] {};
\node (rrrright) [below right of=rrright] {};
\node (rrrrright) [below right of=rrrright] {};
\node (rrrrrright) [below right of=rrrrright] {};
\node (top) [above of=v] {}; 
\node (ttop) [above of=top] {};
\node (tttop) [above of=ttop] {};
\node (ttttop) [above of=tttop] {};
\node (tttttop) [above of=ttttop] {};
\node (ttttttop) [above of=tttttop] {};
\draw [->] (S) .. controls (rrright) and (tttop)  .. (N);	
\node (NN) [below left of=N] {};
\node (NNN) [below left of=NN] {$N$}
edge [->,decorate,decoration={snake}] (N);
\node (SS) [below left of=S] {};
\node (SSS) [below left of=SS] {$S$}
edge [->,decorate,decoration={snake}] (S);
\draw [->] (SSS) .. controls (rrrrrright) and (ttttttop)  .. (NNN);	


\end{tikzpicture}
\label{SUBFIG:illustrationC}
\caption{$N_\nu$ North pole of $G(\nu)$}
\end{subfigure}
\hfill
\begin{subfigure}[b]{.45\textwidth}
\begin{tikzpicture}
\node (S) {$S_\nu$};
\node (PS) [above left of=S] {};
\node (u) [left of=PS] {$u$}
edge [<-,decorate,decoration={snake}] (S);
\node (w) [above of=u] {$w$}
edge [<-] (u);
\node (PN) [below left of=w] {};
\node (N) [below left of=PN] {$N_\nu$}
edge [->,decorate,decoration={snake}] (w);
\node (outer) [left of=N] {}; 
\node (PN2) [above right of=w] {};
\node (v) [above right of=PN2] {$v$}
edge [<-,decorate,decoration={snake}] (w);
\node (dummyB) [right of=S] {};
\node (dummyBB) [below of=dummyB] {$C \in G(\mu_j)$};
\node (bottom) [above right of=dummyB] {};
\node (right) [right of=v] {};
\node (rright) [right of=right] {};
\node (rrright) [below right of=rright] {};
\node (rrrright) [below right of=rrright] {};
\node (rrrrright) [below right of=rrrright] {};
\node (rrrrrright) [below right of=rrrrright] {};
\node (top) [above of=v] {}; 
\node (ttop) [above of=top] {};
\node (tttop) [above of=ttop] {};
\node (ttttop) [above of=tttop] {};
\node (tttttop) [above of=ttttop] {};
\node (ttttttop) [above of=tttttop] {};
\draw [->] (N) .. controls  (tttop) and (rrright)  .. (S);	
\node (NN) [below left of=N] {};
\node (NNN) [below left of=NN] {$N$}
edge [->,decorate,decoration={snake}] (N);
\node (SS) [below left of=S] {};
\node (SSS) [below left of=SS] {$S$}
edge [->,decorate,decoration={snake}] (S);
\draw [->] (SSS) .. controls (rrrrrright) and (ttttttop)  .. (NNN);	
\node (P) [below right of=v] {$P$};


\end{tikzpicture}
\label{SUBFIG:illustrationC}
\caption{$N_\nu$ South pole of $G(\nu)$}
\end{subfigure}
\caption{\label{FIG:illustration-sssp-up-nu-up}If $G'(\mu)$ contains a cycle enclosing $v$, then so does $G'(\nu)$ where $\nu$ is the grand child of $\mu$ such that $v$ is contained in $G(\nu)$. }
\end{figure}
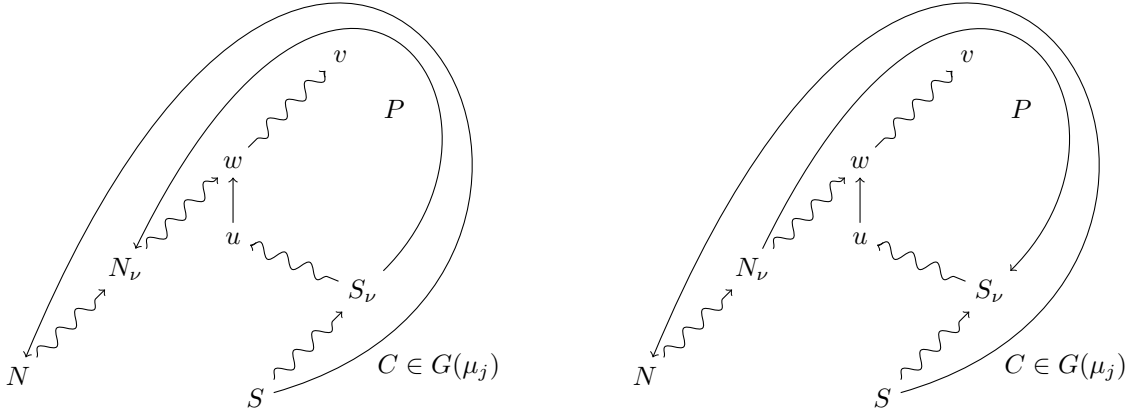

\begin{lemma}\label{LEMMA:sssp-up}
A single-source series-parallel DAG is upward-planar if and only if it is bimodal and there is no P-node $\mu$ with two children $\mu_1$ and $\mu_2$ such that the North pole~of~$G(\mu)$ 
\begin{itemize}
\item is incident to an incoming edge in $G-G(\mu)$, and
\item to both incoming and outgoing edges in both, $G(\mu_1)$ and $G(\mu_2)$.
\end{itemize}
\end{lemma}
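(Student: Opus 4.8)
The plan is to prove both directions of the characterization, with the forbidden-configuration condition being the crux. First I would establish necessity: if the DAG is upward-planar, then it must be bimodal (this is standard and stated in the introduction), so it remains to show that the forbidden $P$-node configuration cannot occur. Suppose for contradiction that there is such a $P$-node $\mu$ with children $\mu_1,\mu_2$ where the North pole $N_\nu$ of $G(\mu)$ receives an incoming edge from $G-G(\mu)$ and has both incoming and outgoing edges in each of $G(\mu_1)$ and $G(\mu_2)$. In any upward-planar embedding, since $G(\mu_1)$ and $G(\mu_2)$ share both poles, one of them is nested inside a face bounded by the other. The key observation is that because $N$ is the topmost (North) pole of $G(\mu)$ and receives an incoming edge from outside $G(\mu)$, the edges around $N$ are forced into a cyclic arrangement. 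I would argue that the presence of both incoming and outgoing edges at $N$ inside \emph{both} components, combined with the external incoming edge, forces a directed cycle enclosing an internal vertex $v$, or equivalently forces a vertex to lie in the interior of a region from which it cannot reach the sink while respecting upwardness. This is exactly the situation depicted in \cref{FIG:illustration-sssp-up-nu-up}.

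For the sufficiency direction, I would proceed by a bottom-up induction on the decomposition-tree $T$, constructing an upward-planar embedding (equivalently, exhibiting a consistent assignment of large angles). For each node $\mu$, I would maintain as an inductive invariant an upward-planar drawing of $G(\mu)$ in which the two poles lie on the outer face with the South pole at the bottom and the North pole at the top. The $Q$-node (single edge) base case is immediate. For an $S$-node, the components are stacked vertically by identifying consecutive poles; upward-planarity is preserved because each component already has its poles vertically separated on its outer face, and no bimodality violation is introduced at the shared cut-vertex since at most one of the merged components can be branching there. The only delicate composition is the $P$-node: here I would place the children side by side sharing the two poles, and the absence of the forbidden configuration is precisely what guarantees that the cyclic order of edges around $N$ can be made bimodal while keeping each child upward-planar.

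The main obstacle, and the heart of the argument, will be the $P$-node analysis in the sufficiency direction, specifically showing that the forbidden-configuration hypothesis is not only necessary but \emph{sufficient} to rule out obstructions. The subtlety is that when a pole $N$ of $G(\mu)$ has both incoming and outgoing edges contributed by several children, the incoming edges must appear consecutively in the bimodal cyclic order around $N$; I would need to verify that the only way this consecutivity can fail is the identified configuration with two children each contributing both incoming and outgoing edges together with an external incoming edge. Concretely, I expect to argue that if at most one child of $\mu$ contributes both incoming and outgoing edges at $N$ (or if there is no external incoming edge at $N$), then one can order the children around $N$ so that all incoming edges are grouped together, respecting bimodality and upwardness simultaneously. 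The reduction to grandchildren shown in \cref{FIG:illustration-sssp-up-nu-up}, where one passes from $G'(\mu)$ to $G'(\nu)$ for the grandchild $\nu$ containing the offending internal vertex $v$, will likely be the device that makes the cycle-enclosing-$v$ argument go through cleanly and allows the induction to close.
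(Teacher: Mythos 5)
Your overall architecture (necessity via bimodality plus a contradiction argument; sufficiency via bottom-up induction on the decomposition tree with the $P$-node case as the crux) matches the paper's, but the plan has genuine gaps at exactly the points where the paper must work hardest. First, your inductive invariant for sufficiency~--~an upward-planar drawing of $G(\mu)$ with both poles on the outer face, South pole bottommost and North pole topmost~--~is not maintainable: the North pole of a component can be a \emph{source} of that component (e.g., the series composition $u \rightarrow v \leftarrow w$ with poles $u$ and $w$), in which case no upward drawing of $G(\mu)$ places the North pole on top. The paper circumvents this by inducting on the augmented graph $G'(\mu)$, obtained by adding the edge $(S,N)$ whenever $N$ is a source of $G(\mu)$. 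Second, and more fundamentally, your $P$-node step treats the ordering of children around $N$ as a purely local decision (``one can order the children so that all incoming edges are grouped together''). That is not enough: when a $P$-node has two bioriented children (which is allowed whenever there is no external incoming edge at $N$), one must decide which child is embedded before and which after the block of incoming edges, and this choice has to be consistent with choices made at ancestor nodes and propagated correctly through $S$-nodes, where the roles of the poles can flip. The paper resolves this with a \emph{top-down} assignment of \textsc{left}/\textsc{right} markers to the nodes of $T$ \emph{before} the bottom-up construction; without such global coordination the locally constructed embeddings of the children need not be combinable. Third, producing a bimodal planar rotation system does not by itself prove upward-planarity: the paper still verifies the criterion of Hutton and Lubiw (the source lies on the outer face and every vertex $v$ lies on the outer face of the subgraph induced by its predecessors), and this verification~--~the cycle-enclosing-$v$ argument you allude to via \cref{FIG:illustration-sssp-up-nu-up}~--~is the longest and most delicate part of the proof; your plan names it but supplies none of its content.

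On the necessity side your mechanism is also off target. The obstruction is not a directed cycle enclosing a vertex, nor a vertex that ``cannot reach the sink'' (the DAG may have many sinks; the single-source criterion concerns predecessors, not reachability of a sink). The paper's argument is a short, purely combinatorial one about the cyclic order at $N$: in any upward-planar embedding the edges of $G(\mu)$ around $N$ read outgoing, incoming, outgoing; two bioriented children therefore force the pattern $\mathit{out}(\mu_1), \mathit{in}(\mu_1), \mathit{in}(\mu_2), \mathit{out}(\mu_2)$, after which no incoming edge of $G-G(\mu)$ can be inserted at $N$ without destroying bimodality. Finally, your $S$-node step (``stack the components vertically'') fails for the same reason as your invariant: consecutive components of a series composition need not be oriented tip-to-tail, since internal sinks are permitted, so vertical stacking is not in general possible.
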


\begin{proof} 
Clearly, a DAG has to be  bimodal in order to be upward-planar.
Let $\mu$ be  a node of the decomposition-tree of a series-parallel DAG and let $N$ be the North pole of the component $G(\mu)$.
Observe that the counterclockwise order of edges around $N$ in $G(\mu)$ in any upward-planar embedding must be as follows: some (possibly no) outgoing edges, followed by some  (possibly no) incoming edges, followed by some (possibly no) outgoing edges. Thus, if~$\mu$ has two children $\mu_1$ and $\mu_2$ such that both, $G(\mu_1)$ and $G(\mu_2)$ contain both, incoming and outgoing edges, incident to $N$ and if $\mu_1$ is before  $\mu_2$ in the counterclockwise order around $N$ then the counterclockwise order of the edges incident to $N$ is first the outgoing and then the incoming edges in $G(\mu_1)$ and first the incoming and then the outgoing edges in $G(\mu_2)$. Since an upward-planar embedding is always bimodal, it follows that in this case $N$ cannot be incident to any incoming edge outside $G(\mu)$. This implies the Necessity of the condition in the lemma. 

In order to prove the sufficiency, let $G$ be a bimodal single-source series-parallel DAG with the required property. Since a single-source DAG is upward-planar if and only if all its biconnected components are \cite{hutton/lubiw:96}, we may assume that $G$ is biconnected. Let $T$ be a decomposition-tree of $G$ with the property that the single source of $G$ is a pole of $G$.

Proceeding top-down in $T$, we first 	
check for each $P$-node $\mu$ whether it has two children~$\mu_1$ and $\mu_2$ such that the North-pole of $G(\mu)$ is incident to both incoming and outgoing edges in both, $G(\mu_1)$ and $G(\mu_2)$. If so, we arbitrarily assign one of them the \emph{marker \textsc{left}} and the other one the \emph{marker \textsc{right}}, indicating that the \textsc{left} component should be drawn before the \textsc{right} component in the counterclockwise order around $N$. All other children of a $P$-node inherit the marker of their parent. The root is assigned an arbitrary marker, say \textsc{left}. For an $S$-node $\mu$, consider a simple path $P$ in $G(\mu)$ from its South pole to its North pole. Let $\nu$ be a child of $\mu$. Then $\nu$ inherits the marker of $\mu$ if the South pole of $G(\nu)$ comes before the North pole of $G(\nu)$ on $P$. Otherwise, $\nu$ gets the marker that $\mu$ does not have.

If the North-pole of a component $G(\mu)$ is a source, let $G'(\mu)$ be the DAG that is obtained from $G(\mu)$ by adding the edge from its South- to its North-pole. Otherwise, let $G'(\mu)=G(\mu)$. In either case $G'(\mu)$ is a single source series-parallel DAG. Thus, there is always a directed path in $G'(\mu)$ from the South pole to any other vertex, in particular to the North pole.
In a bottom-up traversal of $T$, we now use the marker \textsc{right} and \textsc{left} in order to inductively construct an upward-planar embedding of $G'(\mu)$ for every node $\mu$ of $T$: 

So let $\mu$ be a node of $T$ and let $S$ and $N$ be the South and North pole of $G(\mu)$, respectively. 
If $\mu$ is a Q-node then $G(\mu)$ is a single edge and, thus, upward-planar. If $\mu$ is an S-node and $N$ is not a source then by the inductive hypothesis, the biconnected components of $G(\mu)=G'(\mu)$ are upward-planar. Thus, $G$ is also upward-planar \cite{hutton/lubiw:96}. Otherwise, $G'(\mu)$ is a parallel composition of $G(\mu)$ and the edge $(S,N)$. We handle this in the parallel case -- without using that $G(\mu)$ is upward-planar.

Now consider the case that $\mu$ is a P-node.
We call a component $G(\nu)$ with North pole $N$ an \emph{incoming component}, if all edges of $G(\nu)$ incident to $N$ are incoming edges of~$N$, \emph{outgoing}, if all edges of $G(\nu)$ incident to $N$ are outgoing edges of~$N$, and \emph{bioriented}, otherwise.
Recall that a DAG has to be bimodal in order to be upward-planar. This implies immediately that each pole can be incident to at most two bioriented components.
We define the counterclockwise order of the children $\mu_1,\dots,\mu_\ell$ around $N$ including the possibly added edge $(S,N)$ as follows: We start with a possible bioriented \textsc{left} component then all incoming components, and finally a possible bioriented \textsc{right} component. If $\mu$ was labeled \textsc{left}, we add all outgoing components at the beginning and otherwise at the end. We claim that this order yields an upward-planar embedding of $G'(\mu)$:

To this end, we use the characterization given by Hutton and Lubiw \cite{hutton/lubiw:96}: A planar embedding of a single-source DAG $G$ is upward-planar if and only if 
the source of $G$ is incident to the outer face and each vertex $v$ is a sink on the outer face of the subgraph of $G$ induced by the set Pred$(v)$ of vertices $u$ for which there is a $u$-$v$-path in $G$. 

Thus, let $v$ be a vertex of $G(\mu)$. Let Pred$_\mu(v)$ be the set of vertices $u$ for which there is a $u$-$v$-path in $G'(\mu)$, let $H_\mu(v)$ be the subgraph of $G'(\mu)$ induced by Pred$_\mu(v)$.	
Since $G$ is acyclic $v$ is a obviously a sink in $H_\mu(v)$.  Assume that $v$ is not on the outer face of $H_\mu(v)$, i.e., $H_\mu(v)$ contains a simple cycle $C$ enclosing $v$. In particular, this implies that $v$ is not a pole. Let $G(\mu_i)$, $i=1,\dots,\ell$ be the component containing $v$. 

If  $C$ is contained in $G(\mu_i)$, then $C$ is already contained in a biconnected component $G(\nu)$ of  $G(\mu_i)$ for some child $\nu$ of $\mu_i$. But then $C$ would enclose $v$ already in the subgraph of~$G'(\nu)$ induced by the set of vertices $u$ for which there is a $u$-$v$-path in $G'(\nu)$ -- contradicting that~$G'(\nu)$ is upward-planar.

So assume that $C$ contains edges from two components. In particular, this means that  both poles of $G(\mu)$ are in $C$ which implies that the North pole is incident to an outgoing edge in $G(\mu_i)$. Let $G(\mu_j)$, $i \neq j$ be a component containing edges of $C$ (including the potentially added edge between the poles). Since the South pole is not incident to an incoming edge in~$G(\mu)$ it follows that the part of $C$ in $G(\mu_j)$ is actually a directed $S$-$N$-path. In particular,~$N$ is incident to an incoming edge in $G(\mu_j)$. 

Let $1 \leq k_1 \leq k_2 \leq \ell$ be such that $G(\mu_{k_1}),\dots,G(\mu_{k_2})$ are the incoming or bioriented components of $G(\mu)$.  		
By symmetry, it suffices to consider the case $i\leq k_1$. In that case it follows that $j >i$ which implies in particular that the cycle $C$ enclosing $v$ cannot be composed by two $S$-$N$-paths in two distinct components other than $G(\mu_i)$. Thus, $C$ can only enclose $v$ if $C$ contains a not necessarily directed $S$-$N$-path in $G(\mu_i)$.

So we have the following situation: In the counterclockwise order around $N$ there are first outgoing edges from $N$-$v$-paths in $G(\mu_i)$ and then an incoming edge $e_j$ from the directed $S$-$N$-path on $C$ in $G(\mu_j)$. Let $P_N$ be the rightmost $N$-$v$-path seen from $N$ which means in particular that $P_N$ leaves $N$ first in the counterclockwise order around $N$ among all $N$-$v$-paths. 

Moreover, $\mu_i$ is labeled \textsc{left}: Either $G(\mu_i)$ is an outgoing component or the only bioriented component of $G(\mu)$ before the bioriented or incoming component $G(\mu_j)$ which implies that $\mu$ is labeled \textsc{left}. Since in that case $\mu_i$ inherits the label of $\mu$ it follows that also $\mu_i$ is labeled \textsc{left}. Otherwise, $G(\mu_i)$ is the first among the two bioriented components of $G(\mu)$. Thus, it was labeled \textsc{left}.

Observe that $v$ cannot be enclosed by $C$ if $v$ was a cut vertex of $G(\mu_i)$.
Let $\nu$ be the child of $\mu_i$ such that $v$ is contained in $G(\nu)$. 	Let $N_\nu$ be the pole of $G(\nu)$ on $P_N$ and let $S_\nu$ be the other pole of $G(\nu)$. Observe that $\nu$ is a \textsc{left} component if 
$N_\nu$ is the North-pole of~$G(\nu)$, otherwise $\nu$ is a \textsc{right} component.

Observe that both, $C$ and $P_N$ must go through $N_\nu$. 
Let $w$ be the last vertex of $C$ on~$P_N$. Then $C$ must contain an edge $(u,w)$ or $(w,u)$ in $G(\nu)$ attached to $P_N$ from the right. See \cref{FIG:illustration-sssp-up-nu-up}. Moreover, since $u$ is on $C$, there must be a $u$-$v$-path in $G$. 
Since $P_N$ was the rightmost $N$-$v$-path, there cannot be the edge there cannot be the edge $(w,u)$. But then, for the same reason there cannot be an $N$-$u$-path in $G(\mu_i)$. Thus, since $N$ and $S$ are the only possible sources of $G(\mu_i)$ there must be a directed $S$-$u$-path $P_S$ in  $G(\mu_i)$. $P_S$ and $P_N$ must be disjoint, since otherwise $G$ would contain a directed cycle or $P_N$ was not the rightmost $N$-$v$-path. The path $P_S$ must contain $S_\nu$.

If both poles of $G(\nu)$ were a source, then $N_\nu$ is the North pole of $G(\nu)$, by definition. Since $G(\nu)$ is a \textsc{left} component, we know that $G'(\nu)$ with the edge $(S_\nu,N_\nu)$ as a last edge in the counterclockwise order around $N_\nu$ is upward-planar, contradicting that $C$ together with $e$ would contain a simple cycle in  $G'(\nu)$ enclosing $v$.

Otherwise, $G(\nu)$ contains a directed path $P$ from the South pole to the North pole of $G(\nu)$. Further, observe that the North pole of $G(\nu)$ is incident to an incoming edge in $G-G(\nu)$: If~$N_\nu$ is the North pole of $G(\nu)$ then this is incoming edge of $P_N$ is $N \neq N_\nu$ and the edge on $C$ in $G(\mu_j)$ otherwise. If $S_\nu$ is the North pole of $G(\nu)$ then $S_\nu \neq S$ and its the incoming edge of $P_S$. 

Thus, if $N_\nu$ is the North pole of $G(\nu)$ then $P$ must enter $N_\nu$ after $P_N$ leaves $N_\nu$ in the counter-clockwise order around $N_\nu$. If on the other hand $S_\nu$ is the North pole of $G(\nu)$ then~$P$ must enter $S_\nu$ before $P_S$ leaves $S_\nu$ in the counter-clockwise order around $S_\nu$. 

Since $G$ is acyclic $P$ cannot contain a vertex of $P_S$ or $P_N$ other then $S_\nu$ or~$N_\nu$.
Thus, $P$, $P_N$, and $P_S$ must contain a simple cycle enclosing $v$. 
\end{proof}

\subsection{Missing proofs from \cref{SEC:fixed}} \label{APP:fixed}

\begin{lemma}\label{LEMMA:cactusPlanar}
The construction given in the proof of \cref{THEO:cacti} yields a planar L-drawing that preserves the given outerplanar embedding.
\end{lemma}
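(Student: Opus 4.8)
Recall from the construction in the proof of \cref{THEO:cacti} that every vertex $w$ gets its preorder number $\mathrm{pre}(w)$ as $y$-coordinate and its (modified) postorder number $\mathrm{post}(w)$ as $x$-coordinate, and that the drawing is already known to be upward-leftward: for each edge $(u,v)$ we have $\mathrm{pre}(v)>\mathrm{pre}(u)$ and $\mathrm{post}(v)<\mathrm{post}(u)$. Thus $(u,v)$ is a vertical segment at $x=\mathrm{post}(u)$ spanning $y\in[\mathrm{pre}(u),\mathrm{pre}(v)]$ followed by a horizontal segment at $y=\mathrm{pre}(v)$ spanning $x\in[\mathrm{post}(v),\mathrm{post}(u)]$. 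Since all coordinates are distinct, two parallel segments can only overlap (which is allowed in an L-drawing), and this happens exactly for edges sharing their tail or their head. Hence a transversal crossing can only occur between the vertical segment of one edge $e_1=(u_1,v_1)$ and the horizontal segment of another edge $e_2=(u_2,v_2)$, and precisely when $\mathrm{post}(v_2)<\mathrm{post}(u_1)<\mathrm{post}(u_2)$ and $\mathrm{pre}(u_1)<\mathrm{pre}(v_2)<\mathrm{pre}(v_1)$. The first step is therefore to reduce planarity to showing that no such ordered pair of edges exists.

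\textbf{Interval structure of the modified traversal.} Next I would make the altered traversal precise and extract the interval facts it obeys. The preorder is, as in any DFS, an ancestor-before-descendant order on the out-tree $T$. The modified finishing (postorder) obeys the usual rule that a proper descendant is finished before its ancestor, \emph{with one controlled exception}: an internal vertex $q$ of a left path $P_\ell$ of a cycle $C$ is finished only after $t_C$ and the entire subtree beyond $t_C$ have been finished (this delay is exactly what makes the removed last edges $q_m\to t_C$ point up and to the left, as verified in \cref{THEO:cacti}). Consequently the interval characterisation ``$\mathrm{pre}(u)<\mathrm{pre}(w)$ and $\mathrm{post}(w)<\mathrm{post}(u)$ imply that $w$ is a descendant of $u$'' holds for every pair \emph{except} when $u$ is such an internal left-path vertex and $w\in\{t_C\}\cup\mathrm{beyond}(t_C)$; the hanging subtrees of the right-path vertices, being explored while backtracking up $P_r$, are finished after $t_C$ and the left path and so do not enter this exception.

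\textbf{Ruling out crossings.} With the criterion and these facts I would split by the type of the pair $(e_1,e_2)$. In the ordinary case the characterisation applies to $u_1$ and $v_2$: the two inequalities force $v_2$ to be a proper descendant of $u_1$, hence $u_2=\mathrm{par}(v_2)$ is a descendant of (or equal to) $u_1$, so $\mathrm{post}(u_2)\le\mathrm{post}(u_1)$, contradicting $\mathrm{post}(u_1)<\mathrm{post}(u_2)$; this is the standard non-crossing argument for DFS-based L-drawings. In the only exceptional case $u_1$ is an internal left-path vertex $q_i$ and $v_2\in\{t_C\}\cup\mathrm{beyond}(t_C)$, so $\mathrm{pre}(v_2)\ge\mathrm{pre}(t_C)$. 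But every out-edge of $q_i$ in $T$ has its head visited \emph{before} the jump to the right path, hence with preorder strictly below $\mathrm{pre}(t_C)$; the unique out-edge of $q_i$ whose head reaches level $\mathrm{pre}(t_C)$ is the back edge $q_m\to t_C$, i.e.\ $v_1=t_C$, which shares its head with the competing edge $(r_k,t_C)$ into $t_C$. In either subcase the strict inequality $\mathrm{pre}(v_2)<\mathrm{pre}(v_1)$ fails, so no crossing is produced. Since the last family of cases is also the place where the clean DFS nesting is deliberately destroyed, pinning down \emph{exactly} where the interval characterisation breaks — in full generality, with right-path subtrees, the subtree beyond each $t_C$, and several nested blocks — is the main obstacle; once it is in hand the crossing analysis is the short case distinction above.

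\textbf{Preservation of the embedding.} Finally, because the DFS considers the edges around each vertex in clockwise order, and every edge is drawn leaving its tail upward and entering its head from the right, the rotation system realised by the drawing is exactly the clockwise order used by the traversal, that is, the given outerplanar embedding. The only deviation is the flipping of a cycle performed to lengthen its left path, which by construction is carried out solely in the presence of transitive edges, in agreement with the statement of \cref{THEO:cacti}. Combining the absence of crossings with the correct rotation system yields the claimed embedding-preserving planar L-drawing; the single-sink case follows verbatim from the reversal, rotation, and mirroring described in \cref{THEO:cacti}, all of which preserve planarity and the embedding.
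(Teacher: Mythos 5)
Your strategy (reduce planarity to the arithmetic crossing criterion on $(\mathrm{post},\mathrm{pre})$ pairs, then prove a nesting lemma for the modified DFS with a controlled exception set) is genuinely different from the paper's proof of this lemma, which instead splits $G$ into the tree $T$ and the removed cycle-closing edges, proves planarity of $T$ by a bounding-box nesting argument, and then handles each re-added edge $(u,t_C)$ separately using the fact that the modified backtracking forces $\mathrm{post}(u)=\mathrm{post}(t_C)+1$. Your exception set is the correct one, and your exceptional-case analysis is sound. But there are two genuine gaps. First, your ``ordinary case'' silently assumes that $e_2$ is a tree edge when you write $u_2=\mathrm{par}(v_2)$: if $e_2$ is itself a re-added edge $(u',t_{C'})$ of some cycle $C'$ and $u_1$ is a vertex of the right path of $C'$ (a proper ancestor of $v_2=t_{C'}$ in $T$, so the nesting lemma does place you in the ordinary case), then $u_2=u'$ is neither the parent of $v_2$ nor a descendant of $u_1$, and the argument collapses. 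It can be repaired by observing that the left-path vertices of $C'$ receive their postorder numbers before the right-path vertices of $C'$, so $\mathrm{post}(u_2)<\mathrm{post}(u_1)$ still yields the contradiction; but as written, the case of a vertical segment crossing the horizontal segment of a cycle-closing edge is not covered. (The paper kills this case for free: since $\mathrm{post}(u)$ and $\mathrm{post}(t_C)$ differ by one, that horizontal segment spans a unit interval containing no other vertex's $x$-coordinate.)

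Second, and more importantly, the embedding-preservation part is asserted rather than proven. The claim that ``the rotation system realised by the drawing is exactly the clockwise order used by the traversal'' cannot, even in principle, account for the cycle-closing edges $(u,t_C)$: these edges are not in $T$ and are never traversed by the DFS, so no statement about the traversal order determines where they land in the realised rotation at $t_C$ and at $u$. This is exactly where the paper's proof does its real work: the differ-by-one property is used to show that $(u,t_C)$ becomes the leftmost incoming edge of $t_C$, and it is also checked to be the rightmost outgoing edge of $u$, matching the given outerplanar embedding; you never establish or use this property. Relatedly, your nesting lemma rests on unproven claims about the traversal (hanging subtrees of left-path vertices are explored before the jump, those of right-path vertices only after the left path has received its postorders), which you yourself flag as ``the main obstacle''; these claims must be derived from the outerplanar embedding together with the clockwise edge order, and until they are, both the planarity and the embedding statements remain open in your write-up.
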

\begin{proof}

\begin{enumerate}
\item The drawing of $T$ preserves the given embedding: Each vertex has
at most one incoming edge. Let $(v,w_1)$ and $(v,w_2)$ be two outgoing edges of
a vertex $v$ such that $(v,w_1)$ is to the left of $(v,w_2)$ in the order of the outgoing edges of $v$.   Since the
children of $v$ are traversed from left to right, it follows that
$w_1$ has a lower preorder-number (y-coordinate) than $w_2$.
\item The drawing of $T$ is planar: Let $v$ be a vertex and let $w_1$ and
$w_2$ be two children of a vertex $v$ such that $w_1$ is to the
left of $w_2$. Since the children of $v$ are traversed from left
to right it follows that the vertices in subtree $T_1$ rooted at
$w_1$ obtain a lower preorder-number than the vertices in the subtree $T_2$ rooted
at $w_2$. Thus, the bounding box of $T_1$ is 
below the bounding box of $T_2$. Thus, no edge of $T_1$ can
intersect an edge of $T_2$. Further, $w_1$ and $w_2$ are the rightmost points in $T_1$ and $T_2$, 
respectively, and $v$ is to the right of both~$T_1$ and $T_2$. Thus, edges incident to $v$
cannot cross edges within $T_1$ or $T_2$. Recursively, we obtain that there are no crossings.

\item The drawing of $G$ is planar and preserves the embedding. Let now $(u,t_C)$ be the last edge on the left path of a cycle $C$. Recall that $(u,t_C)$ is not a transitive edge. Thus, by the special care we took for the backtracking, we know that the x-coordinates 
of $u$ and~$t_C$ differ only by one. This implies that $(u,t_C)$ is the leftmost incoming of $t_C$.
Moreover, $(u,t_C)$ could at most be crossed by a horizontal segment of an edge in $T$.
However, the vertices with y-coordinate between the y-coordinate of $u$ and the y-coordinate of $t_C$ are 
either in the subtree $T_u$ rooted at $u$, and thus to the left of $u$, or they are on the right 
path $P_r$ of $C$ and, thus to the right of $u$. Since there are no edges between vertices in $T_u$ 
and $P_r$, it follows that $(u,t_C)$ is not crossed. Moreover, $(u,t_C)$ is the rightmost outgoing edge of $u$. \qedhere
\end{enumerate}
\end{proof}

\begin{lemma}\label{lem:large-angles}
In a plane single-source or -sink DAG,
the number $L(f)$ of large angles in a face $f$ is $A(f)-1$, if $f$ is an inner face, and $A(f)+1$, otherwise.
\end{lemma}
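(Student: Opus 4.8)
The plan is to work with an explicit straight-line upward-planar drawing $\Gamma$ of $G$ realizing the given upward-planar embedding, which exists by \cite{DT-aprad-88} and whose large-angle assignment is determined by the embedding and the choice of outer face for single-source or -sink DAGs \cite{bertolazzi_singleSource}, and to read off the formula from the turning of the boundary walk of $f$. The first fact I would record is that every \emph{large} switch angle sits at a global source or at a sink. Indeed, in an upward straight-line drawing every edge outgoing from a vertex $v$ leaves $v$ into the open upper half-plane above $v$ and every incoming edge enters from the open lower half-plane, so the outgoing (resp.\ incoming) edges of $v$ span an angular interval of width less than $\pi$. Hence a source-switch angle at a vertex that also has an incoming edge, and a sink-switch angle at a vertex that also has an outgoing edge, is always smaller than $\pi$; a reflex switch angle can occur only at the unique source or at a sink, as the single angle spanning the remaining half-plane. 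Therefore $L(f)$ counts exactly the reflex switch angles on $\partial f$, while the total number of switch angles on $\partial f$ equals $2A(f)$, since source-switches and sink-switches both number $A(f)$.

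Next I would traverse $\partial f$ keeping the interior on the left and follow the travel direction as a continuously lifted angle $\phi$, whose total increment is the turning number of the boundary: $+2\pi$ for an inner face and $-2\pi$ for the outer face. The crucial point is that a non-switch corner does not reverse the $y$-direction of travel, so the walk passes through it $y$-monotonically and $\phi$ stays within one open horizontal half; consequently $\phi$ crosses a horizontal direction \emph{exactly} at the switch corners. A short case analysis of the four switch types then shows that a small source-switch and a large sink-switch are precisely the corners at which $\phi$ crosses the rightward direction (CCW, resp.\ CW), whereas a small sink-switch and a large source-switch are precisely those at which $\phi$ crosses the leftward direction. Since $\phi$ has net increment $\pm 2\pi$, the signed number of crossings of the rightward direction, and likewise of the leftward direction, equals $+1$ for an inner face and $-1$ for the outer face.

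Combining the two crossing identities gives, for an inner face, $(\text{small source-switches})-(\text{large sink-switches})=1$ and $(\text{small sink-switches})-(\text{large source-switches})=1$; adding them yields $(\text{small switches})-(\text{large switches})=2$. Together with $(\text{small switches})+(\text{large switches})=2A(f)$ and $(\text{large switches})=L(f)$, this gives $L(f)=A(f)-1$. The outer face is handled identically with the sign reversed, giving $L(f)=A(f)+1$.

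The step I expect to be the main obstacle is making the turning argument rigorous when $G$ is not biconnected, which is exactly where this statement extends \cite{bertolazzi_flow}. For a single-source (or single-sink) DAG a face boundary is in general a closed \emph{walk} that may pass through a cut vertex several times and traverse a bridge twice, producing degenerate $0$- and $2\pi$-corners (U-turns). I would treat these with the doubling convention fixed in the Preliminaries (a $0$ angle as two convex corners, a $2\pi$ angle as two concave corners), check that with this convention the turning of the boundary walk is still $\pm 2\pi$ and that a U-turn at a bridge contributes a single horizontal crossing consistent with the table above, and confirm that the source- and sink-switch counts are taken with the correct multiplicity. Should the degenerate corners prove unwieldy, an alternative is to first augment $G$ inside each face so as to make it biconnected without changing $A(f)$ or $L(f)$, thereby reducing to the biconnected statement of \cite{bertolazzi_flow}.
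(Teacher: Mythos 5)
Your proof is correct, but it follows a genuinely different route from the paper's. The paper proves the lemma by induction on the block structure: it takes Bertolazzi et al.'s formula for \emph{biconnected} upward-planar DAGs as the base case, peels off a leaf biconnected component $C$ at a cut vertex $v$ (using the single-source hypothesis to conclude that $v$ is the unique source of $C$), and does bookkeeping on how $A(\cdot)$ and $L(\cdot)$ of the merged faces combine, with a case distinction on whether $C$ sits inside a sink-switch angle of the remaining graph. You instead re-derive the formula from scratch by a winding-number argument on a straight-line realization: switches are exactly the corners where the direction of travel along $\partial f$ crosses the horizontal, small source-switches/large sink-switches cross East with signs $+1/-1$ (and symmetrically for West), and equating the signed crossing counts with the turning number $\pm 1$ of the boundary walk yields $(\text{small})-(\text{large})=\pm 2$, hence $L(f)=A(f)\mp 1$. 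Your case analysis checks out, including the degenerate $2\pi$ corners at bridges (turning $-\pi$, contributing one crossing with the correct sign), and the fact that edges are never horizontal guarantees crossings occur only inside corners; the identity $\#\text{switches}=2A(f)$ and the $\pm 2\pi$ turning of a face walk of a \emph{connected} plane graph are the only combinatorial inputs. What each approach buys: the paper's induction is short because it uses the biconnected case as a black box, but it is tailored to the single-source (or single-sink) setting; your argument is self-contained, treats cut vertices and bridges uniformly rather than through a merge analysis, and in fact never uses the single-source hypothesis at all (it is only needed to identify large angles with reflex switch angles, which your half-plane observation handles), so it establishes the statement for arbitrary connected upward-plane DAGs --- precisely the extension that Bertolazzi et al.\ only mention in passing. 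Your fallback suggestion (augmenting inside faces to reduce to the biconnected case) is closer in spirit to the paper's reduction but is not needed, since the direct argument goes through.
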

\begin{proof}
By symmetry, it suffices to consider an upward-planar DAG $G$ with a single source $s$ and with a fixed upward-planar drawing. Observe that one angle at $s$ in the outer face is the only large angle of $G$ at a source-switch. We do induction on the number of biconnected components. If $G$ is biconnected this from \cite{bertolazzi_flow}. So let $C$ be a biconnected component of~$G$ that contains exactly one cut vertex $v$ and such that $C-v$ does not contain the single source~$s$. Observe that then $v$ is the single source of $C$. Let $G'$  be the DAG obtained from~$G$ by removing $C$ but not~$v$. Observe that $s$ is the single source of $G'$. By the inductive hypothesis,~$G'$ and $C$ have the required property. 
Now consider the face $f'$ of~$G'$ containing~$C$, let $f_0$ be the outer face of~$C$, and let $f$ be the face in $G$ contained in both, $f'$ and $f_0$. 

Since $v$ is a source in $C$, all sink-switches of $C$ are still sink-switches of $G$. Moreover, the upward-planar drawing of $G$ implies that the biconnected components of $G'$ that are not contained in $f_0$ cannot contain outgoing edges incident to $v$. Thus, if there were such components then the sink-switches of $G'$ are still sink-switches of $G$. Hence, in this case, we have   that $A(f) = A(f') + A(f_0)$. Moreover, the large angles at sink-switches of $G$ in $f$ are the large angles at sink-switches of $G'$ in $f'$ plus the large angles at sink-switches of $C$ in $f_0$. If $f$ is an inner face, then $f$ does not contain the large angle at the source-switch $v$ of $C$ in $f_0$. Thus $L(f) = L(f') + L(f_0) - 1 =  A(f') - 1 + A(f_0) + 1 - 1 = A(f)-1$. If $f$ is the outer face, then both, $C$ and $G'$ had a large angle at a source-switch in the outer face, but $G$ has only one. We obtain $L(f) = L(f') + L(f_0) - 1 =  A(f') + 1 + A(f_0) + 1 - 1 = A(f)+1$. Finally, if $C$ is contained in a sink-switch angle of $G'$ at $v$ then this must be a large angle and we get $A(f) = A(f') - 1 + A(f_0)$ and $L(f) = L(f') - 1 + L(f_0) - 1 = A(f') \pm 1 -1 + A(f_0) + 1 -1 = A(f) \pm 1$, where the $\pm 1$ distinguishes between the outer and the inner face.
\end{proof}

\subparagraph{Single-Sink Case of \cref{THEO:source}.} 

The results for single source DAGs translate to single sink DAGs. Each inner face $f$ has exactly one source-switch $v$ that is not large which we call bottom$(f)$. With the analogous proof as for \cref{LEMMA:path-to-top}, we obtain.

\begin{lemma}\label{LEMMA:path-to-bottom}
Let $G$ be a single sink upward-planar DAG with a fixed upward-planar embedding, let $f$ be an inner face, and let $v$ be a source with a large angle in $f$. Every plane \st-graph extending $G$ contains a directed bottom($f$)-$v$-path.
\end{lemma}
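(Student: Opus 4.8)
The plan is to mirror the proof of \cref{LEMMA:path-to-top} under the up-down duality that relates single-sink DAGs to single-source DAGs. Since $f$ is an inner face and $v$ is a source with a large angle in $f$, in any plane \st-graph extending $G$ there must be an incoming edge $e$ of $v$ coming from the interior of $f$; otherwise the large angle at $v$ could not be filled so as to make $G$ a planar \st-graph (in which every vertex other than the source has at least one incoming edge reachable from within each face incident to a large source-switch). Let $u$ be the tail of $e$. The idea is then to walk downward along the boundary of $f$ from $u$ until we first reach a source-switch $v'$ of $f$; if $v' = \text{bottom}(f)$ we are done, and otherwise $v'$ is a source-switch with a large angle in $f$, so we recurse by choosing an incoming edge $e'$ of $v'$ directed from the interior of $f$.

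First I would record the dual of the observation preceding \cref{LEMMA:path-to-top}: in a single-sink upward-planar drawing the only large angle at a sink-switch is the angle at the unique sink $t$ in the outer face, so that in every inner face $f$ all but one of the source-switches carry a large angle, and $\text{bottom}(f)$ is by definition the unique source-switch of $f$ without a large angle. This guarantees that every source-switch $v'$ encountered during the descent, other than $\text{bottom}(f)$ itself, has a large angle and therefore admits an incoming edge from the interior of $f$ in any \st-graph extension, so the recursion is well defined. Each step strictly decreases the $y$-coordinate along the traversed boundary arc, which ensures termination and, since $G$ is acyclic, that the concatenated edges form a directed $\text{bottom}(f)$-$v$-path rather than a closed walk.

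The main point to get right — exactly as in \cref{LEMMA:path-to-top} but with the orientation reversed — is the claim that following the face boundary downward from $u$ necessarily terminates at a source-switch, and that iterating reaches $\text{bottom}(f)$ rather than cycling or exiting $f$. This follows because $f$ is a bounded region whose boundary, traversed appropriately, alternates between ascending and descending arcs delimited by the switches of $f$; each descending arc we follow leads to a source-switch, and the absence of a large angle distinguishes $\text{bottom}(f)$ as the terminal one. I expect the only subtlety to be the bookkeeping that the chosen incoming edges and boundary arcs splice together into a single \emph{directed} path; this is handled by the same argument as in the single-source case, reading all edge orientations in reverse, which is why the statement asserts a directed $\text{bottom}(f)$-$v$-path instead of a $v$-$\text{top}(f)$-path.

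\begin{proof}
By symmetry with \cref{LEMMA:path-to-top}, reading all edge orientations in reverse and exchanging the roles of sources and sinks, the argument goes through verbatim: in a plane \st-graph extending $G$ there is an incoming edge $e$ of $v$ from the interior of $f$; following the boundary of $f$ downward from the tail of $e$ reaches a source-switch $v'$; if $v' = \text{bottom}(f)$ we are done, and otherwise $v'$ has a large angle and we recurse. Since each step strictly decreases the $y$-coordinate and $f$ is bounded, the process terminates at $\text{bottom}(f)$, yielding the desired directed $\text{bottom}(f)$-$v$-path.
\end{proof}
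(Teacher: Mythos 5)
Your proposal is correct and takes essentially the same approach as the paper: the paper's own proof of \cref{LEMMA:path-to-bottom} consists of the single remark that it follows ``with the analogous proof as for \cref{LEMMA:path-to-top}'', and your argument is exactly that dualization~--~an incoming edge of $v$ must enter from the interior of $f$, one descends along the boundary of $f$ to a source-switch, and one recurses until bottom$(f)$ is reached, the concatenated edges forming the directed bottom$(f)$-$v$-path. Your preliminary observation that in a single-sink upward-planar drawing the only large sink-switch angle is at $t$ in the outer face, so each inner face has exactly one source-switch without a large angle (namely bottom$(f)$), is also precisely the paper's setup for the single-sink case.
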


\begin{proof}[Proof of \cref{THEO:source}, single-sink case]
Let $G$ be an upward-planar single-source DAG.
Augment each inner face $f$ by edges from bottom$(f)$ to the incident source-switches. Add a new source $s$ together with edges from all source-switches on the outer face. 

Analogously as in the previous section, the thus constructed DAG has a single sink, a single source, and is acyclic. 

The edges incident to $s$ cannot be contained in a valley since their heads have all in-degree one. Also the new edges incident to bottom$(f)$ for some face $f$ cannot be the transitive edges of a valley. Finally, if an edge incident  bottom$(f)$ would be contained in a path then this path is unavoidable due to \cref{LEMMA:path-to-bottom}.

Thus, we have a planar \st-graph $G'$ that admits an upward-planar L-drawing if and only if the original DAG $G$ did. It can be tested in linear time whether $G'$ admits a bitonic \st-ordering and thus an upward-planar L-drawing~\cite{gronemann:gd16}. 
\end{proof}

\subsection{Missing Proofs from \cref{SEC:variable}}



\rotation*

\begin{proof}
Observe that $S$ is the bottommost vertex of $G$, since it is lower than $N$ and no internal vertex of $G$ is a source.
We consider the cases in which the type of $P$ at $N$ is in $\{L^c,R^c,E\}$.
Let $h$ be the horizontal line through $N$ and let $h_\ell$ ($h_r$) be the part of $h$ to the left (right) of $N$. We claim that, since $G$ has no internal source, the drawing of $P$ does not contain a subcurve $\mathcal C$ in the half plane above $h$ that connects $h_\ell$ and $h_r$: Otherwise, let~$e_\ell$ and~$e_r$, respectively, be the edges of $P$ containing the two end points of $\mathcal C$. Let $v_\ell$ and~$v_r$ be the tails of $e_\ell$ and $e_r$, respectively. Since $G$ has no internal source, there must be a descending path $P_\ell$ and $P_r$, respectively, from $v_\ell$ and $v_r$ to a pole. Since the North pole is above $v_\ell$ and $v_r$, the descending paths must end at the South pole. However, this implies that the union of $P$, $P_\ell$, and $P_r$ contains a cycle in $G$ enclosing $N$, contradicting that $N$ is incident to the outer face. 

Let $p = (x,y)$ be the end point of $P'$ different from $S$, i.e., the point below $N$. We may assume that $p$ is very close to $N$.
The above claim implies that by shortening some vertical and horizontal parts of $P'$, we can ensure that $P'$ does not traverse the horizontal line through $p$ to the left of $p$. This does not change the rotation of $P'$.
Let $x_{\min}$ be the minimum x-coordinate of $P'$. Now consider the orthogonal polyline $Q:p, (x_{\min}-1,y), (x_{\min}-1,y_S), S$ where $y_S$ is the y-coordinate of $S$. Concatenating $P'$ and $Q$ yields a simple polygon traversed in counterclockwise order. Thus the rotation of $P$ equals the rotation of a polygon minus the rotation of the two convex bends in $Q$ minus the convex bend in $S$ minus the concave bend at $p$. Thus the rotation of $P'$ is $2\pi - 3 \cdot \pi/2 + \pi/2 = \pi$. 

If the type of $P$ at $N$ is in $\{L^{cc},R^{cc},W\}$ then we obtain the respective result, by concatenating the reversion of $P'$ and the polygonal chain $S, (x_{\max}+1,y_S), (x_{\max}+1,y), p$ where $x_{\max}$ is the maximum $x$ coordinate of $P$. Thus, the reversion of $P'$ has rotation $2\pi - 3 \cdot \pi/2 + \pi/2=\pi$ and the rotation of $P'$ is the negative of it. 
\end{proof}

\parallelComposition*

\begin{proof}
\textbf{Necessity:} Assume that an upward-planar L-drawing of $C$ is given. Let $S$ and $N$ be the North- and South-pole of $C$, respectively. In an upward-planar L-drawing the edges incident to $S$ from left to right must be first all edges bending to the left and then all edges bending to the right. This yields $y_i \leq x_{i+1}$, $i=1,\dots,\ell-1$. 

Assume now that the rightmost $S$-$N$-path $P_R$ of $C_i$ has South-type $L$ and that the bend~$b$ of the first edge $e$ of $P_R$ is contained in the horizontal segment of another edge $e'$ of $C_i$. Then the first edge $e''$ of the leftmost path of $C_{i+1}$ cannot bend to the left. Otherwise, $e''$ would have to intersect $e$ or $e'$. Analogously, $e$ cannot bend to the right if the leftmost path of~$C_{i+1}$ has South-type $R$ and the bend on $e''$ is contained in another edge of $C_{i+1}$. Finally consider the case that $e$ bends to the left and $e''$ bends to the right. If the vertical segment of $e$ is longer than the vertical segment of $e''$, then the bend of $e''$ cannot be contained in the horizontal segment of another edge and vice versa. 

Any set of disjoint $S$-$N$-paths in the order from left to right at the South pole must have the North type in this order: $R^{cc},L^{cc},W,E,R^c,L^c$ where there cannot be both, paths of type $L^c$ and paths of type $R^{cc}$.
It follows that $Y_i \leq X_{i+1}$, $i=1,\dots,\ell-1$. 
If 	the leftmost $S$-$N$-path $P_L$ of a component $C_i$ is of type $R^{cc}$ ($E$) and if the bend $b$ in the last edge $e$ of~$P_L$ is not free at the North-pole, i.e., if there is an edge $e'$ in $C_i$ that is not incident to $N$ but contains $b$, then the rightmost $S$-$N$-path of $C_{i+1}$ cannot be of type $R^{cc}$ ($E$). 
Similarly, if 	the rightmost $S$-$N$-path $P_R$ of a component $C_{i+1}$ is of type $L^{c}$ ($W$) and if the bend $b$ in the last edge $e$ of $P_R$ is not free at the North-pole, i.e., if there is an edge $e'$ in $C_{i+1}$ that is not incident to $N$ but contains $b$, then the leftmost $S$-$N$-path of $C_i$ cannot be of type~$L^{c}$~($W$). 
Moreover, since $C$ has no source other than the poles, the rightmost path of type~$R^{c}$ and a leftmost path of type $L^{cc}$ is always free.
Finally, if the last edges of $P_R$ and $P_L$ bend to opposite sides, then the bend on the edge with the shorter vertical segment cannot be contained in the horizontal segment of another edge. 

Assume now that there is an edge $(S,N)$ of North type $W$. Let $C_i$ be the component preceding $(S,N)$ and assume that the North-type of the rightmost path $P_R$ of $C_i$ is of type~$W$. Then the South type of $P_R$ cannot be $R$. Similarly, if $(S,N)$ has North type $E$ and the leftmost path $P_L$ of the component succeeding  $(S,N)$ has also North type $E$, then the South type of $P_L$ must be $R$.

\textbf{Sufficiency:} 
By construction, we ensure that the angle between two incoming edges is 0 or $\pi$ and the angle between an incoming and an outgoing edge is $\pi/2$ or $3\pi/2$. It remains to show the following three conditions \cite{chaplick_etal:gd17}:  (i) The sum of the angles at a vertex is $2\pi$, (ii) the rotation at any inner face is $2 \pi$, (iii) and the \emph{bend-or-end property} is fulfilled, i.e., there is an assignment that assigns each edge to one of its end vertices with the following property. Let~$e_1$ and $e_2$ be two incident edges that are consecutive in the cyclic order and attached to the same side of the common end vertex $v$. Let $f$ be the face/angle between $e_1$ and $e_2$. Then at least one among $e_1$ and $e_2$ is assigned to $v$ and its bend leaves a concave angle in $f$.

We first show that the bend-or-end property is fulfilled. To this end, we have to define an assignment of edges to end vertices:
Consider upward-planar L-drawings of the components respecting the given types. Consider an edge $e$ whose bend is contained in another edge $e'$ of the same component.  Let $v$ be the common end vertex of $e$ and $e'$. Assign the bend on $e$ to $v$. 
Now consider two consecutive components $C_i$ and $C_{i+1}$, let $P_R$ be the rightmost $S$-$N$-path of $C_i$ and let $P_L$ be the leftmost $S$-$N$-path of $C_{i+1}$. If $P_R$ and $P_L$ both have South-type $L$, then $C_i$ must be right-free. Thus, we can assign the first edge of $P_R$ to $S$ without violating any previous assignments.
Similarly, if $P_R$ and $P_L$ both have South-type $R$, then we can assign the first edge of $P_L$ to $S$ without violating any previous assignments. If $P_R$ has South type $L$ and $P_L$ has South type $R$, then we assign the first edge of $P_R$ or $P_L$ to $S$ depending on whether $C_i$ is right-free or $P_L$ is left-free at the South pole.

Consider now the last edge $e_R$ and $e_L$ of $P_R$ and $P_L$, respectively. If $e_R$ and $e_L$ are both incoming edges of $N$, the assignment is analogous to the South pole: If $P_R$ and $P_L$ have both North-type $L^c$ or both $L^{cc}$, we assign $e_L$ to $N$. If  $P_R$ and $P_L$ have both North-type $R^c$ or both $R^{cc}$, we assign $e_R$ to $N$. If $P_R$ has North type  $L^c$ or $L^{cc}$ and $P_L$ has North type $R^c$ or $R^{cc}$, or vice versa, then we assign $e_R$ or $e_L$ to $N$ depending on whether $C_i$ is right-free or~$C_{i+1}$ is left-free at the North pole. 

Consider now the case that $P_R$ and $P_L$ are both of type $W$. Then $C_{i+1}$ is left-free. Moreover, if $C_{i+1}$ is a single edge, then $C_{i}$ cannot have South-type $R$, thus, we have not assigned $e_L$ to $S$. Hence, we can assign $e_L$ to $N$. Finally, if  $P_R$ and $P_L$ are both of type $E$ we can assign $e_R$ to $N$.  We assign all edges that have not been assigned yet to an arbitrary end vertex. This assignment fulfills the bend-or-end property.

By the construction, 
the angular sum around a vertex is always $2 \pi$. It remains to show that the rotation of all inner faces is $2 \pi$, which implies that the rotation of the outer face is~$-2 \pi$. If $f$ is an inner face of a component, then its rotation is $2 \pi$ by induction. So consider the face $f_i$ between the components $C_i$ and $C_{i+1}$. The boundary of $f_i$ in counter-clockwise direction is combined by the reversed rightmost path $P_R$ of $C_i$ and the leftmost path $P_L$ of~$C_{i+1}$.
%
Let $P_R'$ be the concatenation of $P_R$ and a short vertical segment $s$ emanating under~$N$ and let $P_L'$ be the concatenation of $P_L$ and $s$. Consider the face $f'_i$ bounded by $P_R'$ and the reversion of $P_L'$. Then $f_i$ has the same rotation as $f'_i$. Let rot$(P)$ be the rotation of a path. The rotation of $f'_i$ at $S$ is $2\cdot \pi/2$. The rotation of $f'_i$ at the bottommost end point~$p$ of $s$ is $2 \cdot (-\pi/2)$ or $2 \cdot \pi/2$, depending on whether the $2 \pi$ angle between $P_R'$ and $P_L'$ at~$p$ is in the interior of $f_i'$ or not, i.e.\ whether $P_L$ and $P_R$ pass to different sides of $N$ or not. Thus, \mbox{rot$(f_i) = $ rot$(P_R') \pm \pi  -$ rot$(P_L') + \pi$}. Since rot$(P_R')=-$rot$(P_L')$ if and only if $P_L$ and~$P_R$ pass to different sides of $N$, it follows that the rotation of $f_i$ is always $2 \pi$.
\end{proof}

\parallelregex*

\begin{proof}
%
Since we are allowed to freely order our selection of values (one for each element of~$\mathcal{C}$), the order of entries in $\rho$ does not matter.
For each $\tau \in \mathcal{T}$, let $\aleph(\tau)$ denote the number of times $\tau$ occurs in $\rho$  without the $\star$ operator; note, $\sum_{\tau \in \mathcal{T}} \aleph(\tau) \leq |\rho|$ which is constant.
Using this notation, the task at hand becomes a kind of matching problem where we simply need to make a selection of values so that, for each type $\tau$ in $\mathcal{T}$, either:
\begin{itemize}
\item $\tau$ must be selected at least $\aleph(\tau)$ times and is called \emph{unbounded}, if $\tau^\star$ occurs in $\rho$; or  
\item $\tau$ must be selected exactly $\aleph(\tau)$ times and is called \emph{exact}, otherwise.
\end{itemize}

We now provide an $\mathcal O(|\mathcal{C}|)$ time algorithm to solve this problem.
First, we partition $\mathcal{T}$ into two parts based on the number of items in $\mathcal{C}$ that can be selected with a given value: 
\begin{itemize}
\item $\mathcal{T}_{few} = \{ \tau \in \mathcal{T} : $ at most $2|\rho|$ items $C$ in $\mathcal{C}$ have $\tau \in \mathcal T(C)$\}, and 
\item $\mathcal{T}_{many} = \{ \tau \in \mathcal{T} : $ more than $2|\rho|$ items $C$ in $\mathcal{C}$ have $\tau \in \mathcal T(C)$\}.
\end{itemize}
Observe that the number of items in $\mathcal{C}$ that attain the types in $\mathcal{T}_{few}$ is at most $|\mathcal{T}| \cdot 2|\rho|$; namely, this is only constantly many. 
Let $\mathcal{C}_{few}$ be this subset, and let $\mathcal{C}_{many}$ be $\mathcal{C} \setminus \mathcal{C}_{few}$. 
Now, we enumerate all possible selections of types for the set~$\mathcal{C}_{few}$ as to precisely satisfy the~$\aleph(\tau)$ requirement for each $\tau \in \mathcal{T}_{few}$.  If no such a selection exists, then we conclude that the test is negative. Otherwise, 
let $\alpha$ be any such a partial selection. 
This may result in some elements of $\mathcal{C}_{few}$ with no selected type so far; let $\mathcal{C}_{few}'$ be this unassigned subset of $\mathcal{C}_{few}$. 

At this point we still need to meet the $\aleph(\tau)$  requirement for each $\tau \in \mathcal{T}_{many}$ (without further selecting any exact $\tau$ from $\mathcal{C}_{few}$). 
We will do this in three steps. 
First, for each $C \in \mathcal{C}_{many} \cup \mathcal{C}_{few}'$, if there is an unbounded type in $\mathcal T(C)$, we select an arbitrary one for~$C$ (this may be changed later). 
For the second step, let $\mathcal{C}^*$ be the subset of $\mathcal{C}$ for which no type has been selected. 
Note that, for each $C \in \mathcal{C}^*$, the set $\mathcal T(C)$ only contains exact types.
Thus, $|\mathcal{C}^*| \leq |\rho|$, otherwise we may reject the partial selection $\alpha$, and go to the next one.
We now check if there is a selection of types for $\mathcal{C}^*$ that does not exceed any $\aleph(\tau)$, this can be done in constant time and if no such selection exists we can safely reject the partial selection $\alpha$, and go to the next one.
Finally, in the third step, we have now selected a type for every item in $\mathcal{C}$. 
Moreover, in this selection, due to how we have chosen $\mathcal{T}_{many}$, we can now greedily reallocate (as needed) the items assigned to unbounded types in the first step as to fulfill all of the $\aleph(\tau)$ requirements for all types. 
\end{proof}

\end{document}